\myurl\url{foo%.com}
\usepackage{hyperref}


\newtheorem{definition}{Definition}[section]
\newtheorem{theorem}{Theorem}[section]
\newtheorem{lemma}{Lemma}[section]
\newtheorem{conjecture}{Conjecture}[section]
\newtheorem{proposition}{Proposition}[section]
\newtheorem{informal-proposition}{Informal proposition}[section]
\newtheorem{problem}{Problem}[section]

\newtheorem{unknown-proposition}{unknown-Proposition}[section]

\newtheorem{corollary}[theorem]{Corollary}

\newtheorem{example}{Example}[section]


%
%

\newcommand{\rat}{\mathbb{Q}}
\newcommand{\Int}{\mathbb{Z}}

\begin{document}

\title{Rings with common division, common meadows and their conditional equational theories\\
\author{Jan A Bergstra \\
Informatics Institute, University of Amsterdam,\\
 Science Park 904, 1098 XH, Amsterdam, 
The Netherlands\\
j.a.bergstra@uva.nl\\ \and \\
John V Tucker\\
Department of Computer Science,\\ Swansea University, Bay Campus, \\Fabian Way, Swansea, SA1 8EN, United Kingdom\\j.v.tucker@swansea.ac.uk }}

\maketitle

\begin{abstract}
\noindent We examine the consequences of having a total division operation $\frac{x}{y}$ on commutative rings.  We consider two forms of binary division, one derived from a  unary inverse, the other defined directly as a general operation; each are made total by setting $1/0$ equal to an error value $\bot$, which is added to the ring. Such totalised divisions we call \textit{common divisions}. In a field the two forms are equivalent and we have a finite equational axiomatisation $E$ that is complete for the equational theory of fields equipped with common division, called \textit{common meadows}. These equational axioms $E$ turn out to be true of commutative rings with common division but only when defined via inverses. We explore these axioms $E$ and their role in seeking a completeness theorem for the conditional equational theory of common meadows. We prove they are complete for the conditional equational theory of commutative rings with inverse based common division. By adding a new proof rule, we can prove a completeness theorem for the conditional equational theory of common meadows.
Although, the equational axioms $E$ fail with common division defined directly, we observe that the direct division does satisfies the equations in $E$ under a new congruence for partial terms called eager equality. 
\end{abstract}

\paragraph{Keywords \& phrases:}
commutative rings, division operators, meadows, common division, common meadows, equations, conditional equations, varieties, quasivarieties


\section{Introduction}

Arithmetical structures are many and varied. Classically, we think of the integers $\mathbb{Z}$, rationals $\mathbb{Q}$, reals $\mathbb{R}$, and complex numbers $\mathbb{C}$, as well as modulo arithmetics $\mathbb{Z}_n$, $p$-adics and derived systems. To these should be added a number of arithmetic structures developed for computer computation such as fixed point, floating point and interval arithmetics. Whilst the algebra of the classical structures is very well understood through the axiomatic theories of rings and fields, the algebra of the computer arithmetics is barely developed. This is a problem as the algebraic properties of these structures are fundamental in specifying, designing and reasoning about computer programs, e.g., in predicting precision, anticipating and finding errors, and subsequent validation and verification. 

In a series of papers, we have been investigating a few key semantic features of computer arithmetics and exploring their algebraic properties. Our approach is to start from field theory, especially the field of rational numbers, and impose algebraic conditions on fields that represent some of the semantic constraints placed on computer arithmetics, such as bounds, finiteness, and the need for additional operations. To develop a theory of abstract arithmetic data types for computer programming, we apply methods that have been developed from universal algebra and logic.

So, the axiomatic concept of a field has a number of shortcomings as a specification of an arithmetical structure for computing. First, a field is a commutative ring with 1 in which all non-zero elements are invertible. Thus, strictly speaking, division is not an operator in a field.   Thus, we add division $x \div y$ to the primary field operations of $x + y, - x, x \cdot y$ to make an algebra we call a \textit{meadow} \cite{BergstraT2007}. Secondly, in computing, applying an operator \textit{must} return a value for \textit{all} arguments, i.e., an operator must be a total function. Thus, for our meadows we must find a way of defining a value for $x \div 0$.


\subsection{Totalising division}

We have studied several ways of making division total in a meadow -- see Section \ref{concluding_remarks} for some background. One way stands out: for all $x$, 
$$ \ x \div 0 = \bot$$
where $\bot$ is a new element added to the meadow and having the property of \textit{absorption}: if $\bot$ is an argument to any operation then it returns the value $\bot$. We think of $\bot$ as an error value.  Such an algebra we call a \textit{common meadow}, which has the form of an enlargement of a field $F$:
$$(F \cup \{ \bot \} \ | \ 0, 1, \bot, x+y, -x, x \cdot y, x \div y).$$ 
A common meadow of rational numbers can serve as an idealised model of the simple arithmetic of the pocket calculator. Let $\mathsf{CM}$ be the class of common meadows.

Common meadows were introduced in \cite{BergstraP2015}. The introduction of $\bot$ into a field invalidates some key algebraic laws such as
\begin{center}
$x + (- x) = 0$ and $x \cdot 0 = 0$ because $x + (- \bot) = x + \bot = \bot$ and $x \cdot \bot  = \bot$.
\end{center}
Subsequently, we have found sets of equations to axiomatise calculations in common meadows, among which is a set $E_{\mathsf{ftc-cm}}$ studied in \cite{BergstraT2022CJ,BergstraT2023arxiv}; the subscript $\mathsf{ftc-cm}$ stands for \textit{fracterm calculus for common meadows} and the equations are listed in Table \ref{EwcrBot} and Table \ref{FTCcm}  in Section \ref{the_equational_axiom_set} below. These equations have proved to be a useful tool for understanding the effects of totalising division, intuitively and formally. The axioms are close to classical axioms and entail an important property for working with fractions, called \textit{flattening}.  From universal algebra and logic we know something about the class of models of $E_{\mathsf{ftc-cm}}$.  

\subsection{Axiomatising common division}
In this paper, we focus on the equational axioms in  $E_{\mathsf{ftc-cm}}$ and, in particular, on the scope of their semantics which extend beyond the common meadows $\mathsf{CM}$. First, we examine the consequences of having a total division operation $x \div y$ on commutative rings rather than fields.  We consider two forms of division: the first is derived indirectly from a unary inverse operator, which we will refer to as \textit{inverse based common division};  later, the second is defined for integral domains directly as a binary operation, which we will refer to as \textit{direct common division}.  Both inverse based common division and directly defined common division are made total by setting $x/0$ equal to the error value $\bot$.  In a field the two forms of division are equivalent and so we have the option of using either; this we have done in developing some common meadow theory for which the axiomatistion $E_{\mathsf{ftc-cm}}$ was designed.

Now, as pointed out in \cite{DiasD2023}, in addition to common meadows, the class $\mathsf{CR_{\div, \bot}}$ of commutative rings with $1$ equipped with an inverse based common division operator \textit{also} satisfy $E_{\mathsf{ftc-cm}}$. We axiomatise the class of common meadows relative to the commutative rings with inverse common division using this conditional equation, \textit{Additional Value Law},
$$\mathsf{AVL}: \ \  \frac{1}{x} = \bot \to 0 \cdot x = x.$$

From \cite{BergstraT2023arxiv}, we know that $E_{\mathsf{ftc-cm}}$ is a finite equational base for the equational theory of common meadows from which it follows that $E_{\mathsf{ftc-cm}}$ is a finite equational base for the equational theory of commutative rings with common division.  Here we prove the theorem (\ref{completeness_for_rings}):

\begin{theorem}
$E_{\mathsf{ftc-cm}}$ provides a complete equational axiomatisation of the conditional equational theory of the class of commutative rings with inverse based common division.
\end{theorem}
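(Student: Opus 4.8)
The plan is to separate the argument into a soft logical reduction and a structural representation theorem for the models of $E_{\mathsf{ftc-cm}}$. First I would appeal to the soundness and completeness of conditional equational (quasi-equational) logic: a conditional equation $C$ is derivable from a set of equations $E$ if and only if $C$ holds in every model of $E$. Soundness of $E_{\mathsf{ftc-cm}}$ over the commutative rings with inverse based common division is already available, since these algebras satisfy $E_{\mathsf{ftc-cm}}$ (following \cite{DiasD2023}); so only completeness is at issue. Writing $\mathrm{Mod}(E_{\mathsf{ftc-cm}})$ for the variety of models and $\mathcal{Q}(\mathsf{CR_{\div,\bot}})$ for the quasivariety generated by $\mathsf{CR_{\div,\bot}}$, and using that a class and the quasivariety it generates have the same conditional equational theory, the theorem reduces to the single inclusion $\mathrm{Mod}(E_{\mathsf{ftc-cm}}) \subseteq \mathcal{Q}(\mathsf{CR_{\div,\bot}})$; the reverse inclusion is exactly soundness together with the fact that a variety is a quasivariety.

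Since \cite{BergstraT2023arxiv} gives $\mathrm{Mod}(E_{\mathsf{ftc-cm}}) = \mathbf{HSP}(\mathsf{CM})$, the task is to show that every homomorphic image of a product of common meadows lies in $\mathbf{ISPP_{U}}(\mathsf{CR_{\div,\bot}})$. The guiding intuition is that the gap between this variety and the enlargements $R \cup \{\bot\}$ of commutative rings is accounted for precisely by the extra generality of allowing \emph{all} commutative rings rather than only fields: the conditional law $\mathsf{AVL}$, which separates common meadows from rings, fails in $\mathsf{CR_{\div,\bot}}$, so the conditional theory of $\mathsf{CR_{\div,\bot}}$ is ``small enough'' to be captured by the plain equations $E_{\mathsf{ftc-cm}}$. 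Concretely, I would analyse an arbitrary $A \models E_{\mathsf{ftc-cm}}$ through its \emph{finite part} $R_A = \{a \in A : 0 \cdot a = 0\}$. Using distributivity, $1 \cdot x = x$ and $-0 = 0$ one checks that $R_A$ is closed under $+,-,\cdot$ and contains $0,1$, so it is a commutative ring with $1$ on which the axioms of $E_{\mathsf{ftc-cm}}$ collapse to the classical ring axioms; and from $u + 0\cdot u = (1+0)\cdot u = u$ together with the absorption law $\bot + u = \bot$ one obtains that $0 \cdot u = \bot$ forces $u = \bot$.

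The decisive structural fact to establish is that, up to the quasivariety operations, the only obstruction to $A$ being a single ring-enlargement $R_A \cup \{\bot\}$ is the presence of \emph{$\bot$-support idempotents}, that is elements $e$ with $0 \cdot e = e$ and $e \notin \{0,\bot\}$ (the ``partially defined'' elements visible in products of common meadows). I would show these supports form a distributive semilattice under the induced operations, represent $A$ as a subdirect product (ideally a Boolean product) of its quotients in which this semilattice is collapsed to $\{0,\bot\}$, and identify each such quotient, via its finite part and the reading $a^{-1} = 1/a$ of the division axioms, as a member of $\mathsf{CR_{\div,\bot}}$. This places $A$ in $\mathbf{ISPP_{U}}(\mathsf{CR_{\div,\bot}}) = \mathcal{Q}(\mathsf{CR_{\div,\bot}})$; the degenerate case $\bot = 0$ yields the trivial algebra, which lies in every quasivariety. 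Combined with the logical reduction of the first paragraph, this yields the completeness theorem.

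The hard part will be the treatment of the $\bot$-support idempotents. Because $\bot$ is absorptive, any term built from a support idempotent $e$ remains $\bot$ on the support of $e$, so the ``complementary'' idempotent needed for a clean factor-congruence decomposition is \emph{not} term-definable from $e$; one cannot simply split $A$ as a direct product by naming a complement. The real work is therefore to produce enough congruences whose quotients are genuine ring-enlargements and which jointly separate the points of $A$ --- equivalently, to establish the Boolean-product (sheaf) representation of $A$ over the spectrum of its support semilattice, possibly passing through ultraproducts to realise the stalks. By contrast, the verification that $R_A$ is a commutative ring and that the division of $A$ restricts to the inverse based common division of $R_A$ is routine equational bookkeeping with $E_{\mathsf{ftc-cm}}$.
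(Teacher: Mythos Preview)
Your logical reduction is correct: completeness amounts to showing that every GCM $A\models E_{\mathsf{ftc-cm}}$ lies in the quasivariety generated by $\mathsf{CR}_{\div,\bot}$, equivalently that any two distinct elements of $A$ can be separated by a homomorphism into some member of $\mathsf{CR}_{\div,\bot}$. What you leave open --- and correctly flag as the hard part --- is exactly the content of the theorem, and your proposed route (Boolean product over the spectrum of the support semilattice, possibly with ultraproducts) is heavier than what is needed.

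The paper's proof is considerably more elementary and direct. It does not attempt a global sheaf or subdirect representation of $A$. Instead it works refutation by refutation: given a conditional equation $\psi\equiv \bigwedge t_i=r_i\to t=r$ refuted in a GCM $M$ on a substitution $\vec a$, it writes down a \emph{single} homomorphism $h$ from $M$ to a member of $\mathsf{CR}_{\div,\bot}$ under which the same substitution still refutes $\psi$. The homomorphism is always of the form $h=\rho\circ\phi_{0\cdot b}$, where $\phi_{e}(x)=x+e$ is the ``shift by a support idempotent'' map (your $\bot$-support idempotents), and $\rho$ is the retraction that is the identity on the ring part $M_0=\{a:0\cdot a=0\}$ and sends everything else to $\bot$. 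The only work is a short three-case analysis on the values $b=t(\vec a)$ and $c=r(\vec a)$: (i) one of them is $\bot$; (ii) both are non-$\bot$ with $0\cdot b=0\cdot c$; (iii) both are non-$\bot$ with $0\cdot b\neq 0\cdot c$. In each case one checks that $\rho\circ\phi_{0\cdot b}$ keeps $b$ and $c$ distinct while, being a homomorphism, it preserves all the antecedent equalities. No ultraproducts, no Boolean products, no spectrum.

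So your plan is not wrong, but it defers precisely the substantive step, and the machinery you reach for is unnecessary. If you want to salvage your structural formulation, note that the paper's argument already yields the subdirect embedding you are after: the family of kernels $\ker(\rho\circ\phi_{0\cdot b})$, as $b$ ranges over the non-$\bot$ elements of $M$, intersects trivially by exactly the three-case analysis above. That is an $\mathbf{ISP}$ statement, which is more than enough for quasivariety membership; Boolean products and $\mathbf{P_U}$ play no role.
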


From  \cite{BergstraT2022CJ}, we know that $E_{\mathsf{ftc-cm}}$ is  \textit{not} a finite base for the conditional equational theory of common meadows. 
Here we show that $E_{\mathsf{ftc-cm}} + \mathsf{AVL}$ does not axiomatise the conditional equational theory of common meadows. However, on adding a simple proof rule  $\mathsf{R_{cm}}$ to conditional equation logic to define proof system $\vdash_{R_{cm}}$ we can prove completeness:

\begin{theorem}  
$\mathsf{CM} \models e_1 \wedge e_2 \ldots \wedge e_n   \to e$ if, and only if, 
$E_{\mathsf{ftc-cm}}+ \mathsf{AVL} \vdash_{\mathsf{R_{cm}}} E  \to e_1 \wedge e_2 \ldots \wedge e_n   \to e$.
\end{theorem}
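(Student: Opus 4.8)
The statement combines soundness and completeness, so the plan is to treat the two implications separately. For the soundness direction (right-to-left), I would first invoke the facts already recorded above: $\mathsf{CM} \models E_{\mathsf{ftc-cm}}$ and $\mathsf{CM} \models \mathsf{AVL}$. Since the ordinary rules of conditional equational logic preserve truth in any fixed algebra, the single new obligation is that the added rule $\mathsf{R_{cm}}$ be sound for the whole class $\mathsf{CM}$. A common meadow is a field with an adjoined absorptive error $\bot$, so each of its elements is exactly one of $0$, $\bot$, or an invertible element; I would check that $\mathsf{R_{cm}}$ merely internalises this trichotomy, so that whenever its premises hold throughout $\mathsf{CM}$ its conclusion does too. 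Soundness of $\vdash_{\mathsf{R_{cm}}}$ then follows by induction on derivations.

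The completeness direction is the substantive half. The plan is to reduce common-meadow entailment to the ring-level completeness already in hand, namely the completeness of $E_{\mathsf{ftc-cm}}$ for the conditional equational theory of commutative rings with inverse based common division. Given $\mathsf{CM} \models e_1 \wedge \cdots \wedge e_n \to e$, I would apply $\mathsf{R_{cm}}$ to carry out a finite case analysis over the finitely many subterms $t$ that occur in $e_1, \dots, e_n, e$ (together with their denominators): in each branch every such $t$ is declared to be $0$, to be $\bot$, or to be invertible, with $\mathsf{AVL}$ providing the bridge between ``$1/t = \bot$'' and ``$0 \cdot t = t$''. Within a fixed branch the case assumptions rule out proper nonzero non-units, so the pending goal is a conditional equation valid in every commutative ring with inverse based common division that meets those assumptions; by the ring completeness theorem it is therefore derivable from $E_{\mathsf{ftc-cm}}$. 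Finally $\mathsf{R_{cm}}$ recombines the branch derivations into the target conditional equation $E \to (e_1 \wedge \cdots \wedge e_n \to e)$.

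I expect the main obstacle to lie in controlling this case analysis. Because a common meadow is genuinely a field and not an arbitrary ring, the dichotomy ``$x = 0$ or $x$ is invertible'' is an irreducibly disjunctive fact that no conditional equation can capture; this is exactly why $E_{\mathsf{ftc-cm}} + \mathsf{AVL}$ on its own is incomplete and why $\mathsf{R_{cm}}$ is indispensable. The delicate point is to show that splitting on only the finitely many subterms syntactically present in the goal is enough, i.e. that no unbounded cascade of further splits on auxiliary terms is forced before a branch can be discharged by the ring completeness theorem. I would secure this through a flattening / normal-form argument for fracterms, rewriting every term of the goal as a single quotient of ring polynomials in the split variables so that the set of denominators one must case-split on is fixed in advance. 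As a sanity check on necessity I would re-examine the paper's earlier observation that $E_{\mathsf{ftc-cm}} + \mathsf{AVL}$ has models outside $\mathsf{CM}$ --- for instance products of common meadows carrying ``partial error'' elements such as $(1,\bot)$ --- which validate conditional equations that fail in $\mathsf{CM}$, confirming that the extra rule genuinely does work that the axioms cannot.
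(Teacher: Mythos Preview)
Your plan misreads what $\mathsf{R_{cm}}$ does. The rule is not a branch-split / branch-recombine device; it has the fixed shape
\[
\frac{\vdash E \wedge t = \bot \to r = \bot \qquad \vdash E \wedge r = \bot \to t = \bot}{\vdash E \to 0 \cdot t = 0 \cdot r},
\]
so a single application turns two ``$\bot$-preservation'' implications into the equation $0\cdot t = 0\cdot r$. There is no mechanism here for a trichotomy on arbitrary subterms, and no way to ``recombine branches''. Consequently your completeness strategy of case-splitting each subterm as $0$, $\bot$, or invertible and discharging branches via the ring-level completeness theorem cannot be implemented with $\mathsf{R_{cm}}$ as given. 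Your own worry about the cascade of splits is well placed, but the paper does not resolve it by controlling a case analysis; it avoids case analysis altogether.

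The paper's argument is model-theoretic, not syntactic. It first proves two transfer lemmas by a homomorphism method on generalised common meadows satisfying $\mathsf{AVL}$: (i) if $\models_{\mathsf{cm}} E \to t = \bot$ then already $E_{\mathsf{ftc-cm}}+\mathsf{AVL} \vdash E \to t = \bot$, and (ii) if $\models_{\mathsf{cm}} E \wedge 0\cdot t = 0\cdot r \to t = r$ then $E_{\mathsf{ftc-cm}}+\mathsf{AVL} \vdash E \wedge 0\cdot t = 0\cdot r \to t = r$. These are established by taking a putative $\mathsf{GCM}+\mathsf{AVL}$ countermodel and pushing it along maps of the form $\phi_{0\cdot a}(x)=x+0\cdot a$, then along a (possibly transfinite) chain of such maps, down to an actual common meadow countermodel. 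Given $\models_{\mathsf{cm}} E \to t = r$, lemma (i) yields $\vdash E \wedge t=\bot \to r=\bot$ and the symmetric implication; a \emph{single} use of $\mathsf{R_{cm}}$ then gives $\vdash E \to 0\cdot t = 0\cdot r$; lemma (ii) gives $\vdash E \wedge 0\cdot t = 0\cdot r \to t=r$; chaining the two finishes. The ring-level completeness theorem (Theorem~\ref{completeness_for_rings}) plays no role in this proof.
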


This leaves us with the open question:

\begin{problem} 
\label{MainProblem}
Is the conditional equational theory of the class $\mathsf{CM}$ of common meadows axiomatisable with finitely many conditional equations?
\end{problem}

We conjecture the answer: no.

Finally, in contrast to the inverse defined division,  we note that the directly defined common division \textit{does not} satisfy all the axioms of $E_{\mathsf{ftc-cm}}$ over all rings. Surprisingly, this latter common division \textit{does} satisfy the equations in $E_{\mathsf{ftc-cm}}$ under a new equality relation for partial algebras called \textit{eager equality} \cite{BergstraT2022_ToCL,BergstraT2023_CJ}. 


\subsection{Structure of the paper}

In Section \ref{algebras_and_equations}, we describe the totalisation of partial operations by the addition of $\bot$ in the general case of an arbitrary algebraic structure. In Section \ref{division}, we summarise the situation for rings and fields, introducing the division operators based on inverse, and the common meadow axiomatisation $E_{\mathsf{ftc-cm}}$ and its model class. In Section \ref{rings_with_bot}, we prove a number of basic facts about commutative rings with inverse based common division, including results about their relationship with common meadows. In Section \ref{homomorphisms} we examine some homomorphisms. After  some examples in Section \ref{conditional_equations}, in the  Section \ref{customising_logic}  we customise the logic to prove the completeness theorem for common meadows.  In Section \ref{integral_domains}, we look at integral domains and at this point, we introduce the second form of division. In Section \ref{Rings_with _division _and_eager_equality}, we introduce eager equality and observe the validity of $E_{\mathsf{ftc-cm}}$ for the second form of division. Lastly, in Section \ref{concluding_remarks}, we sketch our investigation into the various ways of totalising division, which is the background to this paper.

We need only the basics of ring and field theory \cite{vanderWaerden1970}, universal algebra \cite{Mal'tsev1973,BurrisSankapannavar1981,MeinkeTucker1992}, and abstract data type theory \cite{EhrichWL1997}.


\section{Algebras and equations with an absorptive element}\label{algebras_and_equations}

The idea of making partial operations total with an absorptive element is quite general. We will summarise the process for an algebraic structure as discussed in \cite{BergstraT2022TM}.\footnote{All algebras are assumed to be single-sorted in this paper.} The basic algebra we need is to be found in the classic works of Birkhoff \cite{Birkhoff1935} and, later, Mal'cev \cite{Mal'tsev1973}.


\subsection{Algebras with absorption}\label{algebras_with_absorption} 

Let $A$ be an algebra with signature $\Sigma$, which may have partial operations.  The definition of an absorptive element for  a general algebraic structure is as follows: 

\begin{definition}\label{adding_bottom}  
A $\Sigma$-algebra $A$ has an {\em absorptive element} $a \in A$ if for each operation $f$ of $\Sigma$, if one of its arguments equals $a$ then so does its value.
\end{definition}

The following partial result is relevant for our arithmetical data types as they are equipped with at least one binary operation (Proposition 2.1 of~\cite{BergstraT2022TM}).

\begin{lemma}\label{uniqueness} 
If there is at least one function $f$ in $\Sigma$ with two or more arguments on  $\Sigma$-algebra $A$ then an absorptive element in  the algebra $A$  is unique.
Let $\bot$ be a standard notation for an absorptive element.
\end{lemma}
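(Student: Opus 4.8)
The plan is to derive the conclusion directly from the assumption that $A$ has two absorptive elements, exploiting the single guaranteed operation $f$ of arity $n \geq 2$ to force them to coincide. Suppose $a, b \in A$ are both absorptive; the goal is to show $a = b$, and the idea is to evaluate $f$ on one argument tuple in which $a$ and $b$ occupy two \emph{different} argument positions simultaneously, so that both absorption properties apply to the \emph{same} value.

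Concretely, I would fix the input tuple $(a, b, a, \ldots, a)$, placing $a$ in the first coordinate, $b$ in the second coordinate, and filling the remaining $n - 2$ coordinates (if any) with $a$; any element of $A$ would do here, and at least $a$ is available since the algebra is nonempty. Because $a$ occurs among the arguments, absorptivity of $a$ gives that $f(a, b, a, \ldots, a)$ is defined and equals $a$. Because $b$ also occurs among the arguments of the very same application, absorptivity of $b$ gives that this same value equals $b$. Chaining the two equalities yields $a = f(a, b, a, \ldots, a) = b$, which is the desired uniqueness.

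The only point that requires care is partiality, since $f$ may be a partial operation: the argument must guarantee that $f(a, b, a, \ldots, a)$ actually \emph{has} a value before it can be compared against both $a$ and $b$. This is precisely what Definition \ref{adding_bottom} supplies --- the clause ``if one of its arguments equals $a$ then so does its value'' asserts that the value equals the defined element $a$, and hence is defined. So definedness is not a separate obstacle but is built into the notion of absorption, and the arity hypothesis $n \geq 2$ is exactly what lets $a$ and $b$ sit together in a single application. For a purely unary (or constant-only) signature no such tuple exists, which is why the hypothesis cannot be dropped.
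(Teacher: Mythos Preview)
Your argument is correct and is exactly the standard one: apply the $n$-ary operation $f$ to a tuple containing both $a$ and $b$, and read off $a = f(a,b,a,\ldots,a) = b$ from the two absorption properties. The paper does not actually prove this lemma in-line; it simply records it as Proposition~2.1 of~\cite{BergstraT2022TM}, so there is no alternative approach here to compare against. Your care about partiality is appropriate given that the paper allows partial algebras, and your reading of Definition~\ref{adding_bottom} as guaranteeing definedness is the intended one.
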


Our use of $\bot$ in an algebra will imply that it is absorptive with respect to all the operations of the algebra.

\begin{definition} 
Let $A$ be a $\Sigma$-algebra and assume $\bot \not \in A$. Then the $\bot$-{\em enlargement} of $A$ is an algebra $A_\bot$ with signature $\Sigma_\bot = \Sigma \cup \{\bot \}$ and carrier set  $A_\bot = A \cup \{\bot\}$, which is an enlargement of $A$ in which  $\bot$ is absorptive.
\end{definition}

\begin{definition}
As a notation we will use $A_\bot$ for the construction of the $\bot$-extension of $A$. For any class $\mathsf{K}$ of $\Sigma$-algebras, define 
$$\mathsf{K}_\bot = \{ A_\bot \  |  \ A \in \mathsf{K}\}.$$
\end{definition}

Note we have only added $\bot$ to an algebra; we have not yet employed it to make partial operations total.


\subsection{Equations and absorption}\label{equations_and_absorption} 

First, we consider the definability of the classes. Let $\Sigma$ be a signature and $T(\Sigma, X)$ be the algebra of all terms over $\Sigma$ containing variables from $X$. The value of a term $t \in T(\Sigma, X)$ on elements $a$ from a $\Sigma$ algebra $A$ is denoted $\llbracket t \rrbracket(a)$ or $t(a)$. If $X = \emptyset$ we write $T(\Sigma)$ for the algebra  of all closed terms.

Let $Eqn(\Sigma)$ and $CEqn(\Sigma)$ be the set of all equations $t = r$ and conditional equations $t_1 = r_1 \wedge \ldots \wedge t_n = r_n \to t=r$ over $\Sigma$, respectively. Let $\mathsf{Alg}(\Sigma, E)$ be the class of all $\Sigma$ algebras satisfying the axioms of $E \subset Eqn(\Sigma)$ or of $E \subset CEqn(\Sigma)$. 
Common terms for these types of axiomatisable classes are \textit{variety} and \textit{quasivariety}, based on equivalent characterisations of these classes in terms of algebraic constructions \cite{MeinkeTucker1992}. 

The class $\mathsf{Alg}(\Sigma, E)$ has an initial algebra $I(\Sigma, E)$ that can be constructed as a quotient  $T(\Sigma)/\equiv$ by:
$$T(\Sigma)/\equiv \  \models t = t'   \iff E \vdash  t = t',$$
wherein $\vdash$ is proof via equational logic or conditional equational logic. By equational logic we mean manipulating equations on the basis of reflexivity, symmetry, transitivity of equality, and the substitution of equal terms into equations. The less well known conditional equational logic is more involved requring rules to access the conditions \cite{MeinkeTucker1992}. However, from Birkhoff and Mal'cev, we can also rely on first order derivations: 

\begin{lemma}
Equations and conditional equations are provable from $E$ by equational and conditional equational logic respectively if, and only if, they are provable from $E$ in first order logic. 
\end{lemma}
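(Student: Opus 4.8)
The plan is to establish both directions by reducing the question to a comparison of semantic consequence relations, rather than by transforming proofs syntactically. I would first observe that each primitive inference rule of equational logic (reflexivity, symmetry, transitivity, and the substitution of equals into equals) and of conditional equational logic is itself a derivable rule of first-order logic with equality; hence any derivation in the (conditional) equational calculus can be read line-by-line as a first-order derivation from $E$. By a routine induction on the length of the derivation this yields the ``only if'' direction: provability by (conditional) equational logic implies first-order provability. This half is the easy one and requires no deep input.

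For the converse, the key observation is that an equation $t = r$ and a conditional equation $\bigwedge_i t_i = r_i \to t = r$ are, under universal closure, ordinary first-order sentences, and that the first-order models of $E$ are exactly the algebras in $\mathsf{Alg}(\Sigma, E)$, precisely because $E$ itself consists of such sentences. I would then invoke three completeness results, all available as background: G\"odel completeness for first-order logic, Birkhoff's theorem that equational logic is complete for the equational consequences of $E$ over $\mathsf{Alg}(\Sigma, E)$, and Mal'cev's theorem that conditional equational logic is complete for the conditional-equational consequences of $E$ over the same class. Since these three calculi are each sound and complete with respect to one and the same model class, their provability relations coincide on the (conditional) equational fragment; chaining the equivalences
$$E \vdash_{\mathrm{eq}} \phi \iff \mathsf{Alg}(\Sigma, E) \models \phi \iff \mathrm{Mod}(E) \models \phi \iff E \vdash_{\mathrm{FOL}} \phi$$
(and the analogous chain for conditional equational logic) gives the ``if'' direction.

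The real work is carried by the completeness theorems, so the main obstacle is not to reprove them but to ensure the model classes genuinely match: I must verify that satisfaction of an equation in the sense of the equational calculus agrees with satisfaction of its universal closure as a first-order sentence, and that no nonequational sentence is admitted into $\mathrm{Mod}(E)$. The restriction that the conclusion $\phi$ lie in the (conditional) equational fragment is essential — it is exactly what makes Birkhoff's and Mal'cev's theorems applicable, and it is why the equivalence would fail for arbitrary first-order consequences of $E$. I would also remark that consistency is never a concern, since the one-element algebra satisfies every (conditional) equation, so $\mathrm{Mod}(E)$ is always nonempty.
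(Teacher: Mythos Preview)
Your proposal is correct and aligns with the paper's approach: the paper does not prove this lemma but merely states it, prefacing it with ``from Birkhoff and Mal'cev, we can also rely on first order derivations,'' which is precisely the pair of completeness theorems you invoke for the nontrivial direction. Your argument is therefore a faithful expansion of what the paper leaves as a citation.
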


The following is Proposition 2.2  in \cite{BergstraT2022TM}:

\begin{theorem}\label{NonQ-Variety}
Let $\Sigma$ be a signature which contains at least one function symbol $f$ with two or more arguments. For any nonempty 
class $\mathsf{K}$ of $\Sigma$-algebras in which $f$ is total, the class $\mathsf{K}_\bot$ cannot be defined by conditional equations, i.e., is not a quasivariety.
\end{theorem}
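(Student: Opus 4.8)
The plan is to derive a contradiction from a closure property that every quasivariety enjoys but that $\mathsf{K}_\bot$ must fail. Every class axiomatised by conditional equations is closed under direct products, since a conditional equation is a universal Horn sentence and Horn sentences are preserved under direct products (this is also one of the algebraic closure conditions in the Mal'cev-style characterisation of quasivarieties cited above). So it suffices to show that $\mathsf{K}_\bot$ is \emph{not} closed under direct products. Concretely, I would fix any $A \in \mathsf{K}$ — which exists and is a nonempty algebra, since $\mathsf{K}$ is nonempty — and argue that the square $A_\bot \times A_\bot$ is not isomorphic to any algebra of the form $B_\bot$ with $B \in \mathsf{K}$, so that $A_\bot \times A_\bot \notin \mathsf{K}_\bot$ even up to isomorphism.

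The separating device is a structural (first-order, but deliberately non-Horn) invariant $P$ phrased only in terms of $f$ and the distinguished constant $\bot$. Writing $n \ge 2$ for the arity of $f$, let $P(C)$ assert that whenever $x_1, \dots, x_n$ are all distinct from $\bot^C$, the value $f^C(x_1, \dots, x_n)$ is also distinct from $\bot^C$. First I would check that $P$ holds in every $B_\bot$: since $f$ is total on $B$ with values in $B$, and $B_\bot$ is an enlargement of $B$ agreeing with $B$ on $B$-arguments, any tuple of arguments all different from $\bot$ lies in $B$ and is sent by $f$ into $B$, hence never to $\bot$. Next I would observe that $P$ is preserved by $\Sigma_\bot$-isomorphisms, as these are bijections respecting $f$ and the constant $\bot$; thus $P$ holds of every algebra isomorphic to a member of $\mathsf{K}_\bot$.

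It then remains to show that $A_\bot \times A_\bot$ violates $P$. Here I would use that in a direct product constants are interpreted coordinatewise, so $\bot^{A_\bot \times A_\bot} = (\bot,\bot)$, together with the fact that in $A_\bot$ an application of $f$ yields $\bot$ exactly when at least one argument is $\bot$. Choosing any $a \in A$ (so $a \neq \bot$), set $x_1 = (\bot, a)$, $x_2 = (a, \bot)$ and $x_3 = \dots = x_n = (a,a)$. Each $x_i$ differs from $(\bot,\bot)$, yet the first coordinates of the tuple contain a $\bot$ and so do the second coordinates, whence $f(x_1,\dots,x_n) = (\bot,\bot) = \bot^{A_\bot \times A_\bot}$; so $P$ fails. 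As $P$ holds throughout the isomorphism closure of $\mathsf{K}_\bot$ but fails for $A_\bot \times A_\bot$, the product lies outside $\mathsf{K}_\bot$, contradicting closure under products; hence no set of conditional equations defines $\mathsf{K}_\bot$. The one point that needs care rather than being a deep obstacle is exactly where the hypothesis $n \ge 2$ is used: the construction forces $f$ to return the product's $\bot$ by placing a $\bot$ in a \emph{different} coordinate of two \emph{distinct} non-$\bot$ arguments, which is impossible for a unary $f$ (there $f$ of a non-$\bot$ element can never be $\bot$), so the essential role of having a function of two or more arguments is precisely this step.
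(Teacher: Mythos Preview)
Your proof is correct and clean. The paper itself does not prove Theorem~\ref{NonQ-Variety} directly---it cites \cite{BergstraT2022TM}---but it does give a direct proof of the special case for common meadows and for $\mathsf{CR}_{\div,\bot}$ (the Proposition in Section~\ref{rings_with_bot}), and that argument follows a genuinely different route from yours. There the authors add fresh constants $c,d$ to the signature, impose the extra equation $c+d=\bot$, and examine the \emph{initial algebra} of the resulting conditional equational theory: by exhibiting two models in which $c$ and $d$ respectively take the value $\bot$, they conclude that neither $c=\bot$ nor $d=\bot$ is derivable, so in the initial model both are non-$\bot$ yet their sum is $\bot$---violating the same structural invariant you isolate as $P$.

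Both arguments ultimately pivot on the invariant that in any $B_\bot$ the function $f$ sends non-$\bot$ tuples to non-$\bot$ values; the difference is which closure property of quasivarieties is used to manufacture a counterexample. Your product argument is the more elementary and more portable one: it needs nothing beyond the fact that Horn sentences are preserved under direct products, and it works uniformly for the general statement without introducing auxiliary constants or appealing to the existence of initial models. The paper's initial-algebra technique, by contrast, is more syntactic in flavour and ties in naturally with the abstract-data-type viewpoint the paper adopts elsewhere. One small point worth tightening in your write-up: you infer that $A$ is a nonempty \emph{algebra} from $\mathsf{K}$ being a nonempty \emph{class}; strictly this needs the standing convention that algebras are nonempty (which the paper's arithmetical setting guarantees via the constants $0,1$), so you might say so explicitly.
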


If only unary functions are present in $\Sigma$ then $\mathsf{K}_\bot$ can be specified by the conditional equations $f(x) = \bot \to x = \bot$ for each operation $f$. 

It follows from Theorem~\ref{NonQ-Variety} that in most cases there is no way to  find a  conditional equational specification for $\mathsf{K}_\bot$ that is complete for first order formulae. In particular:  

\begin{corollary} 
\label{NonQVar}
Given a signature $\Sigma$ which contains at least one function symbol $f$ with two or more arguments, and any equationally defined class $\mathsf{K} = \mathsf{Alg}(\Sigma, E)$ in which $f$ is total, then the class $\mathsf{K}_\bot$ is {\em not} definable by conditional equations, i.e., is not a quasivariety.
\end{corollary}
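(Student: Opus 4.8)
The plan is to obtain the corollary as an immediate specialisation of Theorem~\ref{NonQ-Variety}. That theorem already delivers the conclusion for \emph{any} nonempty class $\mathsf{K}$ in which $f$ is total, so the only work is to check that an equationally defined class $\mathsf{K} = \mathsf{Alg}(\Sigma, E)$ meets these two hypotheses. Totality of $f$ is stipulated in the statement of the corollary, so the single point that remains is nonemptiness of $\mathsf{K}$.

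First I would recall the standard fact that every equational class is nonempty because it always contains the trivial one-element algebra $\mathbf{1}$. In $\mathbf{1}$ the carrier has a single element, so every operation of $\Sigma$---of whatever arity, and in particular the designated $f$---is the unique total map into that element; thus $\mathbf{1}$ is a total $\Sigma$-algebra. Moreover, for any equation $t = r$ in $E$, both sides evaluate to the unique element of $\mathbf{1}$, so $\mathbf{1} \models t = r$. Hence $\mathbf{1} \models E$ and $\mathbf{1} \in \mathsf{Alg}(\Sigma, E) = \mathsf{K}$, giving $\mathsf{K} \neq \emptyset$.

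With nonemptiness secured and totality of $f$ given, both hypotheses of Theorem~\ref{NonQ-Variety} are satisfied by $\mathsf{K} = \mathsf{Alg}(\Sigma, E)$. Applying that theorem verbatim then yields that $\mathsf{K}_\bot$ is not definable by conditional equations, i.e., is not a quasivariety, which is exactly the assertion of the corollary. There is no real obstacle here: the corollary is genuine bookkeeping at the folklore level, and the only thing one must not overlook is that equational classes are never empty (so the ``nonempty'' hypothesis of Theorem~\ref{NonQ-Variety} is automatic), together with the trivial observation that the one-element algebra keeps every operation total regardless of its arity.
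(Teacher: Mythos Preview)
Your proposal is correct and matches the paper's treatment: the corollary is stated as an immediate consequence of Theorem~\ref{NonQ-Variety} with no further proof, and your only addition is the routine check that $\mathsf{Alg}(\Sigma,E)$ is nonempty via the one-element algebra, which is exactly the bookkeeping needed to invoke the theorem.
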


Rather than looking for equational axioms to capture the algebras of $\mathsf{K}_\bot$, what can be done instead, is to look for a finite axiomatisation of the full equational, or conditional equational, theory of $\mathsf{K}_\bot$. In general:

\begin{definition}
The {\em equational theory of the class $\mathsf{K}$} is the set 
$$EqnThy(\mathsf{K}) =  \{ e \in Eqn(\Sigma) \ | \  \forall A \in  \mathsf{K}. A \models e \} $$ 
of all equations  over $\Sigma$ that are true in all the algebras in $\mathsf{K}$. The {\em conditional equational theory of the class $\mathsf{K}$} is the set 
$$CEqnThy(\mathsf{\mathsf{K}}) =  \{ c  \in CEqn(\Sigma) \ | \  \forall A \in  \mathsf{K}. A \models c \} $$ 
of all conditional equations over $\Sigma$ that are true in all the algebras in $\mathsf{K}$.
\end{definition}

\begin{lemma}\label{subclass_lemma}
Let $\mathsf{K}$ and $L$ be classes of $\Sigma$ algebras.
If $\mathsf{K} \subset L$ then $EqnThy(\mathsf{L}) \subset EqnThy(\mathsf{K})$ and $CondEqnThy(\mathsf{L}) \subset CondEqnThy(\mathsf{K})$.
\end{lemma}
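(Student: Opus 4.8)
The plan is to unfold the definitions of $EqnThy$ and $CEqnThy$ and observe that the statement is just the antitonicity (order-reversal) of the operator ``holds in every member of the class'' with respect to class inclusion. First I would take an arbitrary equation $e \in EqnThy(\mathsf{L})$; by definition this means $A \models e$ for every $A \in \mathsf{L}$. Since $\mathsf{K} \subset \mathsf{L}$, every $A \in \mathsf{K}$ is also a member of $\mathsf{L}$ and hence satisfies $e$. Therefore $A \models e$ for every $A \in \mathsf{K}$, which is exactly the assertion $e \in EqnThy(\mathsf{K})$. As $e$ was arbitrary, this establishes $EqnThy(\mathsf{L}) \subset EqnThy(\mathsf{K})$.

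The conditional case is handled identically: for an arbitrary $c \in CEqnThy(\mathsf{L})$ I would use that $c$ holds in every $A \in \mathsf{L}$, restrict the universal quantifier to the subclass $\mathsf{K} \subset \mathsf{L}$, and conclude $c \in CEqnThy(\mathsf{K})$. Nothing in the argument depends on $e$ or $c$ being an equation or conditional equation rather than an arbitrary sentence; it uses only that the map $\mathsf{K} \mapsto \{ \varphi \mid \forall A \in \mathsf{K}.\, A \models \varphi \}$ reverses inclusions, because shrinking the index set of a universal quantifier can only weaken the demand and so enlarge the set of sentences meeting it.

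There is no genuine obstacle here; the lemma is immediate from the definitions. The only point worth flagging is notational: the symbol $\subset$ in the conclusion should be read as inclusion $\subseteq$ (not necessarily strict), since for $\mathsf{K} = \mathsf{L}$ the two theories coincide and a strict inclusion would fail. The argument above yields precisely the non-strict inclusion in full generality.
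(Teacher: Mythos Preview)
Your proof is correct; it is the straightforward unfolding-of-definitions argument one expects. The paper in fact states this lemma without proof, treating it as immediate, so your argument is exactly the routine verification the authors left implicit, and your remark about reading $\subset$ as non-strict inclusion is an appropriate clarification.
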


\begin{definition}\label{def:completeness}
A set $E$ of equations or conditional equations is {\em complete} or is a {\em basis} for the equational theory, or conditional theory, of the class $\mathsf{K}$ if, respectively,
$$e \in EqnThy(\mathsf{K}) \iff E \vdash  e \   \textrm{or} \ e \in CEqnThy(\mathsf{K}) \iff E \vdash  e.$$
\end{definition}

%
%


\subsection{$\bot$-enlargement for partial algebras}
\label{PAs}

Notice that earlier $A$ was total or partial and $\mathsf{K}_\bot$ simply added $\bot$ and ensured the operations respected it. The notion of a $\bot$-extension applied to total algebras does little useful work. However, it is useful to partial algebras:  suppose the application of a function $f$ on arguments is undefined in a partial $\Sigma$-algebra $A$ then we can take $\bot$ as the result of $f$ on those arguments in $A_\bot$:

\begin{definition}\label{totalising_partial_operations}
Let $A$ be a partial algebra containing at least two elements.
Then define the enlargement $B=\mathsf{Enl}_\bot(A)$ to be a total algebra as follows: $B = A \cup \{\bot\}$ where $\bot$ is an absorptive element, and for $b_1, ..., b_n \in A$
\begin{center}
$f(b_1,\dots,b_n) = \bot$ in $B$ if, and only if, $f(b_1,\dots,b_n)$ is undefined in $A$.
\end{center}
\end{definition}

Following Definition \ref{totalising_partial_operations}, in the case of $A$ partial, the enlargement notation $\mathsf{Enl}_\bot(A)$ implies that the partial operations of $A$ have been made total using $\bot$. Of course, $A_\bot$ is a partial algebra.

\begin{definition}
 For any class $\mathsf{K}$ of $\Sigma$-algebras, define 
$$\mathsf{Enl}_\bot(\mathsf{K}) = \{ \mathsf{Enl}_\bot(A) \  |  \ A \in \mathsf{K}\}.$$
\end{definition}

Conversely, we can go the other way from a total algebra with absorptive element $\bot$ to a partial algebraic data type:

\begin{definition} 
Let $A$ be an algebra with $\bot$ as an absorptive element, and such that $A$ has at least two elements.
Then define the transformation $B=\mathsf{Pdt}_\bot(A)$ to be a partial algebra as follows: $B = A -\{\bot\}$ and for $b_1, ..., b_n, b \in B$,
\begin{center}
$f(b_1,\dots,b_n) = b$ in $B$ if, and only if, $f(b_1,\dots,b_n) = b$ in $A$.
\end{center}
\end{definition}

\begin{proposition}\label{Enl_Pdt}
Let $A$ be a total algebra with $\bot$ as an absorptive element, 
and such that $A$ has at least two elements.  Then 
\begin{center}
$\mathsf{Enl}_\bot(\mathsf{Pdt}_\bot(A)) = A$.
\end{center} 
\end{proposition}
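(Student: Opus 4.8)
The plan is to prove the set equality $\mathsf{Enl}_\bot(\mathsf{Pdt}_\bot(A)) = A$ by establishing that the two total $\Sigma_\bot$-algebras agree on three things: their carrier sets, their absorptive element, and the value of every operation on every tuple of arguments. Since both are total algebras over the same signature, equality of carriers together with pointwise equality of operations suffices. First I would unwind the definitions. Write $P = \mathsf{Pdt}_\bot(A)$, the partial algebra with carrier $A \setminus \{\bot\}$, and then $B = \mathsf{Enl}_\bot(P)$, the total algebra with carrier $(A \setminus \{\bot\}) \cup \{\bot\}$. Because $\bot \in A$ by hypothesis, the carrier of $B$ is exactly $A$, which settles the carrier-set equality and simultaneously shows that the freshly re-added absorptive element of $B$ coincides with $\bot \in A$.

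Next I would verify that the operations of $B$ and $A$ agree on every argument tuple, splitting into two cases according to whether $\bot$ occurs among the arguments. If some argument equals $\bot$, then in $B$ the value is $\bot$ by the definition of $\mathsf{Enl}_\bot$ (its absorptive element), and in $A$ the value is also $\bot$ since $\bot$ is absorptive in $A$ by hypothesis; so the two agree. If no argument is $\bot$, then all arguments lie in $P = A \setminus \{\bot\}$, and I would trace the value through both constructions. The value $f^A(b_1,\dots,b_n)$ in $A$ either lies in $A \setminus \{\bot\}$ or equals $\bot$. By the definition of $\mathsf{Pdt}_\bot$, the partial operation $f^P$ is defined and equal to $f^A(b_1,\dots,b_n)$ precisely when the latter lies in $A \setminus \{\bot\}$, and is undefined precisely when $f^A(b_1,\dots,b_n) = \bot$. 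Then the definition of $\mathsf{Enl}_\bot$ returns $f^A(b_1,\dots,b_n)$ in the first case and returns $\bot$ (which equals $f^A(b_1,\dots,b_n)$) in the second. Either way the values match.

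The two cases together give $f^B = f^A$ for every operation symbol $f$, and combined with the carrier equality this yields $B = A$ as total $\Sigma_\bot$-algebras. I expect the main point requiring care — rather than a genuine obstacle — to be the bookkeeping around $\bot$: one must use that $\bot$ is \emph{already} absorptive in $A$ (so that the ``some argument is $\bot$'' case does not lose information) and that $\mathsf{Pdt}_\bot$ strips exactly the single element $\bot$ (so that applying $\mathsf{Enl}_\bot$ restores it and nothing else). The hypothesis that $A$ has at least two elements is what guarantees both constructions are well-defined and that $A \setminus \{\bot\}$ is a legitimate nonempty carrier, so I would note its role but it does not enter the equality argument substantively. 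No induction on term structure is needed, since the claim is about the algebras' basic operations, not about derived term values.
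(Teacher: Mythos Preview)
Your argument is correct. The paper itself states this proposition without proof, treating it as immediate from the definitions of $\mathsf{Pdt}_\bot$ and $\mathsf{Enl}_\bot$; your careful case split on whether $\bot$ occurs among the arguments is exactly the definitional unwinding that justifies it, so there is nothing to compare against beyond noting that you have supplied the details the authors omitted.
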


\begin{proposition}\label{Pdt_Enl}
Let $A$ be an algebra. Then 
\begin{center}
$\mathsf{Pdt}_\bot(\mathsf{Enl}_\bot(A)) = A$.
\end{center} 
\end{proposition}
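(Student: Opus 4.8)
The plan is to prove $\mathsf{Pdt}_\bot(\mathsf{Enl}_\bot(A)) = A$ by showing that the two constructions are mutually inverse, which amounts to checking equality of carriers and equality of operations. Let $A$ be any $\Sigma$-algebra (possibly partial, and with at least two elements so that both constructions are defined). Write $B = \mathsf{Enl}_\bot(A)$ for the intermediate total algebra with absorptive element $\bot$, and $C = \mathsf{Pdt}_\bot(B)$ for the final partial algebra. The goal is to verify $C = A$ as partial $\Sigma$-algebras.

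First I would settle the carriers. By Definition \ref{totalising_partial_operations}, the carrier of $B$ is $A \cup \{\bot\}$ with $\bot \notin A$. Then applying $\mathsf{Pdt}_\bot$ removes exactly the absorptive element, so the carrier of $C$ is $(A \cup \{\bot\}) - \{\bot\} = A$, using that $\bot$ was freshly adjoined and did not already belong to $A$. Here I should note that the uniqueness of the absorptive element, guaranteed by Lemma \ref{uniqueness} when $\Sigma$ has an operation of arity at least two, ensures that $\mathsf{Pdt}_\bot$ strips away precisely the $\bot$ that $\mathsf{Enl}_\bot$ introduced and nothing else; this is why both constructions are required to start from algebras with at least two elements. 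So the carriers agree.

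Next I would check the operations. Fix an operation symbol $f$ and arguments $b_1, \dots, b_n \in A$ (equivalently, elements of the carrier of $C$). I must show $f^C(b_1, \dots, b_n) = b \iff f^A(b_1, \dots, b_n) = b$ for every $b \in A$, as partial functions. Unwinding the definitions: $f^C(b_1,\dots,b_n) = b$ in $C$ iff $f^B(b_1,\dots,b_n) = b$ in $B$ (by the definition of $\mathsf{Pdt}_\bot$, reading off values that lie in the carrier $A$), and by the definition of $\mathsf{Enl}_\bot$, for arguments drawn from $A$ we have $f^B(b_1,\dots,b_n) = \bot$ exactly when $f^A(b_1,\dots,b_n)$ is undefined, and otherwise $f^B$ agrees with $f^A$. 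Since $b \in A$ means $b \neq \bot$, the value $f^B(b_1,\dots,b_n) = b$ occurs iff $f^A(b_1,\dots,b_n)$ is defined and equal to $b$. Thus the defined cases of $f^C$ and $f^A$ coincide, and the undefined cases (where $f^B$ returns $\bot$, which is discarded by $\mathsf{Pdt}_\bot$) also coincide, giving $f^C = f^A$ as partial operations.

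I do not expect a genuine obstacle here: the result is essentially a bookkeeping verification that the two constructions undo one another, and the only subtle point is the bookkeeping around $\bot$. The main thing to be careful about is the direction of this round trip as opposed to Proposition \ref{Enl_Pdt}: here we start from an \emph{arbitrary} algebra $A$ (which may be partial), so the intermediate $B$ always has the freshly adjoined $\bot$ as its unique absorptive element, and $\mathsf{Pdt}_\bot$ faithfully recovers the original partial structure. One should mention the hypothesis that $A$ has at least two elements only to guarantee that the adjoined $\bot$ really is distinct and that $\mathsf{Enl}_\bot(A)$ has the requisite size for $\mathsf{Pdt}_\bot$ to be defined; beyond that the argument is a direct unfolding of the two definitions.
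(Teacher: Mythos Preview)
Your proposal is correct and is exactly the natural argument: the paper states Proposition~\ref{Pdt_Enl} without proof, treating it as a routine bookkeeping fact, and your unfolding of the two definitions (carriers first, then operations) is the expected verification. Your care about the ``at least two elements'' hypothesis and the uniqueness of the absorptive element (so that $\mathsf{Pdt}_\bot$ removes precisely the freshly adjoined $\bot$) is appropriate, even though the paper's statement omits this side condition.
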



\section{On axiomatising division}\label{division}


\subsection{Rings, fields and meadows}\label{division_inverse}

A commutative ring with $1$ is a field if every non-zero element is invertible. In particular, rings and fields have the same operations to which we will add division:

\begin{lemma}\label{dividers_are_well-defined}
Let $R$ be a commutative ring with 1.  If $a \cdot b =1$ and $a \cdot c = 1$ then $b=c$.
\end{lemma}
\begin{proof}
Assume $a \cdot b =1$ and $a \cdot c = 1$.  Then using the axioms for identity, associativity and commutativity:
$$b = 1 \cdot b = (a \cdot c) \cdot b = a \cdot (c \cdot b) = a \cdot (b \cdot c) = (a \cdot b) \cdot c =1 \cdot c= c.$$
\end{proof}

Given the uniqueness, we can define a \textit{partial} inverse operator $x^{-1}$ in any commutative ring  with 1.  We will write $\frac{1}{x}$ and $x^{-1}$ alternatively as synonyms.

\begin{definition}\label{division_by_inverse}
Let $R$ be a commutative ring with 1. Then it has a unique {\em partial division operator} defined by inverses 

(i) $ a \div b =_{def} a \cdot (b^{-1})$ where

(ii) if $ a\cdot c = 1$ then $a^{-1}= c$, and 

(iii) if for no $c \in R$, $ a \cdot c = 1$ then $a^{-1}$ is undefined.
\end{definition}

We will write $\frac{x}{y}$ and $x \div y$ alternatively as synonyms.

\begin{example}
{\em In this form of division we can calculate $a \div b$ whenever $b^{-1}$ exists. In $\Int$ this a limited form of division: the equation $b \cdot x = 1$ has solutions only when  $b = 1$ and $b = -1$.

In the finite ring $\Int_{n}$ there are more divisions. In  $\Int_{10}$, the equation $b \cdot x = 1 \ mod \ 10$ has solutions $b = 1, 3, 7$.}
\qed
\end{example}

\begin{definition}
Extending a ring $R$ with this operator $\div$ creates a {\em ring with partial division} denoted $R_{\div}$.
Extending a field $F$ with this operator $\div$ creates a {\em partial meadow} denoted $F_{\div}$.
\end{definition}

Let $\Sigma_r$ be a signature of rings and fields, and let $\Sigma_{r,\div}$ be a signature of rings and fields with division. Thus, $\Sigma_{r,\div}$ is the signature of a meadow.


\subsection{Totalising division in a ring}\label{division_in_a_ring}

To move from division as a partial operation to division as a total operation in a ring, we define common division:

\begin{definition}
Let  $R_{\div}$ be a commutative ring with partial division. Then, on adding the absorptive element $\bot$ to the domain of the ring and re-defining its division operation by

(iv) if $ a\cdot c = 1$ then $a^{-1}= c$, and 

(v) if for no $c \in R$, $ a \cdot c = 1$ then $a^{-1}= \bot$,

\noindent we have a {\em total} division operator; we call this total operation {\em common division}. 
\end{definition} 

Recalling the general method of extending an algebra with $\bot$, and specifically Definition \ref{totalising_partial_operations}, we will use the following notations: 

\begin{definition}
A  {\em commutative ring with inverse based common division} is a total algebra of the form $\mathsf{Enl}_\bot(R_{\div})$.  Let $\mathsf{CR}_{\div, \bot}$ be the class of all such algebras.
A {\em common meadow} is an algebra of the form $\mathsf{Enl}_\bot(F_{\div})$ wherein $F$ is field. Let $\mathsf{CM}$ be the class of all common meadows.
\end{definition}

Let $\Sigma_{r,\div, \bot}$ be a signature of rings and fields with common division wherein $\bot$ is a constant.

\begin{lemma}
Each common meadow is also a commutative ring with common division, but not every commutative ring with common division is a common meadow. In symbols, $\mathsf{CM} \subsetneqq \mathsf{CR}_{\div, \bot}$.
\end{lemma}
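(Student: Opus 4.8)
The plan is to prove the two directions separately. For the inclusion $\mathsf{CM} \subseteq \mathsf{CR}_{\div, \bot}$, I would observe that every field $F$ is, in particular, a commutative ring with $1$, and that the inverse based partial division operator of Definition \ref{division_by_inverse} is defined by exactly the same clauses whether $F$ is regarded as a field or merely as a commutative ring with $1$. Hence $F_{\div}$ is a ring with partial division and $\mathsf{Enl}_\bot(F_{\div})$ is literally of the form required for membership in $\mathsf{CR}_{\div, \bot}$. Thus every common meadow is a commutative ring with inverse based common division.

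For the strictness, I would exhibit a witness: take $R = \Int$, which is a commutative ring with $1$ but not a field, and form $\mathsf{Enl}_\bot(\Int_{\div}) \in \mathsf{CR}_{\div, \bot}$. The task is to show this algebra is not a common meadow, and I would do this by isolating a property of the $\Sigma_{r,\div, \bot}$-structure that holds in every common meadow but fails here. In any common meadow $\mathsf{Enl}_\bot(F_{\div})$, every nonzero element of $F$ is invertible, so for all $x \neq \bot$ we have that $\frac{1}{x} = \bot$ implies $x = 0$. By contrast, in $\mathsf{Enl}_\bot(\Int_{\div})$ the element $2$ satisfies $\frac{1}{2} = \bot$, since $2$ has no multiplicative inverse in $\Int$, while $2 \neq 0$ and $2 \neq \bot$.

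Since this distinguishing property is expressed purely in terms of the constants $0, \bot$ and the operation $\div$, it is preserved by any $\Sigma_{r,\div, \bot}$-isomorphism; consequently $\mathsf{Enl}_\bot(\Int_{\div})$ cannot be isomorphic to, and a fortiori cannot equal, any common meadow. Alternatively, for the literal set-membership reading, if $\mathsf{Enl}_\bot(\Int_{\div}) = \mathsf{Enl}_\bot(F_{\div})$ for some field $F$, then applying $\mathsf{Pdt}_\bot$ together with Propositions \ref{Enl_Pdt} and \ref{Pdt_Enl} would force $\Int_{\div} = F_{\div}$, hence $\Int = F$, contradicting that $\Int$ is not a field. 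Either route yields a member of $\mathsf{CR}_{\div, \bot}$ lying outside $\mathsf{CM}$, so combining the two directions gives $\mathsf{CM} \subsetneqq \mathsf{CR}_{\div, \bot}$.

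I expect the only genuine subtlety to lie in the strictness direction: it is not enough merely to note that $\Int$ fails to be a field, since one must rule out $\mathsf{Enl}_\bot(\Int_{\div})$ coinciding with some field-based common meadow under a relabelling of carriers. The invariant ``some element distinct from $0$ and $\bot$ has inverse $\bot$'' is precisely what does this work, and the crux is verifying both that it is an isomorphism invariant and that it genuinely fails in every $\mathsf{Enl}_\bot(F_{\div})$ because fields invert all nonzero elements.
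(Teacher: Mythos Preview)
Your proof is correct and follows essentially the same approach as the paper: the inclusion is immediate from fields being commutative rings, and strictness is witnessed by a ring that is not a field, via a nonzero non-$\bot$ element whose inverse is $\bot$. The only difference is the choice of witness---the paper uses $\mathsf{Enl}_\bot((\Int_{10})_{\div})$ rather than $\mathsf{Enl}_\bot(\Int_{\div})$---and you are considerably more careful than the paper about ruling out coincidental isomorphism with some field-based common meadow, a point the paper leaves implicit.
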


Here is a counter-example for the second clause of the lemma.

\begin{example}
{\em An example of a commutative ring with common division that is {\em not} a common meadow is $M =\mathsf{Enl}_\bot((\Int_n)_{\div})$ with $n = 10$. Notice that in $M$, $2 \cdot 3 = 6$. There is no $c$ such that $2 \cdot c = 1\mod 10$ and so the inverse $\frac{1}{2} = \bot$. It follows that there is a non-zero element without proper inverse. We further notice that, 
$$\frac{6}{2} =  6 \cdot \frac{1}{2} = 6 \cdot \bot = \bot \neq 3.$$  
On the other hand, as $3 \cdot 7 = 21 \mod 10 = 1 \mod 10$, we have 
$$\frac{6}{3} = 6 \cdot \frac{1}{3} = 6 \cdot  7  \mod 10 = 42 \mod 10 = 2.$$}
\qed
\end{example}

On forgetting division,  a commutative ring with partial division remains a ring and, similarly, a meadow whose division is partial remains a field. However, as noted earlier, the addition of the absorptive $\bot$ breaks the axioms and laws of rings. Thus, of immediate technical interest is the impact of $\bot$ on the familiar and fundamental properties of rings and fields. Indeed, what may be left after weakening the properties of commutative rings by adding $\bot$?


\subsection{Equational axioms}\label{the_equational_axiom_set}

Now, we present the axiom system $E_{\mathsf{ftc-cm}}$ that is at the centre of our programme on common meadows and a focus of this paper. The equations for $E_{\mathsf{ftc-cm}}$ split into two parts: 

(i) a set $E_{\mathsf{wcr},\bot}$ of equations for commutative rings now weakened by $\bot$, in Table \ref{EwcrBot}; and 

(ii) equations for division $\div$, in Table \ref{FTCcm}.

\begin{table}
\centering
\hrule
\begin{align}
	(x+y)+z 			&= x + (y + z)\\
	x+y     			&= y+x\\
	x+0     			&= x\\
	x + (-x) &= 0 \cdot x \\
	x \cdot (y \cdot z) 	&= (x \cdot y) \cdot z\\
	x \cdot y 			&= y \cdot x\\
	1 \cdot x 			&= x\\
	x \cdot (y+ z) 		&= (x \cdot y) + (x \cdot z) \\
	-(-x) 				&= x\\
	 0 \cdot (x \cdot x)	&= 0 \cdot x\\
	x + \bot 			&= \bot
\end{align}
\hrule
\medskip
\caption{$E_{\mathsf{wcr},\bot}$: equational axioms for weak commutative rings with $\bot$}
\label{EwcrBot}
\end{table}

\begin{table}
\centering
\hrule
\begin{align}
	{\tt import~}  &~ E_{\mathsf{wcr},\bot} \nonumber \\
	x				&= \frac{x}{1}\\
	\label{fracDivMult}
	\frac{x}{y}	\cdot \frac{u}{v}		&=  \frac{x \cdot u}{y \cdot v}\\
	\label{fracDivSum}		
	\frac{x}{y} + \frac{u}{v}	&= \frac{(x\cdot v) + (y \cdot u)}{y \cdot v}\\
	\frac{x}{y +( 0 \cdot z)} &= \frac{x + (0 \cdot z)}{y}\\
	\bot				&= \frac{1}{0}
\end{align}
\hrule
\medskip
\caption{$E_{\mathsf{ftc-cm}}$: Equational axioms for fracterm calculus for common meadows}
\label{FTCcm}
\end{table}

\begin{definition}
A model of $E_{\mathsf{ftc-cm}}$ we will call a {\em generalised common meadow}. We denote the class of such algebras $\mathsf{GCM} =\mathsf{Alg}(\Sigma_{r,\div, \bot}, E_{\mathsf{ftc-cm}})$.
\end{definition}

Because the generalised common meadows constitute a variety, the conditional equational theory -- and indeed the first order theory -- of the class of generalised common meadows is finitely axiomatized by $E_{\mathsf{ftc-cm}}$. 


\subsubsection{Fracterms and flattening}\label{fracterms}
An important feature of division is the introduction of fractional expressions. 
Now, in abstract data type theory, fractions can be given a clear formalisation as a 
syntactic object -- as a term over a signature containing division with a certain form. 
Rather than fraction we will speak of a \textit{fracterm}, 
following the terminology of~\cite{Bergstra2020} (item 25 of 4.2):

\begin{definition}
A {\em fracterm} is a term over $\Sigma_{r,\div}$ whose leading function symbol is division $\div$.   A {\em flat fracterm} is a fracterm with only one division operator.
\end{definition}

For reasons of convention and clarity, we will change notation from $\div$ to the familiar fraction notations when appropriate and so fracterms can have these forms 
$p \div q, \  \frac{p}{q}, \ p/q$
and flat fracterms have these forms in which $p$ and $q$ do not involve any occurrence of division. 
Here is an appropriate context for changing from $\div$ to the familiar fraction notation: 
the following simplification process is a fundamental property of working with fracterms.

\begin{theorem}\label{FF}
(Fracterm flattening \cite{BergstraP2015}.) 
For each term $t$ over $\Sigma_{r, \div, \bot}$ there exist $p$ and $q$ terms over $\Sigma_{r}$, i.e., both not involving $\bot$ or division, such that 
$$\displaystyle E_{\mathsf{ftc-cm}} \vdash t = \frac{p}{q},$$
 i.e., $t$ is provably equal to a flat fracterm. Furthermore, the transformation is computable.
\end{theorem}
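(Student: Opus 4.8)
The plan is to prove Fracterm Flattening by structural induction on the term $t$ over $\Sigma_{r,\div,\bot}$, showing that at each stage we can rewrite $t$ into a flat fracterm $\frac{p}{q}$ with $p,q$ over $\Sigma_r$ (no division, no $\bot$), using only provable equalities from $E_{\mathsf{ftc-cm}}$. The base cases are the generators: a variable $x$ becomes $\frac{x}{1}$ by axiom~(12), the constant $0$ becomes $\frac{0}{1}$, the constant $1$ becomes $\frac{1}{1}$, and crucially $\bot$ becomes $\frac{1}{0}$ by the last axiom of Table~\ref{FTCcm}. This last conversion is the key device: it lets us absorb every occurrence of $\bot$ into the fracterm format so that the output $p$ and $q$ genuinely avoid $\bot$.

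For the inductive step I would assume each immediate subterm has already been flattened, say to $\frac{p_1}{q_1}$ and $\frac{p_2}{q_2}$, and then handle each leading operation by producing a single combined flat fracterm. Multiplication is immediate from axiom~(\ref{fracDivMult}): $\frac{p_1}{q_1}\cdot\frac{p_2}{q_2}=\frac{p_1\cdot p_2}{q_1\cdot q_2}$. Addition is handled by axiom~(\ref{fracDivSum}), giving $\frac{(p_1\cdot q_2)+(q_1\cdot p_2)}{q_1\cdot q_2}$. For negation I would first observe that $-\frac{p}{q}$ must be rewritten into the flat form; one shows from the ring axioms in $E_{\mathsf{wcr},\bot}$ together with the division axioms that $-\frac{p}{q}=\frac{-p}{q}$ is derivable (the distributivity and the weakened axiom $x+(-x)=0\cdot x$ are what one leans on here). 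Finally, for the division operator itself, $\frac{\,p_1/q_1\,}{\,p_2/q_2\,}$, I would convert the outer division to a multiplication by the reciprocal of the denominator fracterm and then apply the multiplication rule; this requires deriving that the reciprocal $\frac{1}{p_2/q_2}$ equals $\frac{q_2}{p_2}$ as a provable consequence, again reducing to axioms~(12) and~(\ref{fracDivMult}). In every case the resulting numerator and denominator are built from $p_1,q_1,p_2,q_2$ by ring operations only, so by the induction hypothesis they remain division-free and $\bot$-free, closing the induction.

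The main obstacle is the division case, namely flattening a quotient of two fracterms into a single flat fracterm. The clean identity $\frac{a/b}{c/d}=\frac{a\cdot d}{b\cdot c}$ that one wants is entirely familiar in fields, but here we are working only with the weakened axioms $E_{\mathsf{ftc-cm}}$, where $\bot$ invalidates laws such as $x\cdot 0=0$ and $x+(-x)=0$. So I would need to verify carefully that the reciprocal identity and the quotient-of-quotients identity are genuinely \emph{derivable} from $E_{\mathsf{ftc-cm}}$, rather than merely true in fields, paying attention to the edge cases where a denominator is (provably equal to) $\frac{1}{0}=\bot$. The auxiliary axiom $\frac{x}{y+(0\cdot z)}=\frac{x+(0\cdot z)}{y}$ is likely the tool that lets one manage these edge terms, since it lets absorptive contamination ($0\cdot z$ factors) migrate between numerator and denominator without changing provable value. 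I expect the bulk of the real work to be assembling these few auxiliary derived equalities about reciprocals and quotients; once they are in hand, the induction itself is routine.

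Finally, to establish computability of the transformation, I would note that the induction above is entirely constructive: each step applies a fixed, effectively-given rewrite determined by the leading function symbol of $t$, and the recursion follows the finite syntax tree of $t$. Reading the induction as a recursive procedure on the parse tree yields an algorithm that terminates (since each recursive call strictly decreases term size) and outputs the witnessing flat fracterm $\frac{p}{q}$ together with, if desired, the sequence of $E_{\mathsf{ftc-cm}}$-equalities justifying $t=\frac{p}{q}$. Hence the transformation is computable, as claimed.
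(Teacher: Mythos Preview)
Your proposal is correct and takes essentially the same approach as the paper: the paper's proof consists of one sentence, ``Immediate by structural induction on the structure of $t$, noting that any occurrence of $\bot$ can be replaced by $1/0$,'' and defers the remaining details to~\cite{BergstraP2015}. You have simply spelled out those details (the per-operator flattening steps and the care needed in the division case), which is exactly what the cited source does.
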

\begin {proof}  
Immediate by structural induction on the structure of $t$, noting that any occurrence of $\bot$ can be replaced by $1/0$.  
\end{proof}

Thus, fracterms can be transformed uniformly into equivalent flat fracterms for use in any model of  $E_{\mathsf{ftc-cm}}$, i.e., in any generalised meadow. One of several key properties is that the axioms of $E_{\mathsf{ftc-cm}}$ guarantee flattening (Theorem \ref{FF}).

\subsubsection{A remark on (changing) terminology}

What we now call a generalised common meadow was originally called a common meadow in~\cite{BergstraP2015}. However, as applications in computing mainly deal with enlargements of fields with division we have chosen to use common meadow for a smaller class of structures where the non-$\bot$ elements constitute a field rather than for \textit{any} model of the equations in Tables~\ref{EwcrBot} and \ref{FTCcm}. For an arbitrary model of $E_{\mathsf{ftc-cm}}$ we prefer to use \textit{generalised common meadow}, or GCM, in an attempt to be concise and avoid overly logical jargon.
We mention that in~\cite{DiasD2023} and~\cite{DiasD2024} the original terminology of~\cite{BergstraP2015} is used.

\section{Commutative rings with $\bot$}\label{rings_with_bot}

We have developed these axioms $E_\mathsf{ftc-cm}$ to make a theory of common meadows. How much lifts from fields to commutative rings with common division? The following observation is due to Dias and Dinis in~\cite{DiasD2023}. 


\begin{proposition}
\label{RingSound}
Each commutative ring with inverse based common division satisfies all the equational axioms in $E_{\mathsf{ftc-cm}}$ for common meadows.
\end{proposition}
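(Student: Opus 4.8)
The plan is to verify directly that every algebra $\mathsf{Enl}_\bot(R_\div)\in\mathsf{CR}_{\div,\bot}$ satisfies each equation of Tables~\ref{EwcrBot} and~\ref{FTCcm}. Since such an algebra arises from a commutative ring $R$ with $1$ by totalising the partial inverse with $\bot$, I first note that $R$ must be nontrivial (the construction $\mathsf{Enl}_\bot$ requires at least two elements), so $0\neq 1$. The verification of each axiom then splits into two regimes: the case in which all variables are assigned values in $R$, where the equation reduces to a classical identity of commutative rings, and the case in which at least one variable or subterm takes the value $\bot$, where both sides collapse to $\bot$ by absorption. Recall that $\bot$ absorbs $+$, $\cdot$, $-$ and $\div$, so any term having a $\bot$-valued subterm evaluates to $\bot$.

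The central step, and the only place requiring genuine ring-theoretic content, is a lemma on the totalised inverse: for all $x,y$ in $\mathsf{Enl}_\bot(R_\div)$,
$$ (x\cdot y)^{-1}=x^{-1}\cdot y^{-1},\qquad 1^{-1}=1,\qquad 0^{-1}=\bot,\qquad \bot^{-1}=\bot. $$
The identities $1^{-1}=1$ and $\bot^{-1}=\bot$ are immediate, and $0^{-1}=\bot$ follows from $0\neq 1$, since no $c$ solves $0\cdot c=1$. For the multiplicativity identity I would argue by cases on invertibility of $x$ and $y$ in $R$: if both are invertible then $x^{-1}\cdot y^{-1}$ is a genuine inverse of $x\cdot y$ by commutativity and associativity, so it equals $(x\cdot y)^{-1}$; if, say, $x$ is not invertible then neither is $x\cdot y$, because an inverse $w$ of $x\cdot y$ would make $y\cdot w$ an inverse of $x$, so both sides are $\bot$. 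The case where a variable equals $\bot$ is handled by absorption. This lemma is exactly what makes the two fracterm axioms work, and I expect its case analysis to be the main obstacle.

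With the lemma in hand, the weak commutative ring axioms of Table~\ref{EwcrBot} are routine. On $R$ they are instances of the ring axioms, using in particular $0\cdot x=0$ for the weak inverse axiom $x+(-x)=0\cdot x$ and for $0\cdot(x\cdot x)=0\cdot x$; when any argument equals $\bot$ each side evaluates to $\bot$ by absorption (noting $-\bot=\bot$). The absorption axiom $x+\bot=\bot$ holds by definition.

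Finally, for the division axioms of Table~\ref{FTCcm} I would rewrite every occurrence of $\frac{p}{q}$ as $p\cdot q^{-1}$ and apply the lemma. The axiom $x=\frac{x}{1}$ uses $1^{-1}=1$, and $\bot=\frac{1}{0}$ uses $0^{-1}=\bot$. For the fracterm multiplication axiom~\eqref{fracDivMult} and addition axiom~\eqref{fracDivSum}, the multiplicativity $(y\cdot v)^{-1}=y^{-1}\cdot v^{-1}$ reduces both sides to the same product or sum when $y$ and $v$ are invertible (using $y\cdot y^{-1}=1$, $v\cdot v^{-1}=1$ and distributivity in the additive case), while if either denominator is non-invertible or $\bot$ both sides are $\bot$. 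The one axiom needing a little care is $\frac{x}{y+(0\cdot z)}=\frac{x+(0\cdot z)}{y}$: here one must split on whether $z\in R$, where $0\cdot z=0$ and both sides reduce to $\frac{x}{y}$, or $z=\bot$, where $0\cdot z=\bot$ forces $y+(0\cdot z)=\bot$ and $x+(0\cdot z)=\bot$, so both sides are $\bot$. Collecting these cases completes the verification.
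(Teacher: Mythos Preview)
Your proposal is correct and follows exactly the approach the paper intends: its proof is the single sentence ``This is a matter of straightforward inspection of all equations of $E_{\mathsf{ftc-cm}}$,'' and you have carried out that inspection in detail, including the key multiplicativity lemma $(x\cdot y)^{-1}=x^{-1}\cdot y^{-1}$ and the observation that nontriviality of $R$ (forced by the two-element hypothesis in the definition of $\mathsf{Enl}_\bot$) gives $0^{-1}=\bot$.
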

\begin{proof}
This is a matter of straightforward inspection of all equations of $E_{\mathsf{ftc-cm}}$.
\end{proof}

\begin{example}
{\em There are models of $E_{\mathsf{ftc-cm}}$ that are commutative rings and are not meadows: starting with a ring that is not a field,  e.g. $\Int$, we take the $\bot$-enlargement of it and then expand the resulting structure with inverse based common division to obtain a generalised common 
meadow that is not a common meadow.

There are are models of $E_{\mathsf{ftc-cm}}$ that are not commutative rings: add a new constant $c$ to the signature to make $\Sigma_{r,\div, \bot,c}$ and do not add any equations mentioning $c$ to $E_{\mathsf{ftc-cm}})$. Consider the class $\mathsf{Alg}(\Sigma_{r,\div, \bot,c}, E_{\mathsf{ftc-cm}}$.
It is easy to see that an initial algebra $A$ of this equational class is a generalised common meadow, as it satisfies $E_{\mathsf{ftc-cm}}$; but $A$ is not a ring as neither $c \cdot 0 = 0$ nor $c = \bot$ can be shown from $E_{\mathsf{ftc-cm}}$, so that neither identity is true in $A$. }
\qed
\end{example}

Consider the conditional formula -- which is not a conditional equation -- taken from~\cite{BergstraP2015}:

\begin{definition}
The following formulae is termed the {\em Normal Value Law}
$$\mathsf{NVL}: \ \  x \neq \bot \to 0 \cdot x = 0$$
or, equivalently,
$$\mathsf{NVL}: \ \  0 \cdot x \neq 0 \to   x = \bot.$$
\end{definition}

\begin{proposition} 
The class $\mathsf{CR}_{\div, \bot}$  of commutative rings with inverse based common division are precisely the models of $E_{\mathsf{ftc-cm}}$ that satisfy
$\mathsf{NVL}$.
\end{proposition}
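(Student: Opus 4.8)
The statement asserts an equality of two classes, so I would prove the two inclusions separately. The inclusion $\mathsf{CR}_{\div,\bot}\subseteq\mathsf{Alg}(\Sigma_{r,\div,\bot},E_{\mathsf{ftc-cm}})\cap\{A:A\models\mathsf{NVL}\}$ is the easy half: membership in $E_{\mathsf{ftc-cm}}$ is exactly Proposition~\ref{RingSound}, and $\mathsf{NVL}$ holds because an algebra $\mathsf{Enl}_\bot(R_\div)$ has carrier $R\cup\{\bot\}$ on which every non-$\bot$ element $x$ lies in the ring $R$, where $0\cdot x=0$ holds by the ordinary ring axioms. So the substance is the converse inclusion. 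For the converse, fix $A\models E_{\mathsf{ftc-cm}}+\mathsf{NVL}$. The plan is to recover a commutative ring with $1$ from $A$ and then invoke the retraction $\mathsf{Enl}_\bot(\mathsf{Pdt}_\bot(A))=A$ of Proposition~\ref{Enl_Pdt}. First I would dispose of the degenerate case: if $0=\bot$ then $x=x+0=x+\bot=\bot$ for all $x$, so $A$ collapses to the one-point algebra, which I would treat separately (or exclude by the standing ``at least two elements'' hypothesis of Definition~\ref{totalising_partial_operations}). Assuming $0\ne\bot$, I would set $R=\{a\in A:0\cdot a=0\}$; by $\mathsf{NVL}$ and its contrapositive $0\cdot a\ne 0\to a=\bot$ this is exactly $A\setminus\{\bot\}$, so $A=R\sqcup\{\bot\}$.

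The enabling lemma is that $\bot$ is absorptive for \emph{all} operations, not just $+$. Absorption for $+$ is the axiom $x+\bot=\bot$. For the rest I would use $\mathsf{NVL}$ as a ``$\bot$-detector'': whenever I can compute $0\cdot t=\bot$ for a term $t$, the contrapositive forces $t=\bot$. Thus the sum rule~\eqref{fracDivSum} applied to $\frac{0}{0}+\frac{1}{0}$, together with $0\cdot 0=0$ (valid since $0\ne\bot$) and $x+\bot=\bot$, gives $\frac{0}{0}=\bot$; then $0\cdot\frac{a}{0}=\frac{0\cdot a}{1\cdot 0}=\frac{0}{0}=\bot$ for $a\in R$ gives $\frac{a}{0}=\bot$, i.e. $a\cdot\bot=\bot$; the weakened inverse law $x+(-x)=0\cdot x$ yields $0\cdot(-x)=0\cdot x$, whence $0\cdot(-\bot)=\bot$ and so $-\bot=\bot$; finally the sum rule applied to $\frac{1}{\bot}+\frac{0}{1}$ shows $\frac{1}{\bot}=\frac{\bot}{\bot}=\bot\cdot\frac{1}{\bot}=\bot$. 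With absorption in hand, a routine check using distributivity and associativity shows $R$ is closed under $+,\cdot,-$ and contains $0,1$, and that the weakened axioms of $E_{\mathsf{wcr},\bot}$ specialise on $R$ (where $0\cdot x=0$) to the genuine commutative-ring-with-$1$ axioms.

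The crux is to identify the division of $A$ with the inverse-based common division of $R$. Since $\frac{a}{b}=a\cdot\frac{1}{b}$ by~\eqref{fracDivMult}, everything reduces to showing that $\frac{1}{b}$ is the common inverse of $b\in R$: the inverse $c$ when $bc=1$, and $\bot$ otherwise. The key computation is that $\frac{1}{b}\in R$ implies $b\cdot\frac{1}{b}=1$: writing $d=\frac{1}{b}\in R$, the sum rule gives $\frac{1}{1}+\frac{-b}{b}=\frac{0}{b}=0\cdot\frac{1}{b}=0\cdot d=0$ (the last step by $\mathsf{NVL}$, as $d\ne\bot$), and since $\frac{-b}{b}=-(bd)$ holds in the ring $R$ this rearranges to $bd=1$. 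This immediately settles the non-invertible case: if $b$ has no inverse then $\frac{1}{b}\notin R$, i.e. $\frac{1}{b}=\bot$. For the invertible case $bc=1$ I would rule out $\frac{1}{b}=\bot$ using non-degeneracy: by~\eqref{fracDivMult}, $1=\frac{bc}{bc}=\frac{b}{b}\cdot\frac{c}{c}$, so if $\frac{1}{b}=\bot$ then $\frac{b}{b}=b\cdot\bot=\bot$ and hence $1=\bot$, contradicting $0\ne\bot$; therefore $\frac{1}{b}\in R$ and the key computation forces $b\cdot\frac{1}{b}=1$, so by uniqueness of inverses (Lemma~\ref{dividers_are_well-defined}) $\frac{1}{b}=c$.

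Combining these facts, $\mathsf{Pdt}_\bot(A)$ is precisely the ring $R$ equipped with the partial inverse-based division, i.e. $R_\div$, and Proposition~\ref{Enl_Pdt} gives $A=\mathsf{Enl}_\bot(R_\div)\in\mathsf{CR}_{\div,\bot}$, completing the converse inclusion. I expect the division-matching step of the previous paragraph to be the main obstacle, since it is the only place where all three ingredients---the sum rule, $\mathsf{NVL}$, and the product rule~\eqref{fracDivMult} together with non-degeneracy---must be combined, and where the full ring structure of $R$ (rather than only the weakened axioms of $E_{\mathsf{wcr},\bot}$) is genuinely used.
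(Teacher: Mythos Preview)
Your proof is correct and follows the same overall strategy as the paper's own proof: recover the ring $R$ as the set of non-$\bot$ elements (using $\mathsf{NVL}$), and then verify that the division operation of $A$ agrees with inverse based common division on $R$. The paper's key computation is $\frac{p}{p}=1+\frac{0}{p}=1+0\cdot\frac{1}{p}=1+0=1$ whenever $\frac{1}{p}\neq\bot$, which is the same idea as your computation $1+\frac{-b}{b}=\frac{0}{b}=0\cdot\frac{1}{b}=0$; both hinge on applying $\mathsf{NVL}$ to $\frac{1}{b}$ and then the sum rule.

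Your write-up is considerably more thorough than the paper's in two respects. First, you explicitly verify absorption of $\bot$ for $\cdot$, $-$, and $\div$ via $\mathsf{NVL}$; the paper simply asserts that ``$M$ is an expansion of $\mathsf{Enl}_\bot(R)$'' without comment (absorption is in fact already derivable from $E_{\mathsf{ftc-cm}}$ alone, as shown in~\cite{BergstraP2015}, so this step can be shortened). Second, and more substantively, you close a gap the paper leaves open: the paper only shows that \emph{if} $\frac{1}{p}$ is non-$\bot$ then it is the ring inverse of $p$, whereas you also argue the converse direction---that if $bc=1$ in $R$ then $\frac{1}{b}$ cannot equal $\bot$---via $1=\frac{bc}{bc}=\frac{b}{b}\cdot\frac{c}{c}$ and non-degeneracy. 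Without this, one has not yet excluded a model in which some invertible $b$ nevertheless has $\frac{1}{b}=\bot$.
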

\label{axiom_rings_with_IBCD}

\begin{proof} That a commutative ring with inverse based common division satisfies $\mathsf{NVL}$ is immediate because in a ring all elements $x$ satisfy $0 \cdot x = 0$.
For the other direction, suppose $M \models E_{\mathsf{ftc-cm}} + \mathsf{NVL}$. 
The non-$\bot$ elements of $M$ satisfy $0\cdot x = 0$ so that $M$ is an expansion of $\mathsf{Enl}_\bot(R)$ for a ring $R$. 
Now if $\frac{1}{p}$ is non-$\bot$ for $p \in M$ then $0 \cdot \frac{1}{p} = 0$ and 
we find that 
$$\frac{p}{p} = 1 + \frac{0}{p}= 1 + 0 \cdot \frac{1}{p}= 1 + 0 = 1$$
 so that $\frac{1}{p}$ is defined with the same value in the enlargement of $R$ to a ring with inverse based common inverse. 
\end{proof}

We will now consider a conditional equation -- taken from~\cite{BergstraP2015}:

\begin{definition}
The following conditional equation is termed the {\em Additional Value Law}
$$\mathsf{AVL}: \ \  \frac{1}{x} = \bot \to 0 \cdot x = x.$$
\end{definition}

\begin{proposition}\label{axiom _meadows_relative_to_rings}
Within the class $\mathsf{CR}_{\div, \bot}$ of commutative rings with inverse based common division, the common meadows are precisely those which satisfy the conditional equation 
$\mathsf{AVL}$.
\end{proposition}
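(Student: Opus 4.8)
The plan is to reduce the statement to the purely ring-theoretic fact that, for a commutative ring $R$ with $1$ and at least two elements, $R$ is a field if and only if every non-invertible element of $R$ equals $0$. Fix $M$ in $\mathsf{CR}_{\div,\bot}$, so that $M = \mathsf{Enl}_\bot(R_{\div})$ for some such $R$; its non-$\bot$ elements form $R$, and being a common meadow is, by definition, the same as $R$ being a field. Throughout I would use that in the genuine ring $R$ the law $0 \cdot x = 0$ holds for every $x \in R$ (a standard consequence of the full ring axioms), even though this law may fail in $M$ at $\bot$.

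Next I would interpret $\mathsf{AVL}$ semantically in $M$. For $x = \bot$ both sides of the conclusion evaluate to $\bot$ by absorption, so $\mathsf{AVL}$ imposes no constraint there. For $x \in R$, the premise $\frac{1}{x} = \bot$ holds exactly when $x$ has no multiplicative inverse in $R$ (by the definition of inverse based common division), while the conclusion $0 \cdot x = x$ becomes $0 = x$ once $0 \cdot x = 0$ is substituted. Hence $M \models \mathsf{AVL}$ if and only if every non-invertible element of $R$ is $0$. The two directions then follow directly. If $M$ is a common meadow then $R$ is a field, so every non-zero element is invertible, the only non-invertible element is $0$, and $\mathsf{AVL}$ holds. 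Conversely, if $M \models \mathsf{AVL}$ and $x \in R$ with $x \neq 0$, then $x$ cannot be non-invertible, since $\frac{1}{x} = \bot$ would force $x = 0$ by $\mathsf{AVL}$; thus $x$ is invertible, and so every non-zero element of $R$ is invertible.

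The only point needing care is the degenerate case. The equivalence ``$R$ is a field'' $\iff$ ``every non-zero element of $R$ is invertible'' requires $0 \neq 1$ in $R$, so as to exclude the one-element ring, in which $0 = 1$ is vacuously invertible yet $R$ is not a field. This is guaranteed here because the enlargement $\mathsf{Enl}_\bot(-)$ is only defined on partial algebras with at least two elements, so any $R$ underlying a member of $\mathsf{CR}_{\div,\bot}$ has $0 \neq 1$. I expect this observation, together with correctly reading off the premise and the conclusion of $\mathsf{AVL}$ at the absorptive element $\bot$, to be the main, if modest, obstacle; the remainder is a routine unwinding of the definitions of inverse based common division and of common meadow.
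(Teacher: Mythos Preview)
Your proof is correct and follows essentially the same route as the paper: both arguments reduce $\mathsf{AVL}$ on $M=\mathsf{Enl}_\bot(R_\div)$ to the ring-level statement that every non-invertible element of $R$ is $0$, using that $0\cdot x=0$ in $R$ and that the $x=\bot$ instance is trivial by absorption. The paper phrases the two directions separately (direct check for common meadows, contrapositive for the converse) while you package the same observation as a single biconditional, and you are slightly more explicit about the degenerate one-element case; substantively the arguments coincide.
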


\begin{proof}
$\mathsf{AVL}$ is satisfied in any common meadow because $ \frac{1}{x} = \bot$ is satisfied only by $x=0$ and $x=\bot$, 
and in both cases $0\cdot x = x$. 
On the other hand, contrapositively, let $R$ be a ring which is not a field, then for some non-zero $a \in R$ it 
must be that there is no proper inverse for it, i.e., there is no $c \in R$ with $a \cdot c = 1$. Thus, in 
$\mathsf{Enl}_\bot(R_{\div})$ it is the case that $\frac{1}{a} = \bot$. However, as $a \in R$, $0 \cdot a = 0$,  and $0 \cdot a \neq a$ as $a \neq 0$ so $\mathsf{AVL}$ is not satisfied.
\end{proof}

As an immediate consequence we find:

\begin{proposition}
\label{incomplete_conditional_theory} 
The axiom system $E_{\mathsf{ftc-cm}}$ for common meadows is {\em not} 
complete for the conditional equational theory of common meadows.
\end{proposition}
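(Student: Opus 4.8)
The plan is to witness incompleteness with a single conditional equation that is valid throughout $\mathsf{CM}$ but not derivable from $E_{\mathsf{ftc-cm}}$, and the natural candidate is $\mathsf{AVL}$ itself. Unwinding Definition~\ref{def:completeness}, completeness of $E_{\mathsf{ftc-cm}}$ for the conditional equational theory of $\mathsf{CM}$ would require, for every $c \in CEqn(\Sigma_{r,\div, \bot})$, the implication $c \in CEqnThy(\mathsf{CM}) \Rightarrow E_{\mathsf{ftc-cm}} \vdash c$. So it suffices to produce one conditional equation $c$ that lies in $CEqnThy(\mathsf{CM})$ yet fails $E_{\mathsf{ftc-cm}} \vdash c$.

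First I would record that $\mathsf{AVL} \in CEqnThy(\mathsf{CM})$, i.e. $\mathsf{CM} \models \mathsf{AVL}$. This is exactly the first half of Proposition~\ref{axiom _meadows_relative_to_rings}: in any common meadow $\frac{1}{x} = \bot$ forces $x = 0$ or $x = \bot$, and in either case $0 \cdot x = x$, so the conditional $\mathsf{AVL}$ holds vacuously-or-truly at every point. Thus the hypothesis side of the target implication is satisfied by $c = \mathsf{AVL}$.

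Next I would show $E_{\mathsf{ftc-cm}} \not\vdash \mathsf{AVL}$ by a counter-model (soundness) argument, rather than by reasoning about derivations. Since $\mathsf{GCM} = \mathsf{Alg}(\Sigma_{r,\div, \bot}, E_{\mathsf{ftc-cm}})$ is a variety, conditional equational logic is sound and complete for it, so $E_{\mathsf{ftc-cm}} \vdash \mathsf{AVL}$ is equivalent to $\mathsf{GCM} \models \mathsf{AVL}$; hence it is enough to exhibit one model of $E_{\mathsf{ftc-cm}}$ refuting $\mathsf{AVL}$. By Proposition~\ref{RingSound}, every commutative ring with inverse based common division is a model of $E_{\mathsf{ftc-cm}}$, and by the contrapositive direction in the proof of Proposition~\ref{axiom _meadows_relative_to_rings}, any such ring that is not a field falsifies $\mathsf{AVL}$. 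Concretely I would take $\mathsf{Enl}_\bot(\Int_{\div}) \in \mathsf{CR}_{\div, \bot} \subseteq \mathsf{GCM}$: here $a = 2$ is non-zero and has no proper inverse, so $\frac{1}{2} = \bot$ while $0 \cdot 2 = 0 \neq 2$, and $\mathsf{AVL}$ fails. Therefore $\mathsf{GCM} \not\models \mathsf{AVL}$ and so $E_{\mathsf{ftc-cm}} \not\vdash \mathsf{AVL}$.

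Putting the two steps together, $\mathsf{AVL}$ is a conditional equation belonging to $CEqnThy(\mathsf{CM})$ that is unprovable from $E_{\mathsf{ftc-cm}}$, which is precisely the failure of completeness in the sense of Definition~\ref{def:completeness}. I expect no genuine obstacle: the whole argument is assembled from Propositions~\ref{RingSound} and~\ref{axiom _meadows_relative_to_rings}, and the only point requiring care is to invoke the correct direction of the soundness/completeness correspondence for the variety $\mathsf{GCM}$, so that the unprovability claim is reduced to the existence of a single counter-model (namely $\mathsf{Enl}_\bot(\Int_{\div})$) rather than to an inspection of all possible derivations.
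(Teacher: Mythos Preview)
Your proposal is correct and matches the paper's proof almost exactly: both exhibit $\mathsf{AVL}$ as a conditional equation valid in all common meadows but falsified in $\mathsf{Enl}_\bot(R_{\div})$ for a ring $R$ that is not a field, invoking Propositions~\ref{RingSound} and~\ref{axiom _meadows_relative_to_rings}. The only cosmetic differences are that the paper leaves the choice of $R$ generic whereas you instantiate with $\Int$, and that you route through completeness of conditional equational logic over the variety $\mathsf{GCM}$ when in fact plain soundness already gives $E_{\mathsf{ftc-cm}} \not\vdash \mathsf{AVL}$ from the existence of a single counter-model.
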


\begin{proof} 
$\mathsf{AVL}$ is true in all common meadows but it is invalid in any generalised
common meadow $M$ obtained by enlarging a ring that is not a field with $\bot$ and inverse 
based common division. As $M  \models E_{\mathsf{ftc-cm}}$ (by Proposition~\ref{RingSound}) 
and $M  \not\models \mathsf{AVL}$, by soundness, we find that $E_{\mathsf{ftc-cm}} \not \vdash \mathsf{AVL}$. 
\end{proof} 
The result of Proposition~\ref{incomplete_conditional_theory} was obtained in~\cite{BergstraT2022CJ} with a different proof.
Note, too, that the conditional equation 
$$x \cdot x = 0 \to x = 0$$ 
is valid in all fields and in all common meadows. But it is not valid in all rings and for that reason also not valid in all rings enlarged with inverse based common division. 

Given the significance of $\mathsf{AVL}$ (Proposition \ref{axiom _meadows_relative_to_rings}), and its independence (Proposition \ref{incomplete_conditional_theory}), we examine $E_{\mathsf{ftc-cm}} + \mathsf{AVL}$:

\begin{proposition}
$E_{\mathsf{ftc-cm}} + \mathsf{AVL} \vdash x \cdot x = 0 \to x = 0$.
\end{proposition}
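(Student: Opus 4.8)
The plan is to turn the hypothesis $x\cdot x = 0$ into the single fact $\frac{1}{x}=\bot$, after which one application of $\mathsf{AVL}$ does almost all of the work. Indeed, once $\frac{1}{x}=\bot$ is available, $\mathsf{AVL}$ yields $0\cdot x = x$; on the other hand the axiom $0\cdot(x\cdot x)=0\cdot x$ together with the hypothesis gives $0\cdot x = 0\cdot(x\cdot x)=0\cdot 0$, and $0\cdot 0 = 0$ follows from $\mathsf{AVL}$ applied to the instance $x:=0$ (since $\frac{1}{0}=\bot$ is an axiom). Combining these two computations of $0\cdot x$ gives $x = 0\cdot x = 0$, which is the conclusion. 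So the whole proof reduces to producing $\frac{1}{x}=\bot$, with no case distinction on whether $x$ is invertible.

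The crux is therefore to derive $\frac{1}{x}=\bot$ from $x\cdot x=0$. Here I would \emph{not} argue through the identity $\frac{1}{x}\cdot\frac{1}{x}=\frac{1}{x\cdot x}=\frac{1}{0}=\bot$: although this chain is immediate, knowing that a square equals $\bot$ does not by itself force the base to equal $\bot$ in an arbitrary model of $E_{\mathsf{ftc-cm}}+\mathsf{AVL}$, so this route stalls. Instead I would use the equational identity $\frac{1}{x}=\frac{x}{x\cdot x}$, in which $x\cdot x$ occurs only in the denominator. This identity holds in every common meadow (check it on the proper units, where blunt cancellation applies, and separately on $0$ and on $\bot$, where both sides are $\bot$), so by the equational completeness of $E_{\mathsf{ftc-cm}}$ for $\mathsf{CM}$ \cite{BergstraT2023arxiv} it is derivable from $E_{\mathsf{ftc-cm}}$; alternatively one obtains it directly by fracterm flattening (Theorem \ref{FF}). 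Likewise $\frac{x}{0}=\bot$ is an equation valid throughout $\mathsf{CM}$, hence derivable. Substituting the hypothesis $x\cdot x=0$ into the denominator then gives $\frac{1}{x}=\frac{x}{x\cdot x}=\frac{x}{0}=\bot$, as required.

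The main obstacle is precisely this choice of representation for $\frac{1}{x}$: the step $\frac{x}{x\cdot x}=\frac{1}{x}$ is \emph{not} valid by naive cancellation, since $\frac{x}{x}\neq 1$ in general once $\bot$ is present, and so it must be justified semantically through completeness (or syntactically through flattening) rather than by a one-line rewrite. The merit of pushing $x\cdot x$ into the denominator is that substituting $x\cdot x=0$ lands on $\frac{x}{0}$, which is provably $\bot$ outright, whereas the obvious form $\frac{1}{x}\cdot\frac{1}{x}$ only tells us that a square is $\bot$ and leaves us stuck. Once past this point everything is a short chain of substitutions using $\mathsf{AVL}$, the axiom $0\cdot(x\cdot x)=0\cdot x$, and the two auxiliary equations, so I expect no further difficulty.
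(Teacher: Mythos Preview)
Your proof is correct and follows essentially the same route as the paper's: both derive $\frac{1}{x}=\bot$ from the identity $\frac{1}{x}=\frac{x}{x\cdot x}$ (substituting the hypothesis into the denominator), apply $\mathsf{AVL}$ to obtain $0\cdot x=x$, and combine this with $0\cdot x=0$ to conclude. The only cosmetic difference is that for the step $0\cdot x=0$ the paper cites the conditional $0\cdot x\cdot x=0\to 0\cdot x=0$ from~\cite{BergstraP2015}, whereas you unfold it directly via axiom~(10) and pick up $0\cdot 0=0$ through a second instance of $\mathsf{AVL}$.
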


\begin{proof} From~\cite{BergstraP2015}, we have $E_{\mathsf{ftc-cm}} \vdash 0 \cdot x \cdot x  = 0  \to 0 \cdot x = 0$. Moreover,
 $E_{\mathsf{ftc-cm}} \vdash \frac{1}{x} = \frac{x}{x \cdot x}$. Thus from $x \cdot x = 0$ we find $\frac{1} {x} = \frac{x}{x \cdot x} = 
 \frac{x}{0} = x \cdot \bot = \bot$ so that with $\mathsf{AVL}$ we have $0 \cdot x = x$. Taking both facts together we find $x = 0 \cdot x = 0$.
\end{proof}

We will return to these issues later. In contrast to Proposition \ref{incomplete_conditional_theory}, we have shown previously:

\begin{theorem}
\label{ComplForRings}
The equational axiom system $E_{\mathsf{ftc-cm}}$ for common meadows is complete for the equational theory of  commutative rings with inverse based common division. For every equation $e$,
$$\displaystyle E_{\mathsf{ftc-cm}} \vdash e \  \textrm{if, and only if,}  \ \mathsf{CR}_{\div, \bot} \models e.$$
 
\end{theorem}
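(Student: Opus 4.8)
The plan is to avoid any new model construction and instead deduce the result by squeezing equational theories between two classes whose theories are already pinned down. Observe first that the biconditional to be proved is just the single assertion $EqnThy(\mathsf{CR}_{\div,\bot}) = \{e : E_{\mathsf{ftc-cm}} \vdash e\}$. Since $\mathsf{GCM} = \mathsf{Alg}(\Sigma_{r,\div,\bot}, E_{\mathsf{ftc-cm}})$ is by definition the variety axiomatised by $E_{\mathsf{ftc-cm}}$, the fact that $E_{\mathsf{ftc-cm}}$ finitely axiomatises the equational theory of $\mathsf{GCM}$ gives $\{e : E_{\mathsf{ftc-cm}} \vdash e\} = EqnThy(\mathsf{GCM})$. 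So it suffices to establish $EqnThy(\mathsf{CR}_{\div,\bot}) = EqnThy(\mathsf{GCM})$.

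Next I would record the two class inclusions that drive the squeeze. Every common meadow is a commutative ring with inverse based common division ($\mathsf{CM} \subsetneqq \mathsf{CR}_{\div,\bot}$), and by Proposition \ref{RingSound} every member of $\mathsf{CR}_{\div,\bot}$ satisfies $E_{\mathsf{ftc-cm}}$ and hence lies in $\mathsf{GCM}$, giving
$$\mathsf{CM} \subseteq \mathsf{CR}_{\div,\bot} \subseteq \mathsf{GCM}.$$
Applying the antitone behaviour of equational theories under class inclusion (Lemma \ref{subclass_lemma}) reverses this to
$$EqnThy(\mathsf{GCM}) \subseteq EqnThy(\mathsf{CR}_{\div,\bot}) \subseteq EqnThy(\mathsf{CM}).$$
Now I would invoke the imported fact, cited from \cite{BergstraT2023arxiv}, that $E_{\mathsf{ftc-cm}}$ is a complete equational base for common meadows, i.e. $EqnThy(\mathsf{CM}) = \{e : E_{\mathsf{ftc-cm}} \vdash e\} = EqnThy(\mathsf{GCM})$. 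The two outer theories in the displayed chain therefore coincide, forcing the middle one to equal them as well, so $EqnThy(\mathsf{CR}_{\div,\bot}) = EqnThy(\mathsf{GCM}) = \{e : E_{\mathsf{ftc-cm}} \vdash e\}$, which is exactly the claimed biconditional.

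I expect no genuinely hard internal step: all the mathematical weight is carried by the cited completeness result for common meadows, and the present theorem falls out of it as a corollary through the inclusions $\mathsf{CM} \subseteq \mathsf{CR}_{\div,\bot} \subseteq \mathsf{GCM}$. The one point to watch is orientation, since passing to equational theories reverses inclusions: the smallest class $\mathsf{CM}$ yields the largest theory, and the squeeze collapses only because that theory already matches the theory of $\mathsf{GCM}$. Should a self-contained argument be wanted that does not cite the common-meadow base result, I would instead use fracterm flattening (Theorem \ref{FF}) to reduce an arbitrary equation to a comparison of flat fracterms $\frac{p}{q} = \frac{r}{s}$ with $p,q,r,s$ over $\Sigma_r$, and then analyse validity of such a comparison uniformly across the enlargements $\mathsf{Enl}_\bot(R_{\div})$; but this is strictly more work and unnecessary given the available base result.
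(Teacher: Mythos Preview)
Your proposal is correct and follows essentially the same approach as the paper: both arguments use the inclusion $\mathsf{CM} \subseteq \mathsf{CR}_{\div,\bot}$ together with Lemma~\ref{subclass_lemma} and then invoke the completeness of $E_{\mathsf{ftc-cm}}$ for common meadows from~\cite{BergstraT2023arxiv}. Your version is slightly more explicit in presenting it as a three-class squeeze $\mathsf{CM} \subseteq \mathsf{CR}_{\div,\bot} \subseteq \mathsf{GCM}$, but the underlying logic is the same.
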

\begin{proof} Because all common meadows are commutative rings with common division, the equational theory of commutative rings with inverse based common division is contained in the equational theory of common meadows -- recall Lemma \ref{subclass_lemma}.

Therefore, completeness of $E_{\mathsf{ftc-cm}}$ for the equational theory of commutative rings with inverse based common division follows immediately from the completeness of $E_{\mathsf{ftc-cm}}$ for the equational theory of common meadows, which has been demonstrated in~\cite{BergstraT2023arxiv}.
\end{proof}

In addition, Proposition \ref{incomplete_conditional_theory} draws attention to the open Problem \ref{MainProblem} of the Introduction.


\subsection{Axiomatising the class of common meadows}
The following facts are related to general results  in~\cite{BergstraT2022TM}, recalled here as Theorem \ref{NonQ-Variety}. 
More specifically, the following fact follows from  Theorem \ref{NonQ-Variety} about 
$\bot$-enlargements.

\begin{proposition}
The class $\mathsf{CM}$ of common meadows is not a quasivariety. The class $\mathsf{CR}_{\div, \bot}$ of commutative rings with inverse based common division is not a quasivariety.
\end{proposition}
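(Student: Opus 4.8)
The plan is to show that neither class is closed under direct products, which suffices because every quasivariety is closed under direct products (this is part of Mal'cev's characterisation of quasivarieties by closure under isomorphisms, subalgebras, products and ultraproducts). This is exactly the mechanism underlying Theorem \ref{NonQ-Variety}, now applied with the total binary operation $f = {\cdot}$ (equally $f = {+}$), which on these algebras never returns $\bot$ when both arguments are non-$\bot$.

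The key structural property I would isolate first is the following. In any common meadow $\mathsf{Enl}_\bot(F_\div)$ the non-$\bot$ elements are exactly the field elements of $F$, and similarly for $\mathsf{CR}_{\div,\bot}$ the non-$\bot$ elements are the underlying ring; in both cases this set is closed under multiplication. Hence every member $A$ of $\mathsf{CM}$, respectively of $\mathsf{CR}_{\div,\bot}$, satisfies
\begin{align*}
x \neq \bot \ \wedge\ y \neq \bot \ \Longrightarrow\ x \cdot y \neq \bot. \tag{$\star$}
\end{align*}
I would also record that $A$ has a \emph{unique} absorptive element, namely $\bot$, by Lemma \ref{uniqueness}.

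Next I would exhibit the product witnessing non-closure. Fix a field $F$ with $0 \neq 1$ (say $F = \rat$) and set $M = \mathsf{Enl}_\bot(F_\div) \in \mathsf{CM}$; for the ring case take $R = \Int$ and $M = \mathsf{Enl}_\bot(R_\div) \in \mathsf{CR}_{\div,\bot}$. In the product $M \times M$ the constant $\bot$ is interpreted as the pair $(\bot,\bot)$, so the two elements $(\bot,1)$ and $(1,\bot)$ are both distinct from $\bot$, yet
\[
(\bot,1)\cdot(1,\bot) = (\bot\cdot 1,\ 1\cdot\bot) = (\bot,\bot) = \bot .
\]
Thus $M \times M$ violates $(\star)$, so it cannot be a common meadow (respectively a commutative ring with inverse based common division). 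Since $M \times M$ is a direct product of members of the class but does not itself belong to the class, the class is not closed under products and therefore is not a quasivariety.

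Finally, I would remark that $M \times M$ nonetheless satisfies all the equational axioms $E_{\mathsf{ftc-cm}}$, being a product of models of a variety, so it is a generalised common meadow; this makes precise that the obstruction is the collapse of the two ``partially defined'' elements $(\bot,1)$ and $(1,\bot)$ onto the absorptive element, which is neither a genuine $\bot$ nor a genuine field/ring element. The one point requiring care is not the computation but the justification of $(\star)$ as a property \emph{forced} on every member of the class, together with the correct identification of the absorptive element of the product as $(\bot,\bot)$; once these are in place the product argument — which is the engine of Theorem \ref{NonQ-Variety} — settles both classes simultaneously.
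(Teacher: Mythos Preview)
Your argument is correct. You show non-closure under direct products by exhibiting $M\times M$ in which $(\bot,1)\cdot(1,\bot)=(\bot,\bot)=\bot$ with both factors non-$\bot$, contradicting the structural property $(\star)$ enjoyed by every $\mathsf{Enl}_\bot(F_\div)$ or $\mathsf{Enl}_\bot(R_\div)$. Since quasivarieties are closed under products, neither class can be one. This is exactly the mechanism of Theorem~\ref{NonQ-Variety}, and indeed the paper remarks that the proposition follows from that theorem.

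The paper, however, chooses to give a different \emph{direct} proof via initial algebras rather than products. It assumes a putative conditional axiomatisation $E_{ce}$, adjoins fresh constants $c,d$ with the extra axiom $c+d=\bot$, and argues that in the initial algebra of $E'=E_{ce}\cup\{c+d=\bot\}$ one has $c\neq\bot$ and $d\neq\bot$ (because separate models witness $c=0,d=\bot$ and $c=\bot,d=0$), yet $c+d=\bot$; this is impossible in any structure of the form $\mathsf{Enl}_\bot(R_\div)$. Your product argument and the paper's initial-algebra argument are really two sides of the same coin: the initial algebra of $E'$ embeds into a product of the two witnessing models, and in that product $c=(0,\bot)$, $d=(\bot,0)$ play precisely the role of your $(\bot,1)$ and $(1,\bot)$. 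Your route is the more elementary and self-contained one; the paper's route avoids appealing to the Mal'cev closure characterisation and instead stays within the initial-algebra idiom used elsewhere in the paper.
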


Here is a direct proof using a technique with initial algebras:
\begin{proof} 
Consider the second statement.
Suppose that conditional equational theory $E_{ce}$ has precisely the commutative rings with inverse based common division as its models, i.e., $\mathsf{CR}_{\div, \bot} = \mathsf{Alg}(\Sigma, E_{ce})$.

We introduce fresh constants $c$ and $d$. We consider an initial algebra of $E' = E \cup \{ c+ d = \bot\}$. Now there is a 
commutative ring with inverse based common division $M_1$ that satisfies $E'$ by taking $c= 0$ and $d = \bot$ from which it follows  that $E' \not \vdash c = \bot$.
Similarly, there is a model $M_2$ of $E'$ in which $d=0$ and $c = \bot$  so that $E' \not \vdash d= \bot$. It follows that, in an initial algebra of $E'$, $c+d = \bot$ while $c \neq \bot$ and $d\neq \bot$. However, this latter situation cannot occur in any structure of the form $\mathsf{Enl}_\bot(R_{\div})$ because $c \in R$ and $d \in R$, whence $c+d \in R$, thereby contradicting $c+ d = \bot$. 
\end{proof}

This basic fact may be contrasted with Theorem~\ref{ComplForRings} above. Indeed, while $E_{\mathsf{ftc-cm}}$ completely axiomatizes all equations true in all common meadows it fails, i.e.,  it is too weak, to define the class of common meadows. Common meadows can be defined with a first order theory but not with equations, or with conditional equations alone.

\begin{proposition}
There is a consistent conditional equational extension $T$ of $E_{\mathsf{ftc-cm}}$ which has no commutative ring with inverse based common division as a model.
\end{proposition}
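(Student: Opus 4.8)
The plan is to exploit the characterisation of $\mathsf{CR}_{\div, \bot}$ as exactly the models of $E_{\mathsf{ftc-cm}}$ that satisfy $\mathsf{NVL}$, and to manufacture a $T$ \emph{every} model of which is forced to violate $\mathsf{NVL}$, while still admitting a generalised common meadow as a model. The key device is that in every member $\mathsf{Enl}_\bot(R_\div)$ of $\mathsf{CR}_{\div, \bot}$ the element $0$ lies in the underlying ring $R$ whereas $\bot$ does not, so the equation $0 = \bot$ is \emph{false} throughout $\mathsf{CR}_{\div, \bot}$. This lets a conditional equation with conclusion $0 = \bot$ act as a disequation-forcer over $\mathsf{CR}_{\div, \bot}$. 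Concretely, I would adjoin a fresh constant $c$ to the signature and set
$$T \;=\; E_{\mathsf{ftc-cm}} \;+\; \{\, 0 \cdot c = 0 \to 0 = \bot,\ \ c = \bot \to 0 = \bot \,\}.$$

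First I would show that $T$ has no model in $\mathsf{CR}_{\div, \bot}$. Suppose $N = \mathsf{Enl}_\bot(R_\div) \in \mathsf{CR}_{\div, \bot}$ satisfied $T$ under some interpretation of $c$. Since $0 \in R$ and $\bot \notin R$ we have $0 \neq \bot$ in $N$, so the conclusion $0 = \bot$ is false in $N$; hence both premises must be false, giving $0 \cdot c \neq 0$ and $c \neq \bot$. But $c \neq \bot$ means $c \in R$, and in a ring $0 \cdot c = 0$, contradicting $0 \cdot c \neq 0$. Equivalently, the two conditionals together force $\neg\mathsf{NVL}$ at $c$, which is impossible in $\mathsf{CR}_{\div, \bot}$. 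Thus no commutative ring with inverse based common division is a model of $T$.

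Next I would establish consistency by exhibiting a genuine (non-degenerate) model. Since $\mathsf{GCM} = \mathsf{Alg}(\Sigma_{r,\div,\bot}, E_{\mathsf{ftc-cm}})$ is a variety it is closed under products, so $Q \times Q$ is a generalised common meadow, where $Q = \mathsf{Enl}_\bot(\rat_\div)$ is the common meadow of rationals and $\bot$ is interpreted componentwise as $(\bot,\bot)$. Interpreting $c$ as the ``mixed'' element $(\bot, 0)$ gives $0 \cdot c = (\bot, 0) \neq (0,0) = 0$ and $c = (\bot,0) \neq (\bot,\bot) = \bot$, so both premises are false and the two conditional equations hold vacuously; moreover $0 \neq \bot$ here, so the model is non-trivial. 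This $(Q\times Q, c)$ witnesses the consistency of $T$.

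The main obstacle is the first design step rather than the verification: a conditional equational $T$ can never prove a disequation outright, yet excluding \emph{all} of $\mathsf{CR}_{\div, \bot}$ requires forcing the two disequations $0 \cdot c \neq 0$ and $c \neq \bot$ simultaneously (their conjunction is precisely $\neg\mathsf{NVL}$ at $c$). The resolution is to route both through the single equation $0 = \bot$, which, although perfectly satisfiable in some generalised common meadows, is false in every $\mathsf{CR}_{\div,\bot}$; and to use a \emph{constant} $c$ rather than a variable, since the universally quantified version $0\cdot x = 0 \to 0 = \bot$ would already be refuted by $x = 0$. A secondary point to get right is that a single conditional equation does not suffice: $0\cdot c = 0 \to 0 = \bot$ on its own is satisfied in $\mathsf{CR}_{\div,\bot}$ by taking $c = \bot$, so the second conditional $c = \bot \to 0 = \bot$ is needed exactly to block that escape.
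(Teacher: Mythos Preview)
Your argument is correct, and the overarching strategy matches the paper's: pick an element $e$ and add two conditional equations whose (absurd) conclusion forces, in any candidate model from $\mathsf{CR}_{\div,\bot}$, that simultaneously $0\cdot e\neq 0$ and $e\neq\bot$, i.e.\ a violation of $\mathsf{NVL}$ at $e$. You realise this with a \emph{fresh constant} $c$, the absurd conclusion $0=\bot$, and the explicit consistency witness $Q\times Q$ with $c=(\bot,0)$. The paper carries out the same plan \emph{without} extending the signature: it takes $e=\frac{1}{1+1}$, uses $0=1$ as the absurd conclusion, and certifies consistency via the initial algebra of $E_{\mathsf{ftc-cm}}$, in which neither $0\cdot\frac{1}{1+1}=0$ nor $\frac{1}{1+1}=\bot$ is provable (the first fails in the common meadow over $\mathbb{Z}_2$, the second in $\rat_\bot$). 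The paper's version is therefore a bit sharper---the extension $T$ lives over the original signature $\Sigma_{r,\div,\bot}$, which is the most natural reading of ``extension of $E_{\mathsf{ftc-cm}}$''---whereas your construction buys concreteness: the product model with $c=(\bot,0)$ makes the failure of $\mathsf{NVL}$ visible at a glance, without any appeal to initiality.
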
 

\begin{proof}
We set  
$$T = E_{\mathsf{ftc-cm}}~\cup \{0 \cdot \frac{1}{1+1} = 0 \to 0=1, \frac{1}{1+1} = \bot \to 0=1\}.$$
Let $M$ be an initial algebra of $E_{\mathsf{ftc-cm}}$. Then $0 \cdot \frac{1}{1+1} = 0$ is not valid in $M$ because $E_{\mathsf{ftc-cm}}$ allows a model (based on the prime field $\Int_{2}$ of characteristic $2$) in which 
$$0 \cdot \frac{1}{1+1} = 0 \cdot \bot = \bot \neq 0$$
and $\frac{1}{1+1} = \bot$ 
is not valid because that equation fails in a common meadow of rational numbers. It follows that $M$ satisfies $T$. 

Now consider any commutative ring with inverse based common division, say $M'= \mathsf{Enl}_\bot(R_\div)$ which is a model of $T$. In $M'$, 
$0 \cdot \frac{1}{1+1} \neq 0$ so that $ \frac{1}{1+1} \notin R$ and by definition of $\mathsf{Enl}_\bot(R_\div)$ 
it must be the case that $\frac{1}{1+1} = \bot$ in $M'$ which contradicts $M' \models T$.
\end{proof}


\section{GCM homomorphisms}\label{homomorphisms}
GCMs allow a rich world of homomophisms which has first been explored in general by Dias and Dinis in~\cite{DiasD2023}, 
and in the context of finite GCMs in~\cite{DiasD2024}, where however, homomorphisms  are not required to respect division. Below we will insist that a GCM-homomorphism respects division as well. We start our discussion of GCM homomorphisms with two examples.
\begin{example} 
{\em The ring homomorphism $\phi: \Int_4 \to \Int_2$ extends to a homomorphism from 
$\bar{\phi}: \mathsf{Enl}_\bot((\Int_4)_\div) \to \mathsf{Enl}_\bot((\Int_2)_\div)$.

However, the ring homomorphism $\phi: \Int_6 \to \Int_3$ does not extend to a homomorphism from 
$\mathsf{Enl}_\bot((\Int_6)_\div)$ to $\mathsf{Enl}_\bot((\Int_3)_\div)$. To see this notice that in $\mathsf{Enl}_\bot((\Int_6)_\div)$ the inverse of $2$ is $\bot$ while in 
$\mathsf{Enl}_\bot((\Int_3)_\div)$ the inverse of $2$ is $2$.}
\qed
\end{example}

\begin{proposition} 
\label{HomToRwCD}
Every generalised common meadow (GCM) allows a homomorphism $\rho$ to a commutative ring with inverse common division.
\end{proposition}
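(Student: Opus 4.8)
The plan is to exhibit, for a given GCM $M$, an explicit homomorphic image that lands in $\mathsf{CR}_{\div, \bot}$. Recall the characterisation of $\mathsf{CR}_{\div, \bot}$ as exactly the models of $E_{\mathsf{ftc-cm}} + \mathsf{NVL}$; since $\mathsf{GCM}$ is a variety, hence closed under homomorphic images, every quotient of $M$ is automatically a GCM, so only $\mathsf{NVL}$ must be arranged. I would take as target the enlargement $\mathsf{Enl}_\bot(N_\div)$ of the set of \emph{normal} elements $N = \{x \in M : 0 \cdot x = 0\}$, and define $\rho \colon M \to \mathsf{Enl}_\bot(N_\div)$ by $\rho(x) = x$ if $x \in N$ and $\rho(x) = \bot$ otherwise. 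The proof then splits into showing (a) $N$ is a commutative ring with $1$, and (b) $\rho$ is a homomorphism. Part (a) is quick: on $N$ the $\bot$-broken laws are repaired, since for $x \in N$ one has $x + (-x) = 0 \cdot x = 0$ and $x \cdot 0 = 0$; one checks $0,1 \in N$ and that $N$ is closed under $+,-,\cdot$ using $0\cdot(x+y) = 0\cdot x + 0\cdot y$ and $0\cdot(xy) = (0\cdot x)\cdot y$, the remaining axioms being inherited. Thus $\mathsf{Enl}_\bot(N_\div) \in \mathsf{CR}_{\div,\bot}$.

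The real content is part (b), and its engine is the set $I = 0 \cdot M$ of ``multiples of $0$''. Every $a \in I$ satisfies $0 \cdot a = a$, and using $0 \cdot 0 = 0$ together with the axiom $0\cdot(x\cdot x) = 0\cdot x$ one gets $a + a = a$ and $a \cdot a = a$; moreover $a \cdot 0 = 0 \cdot a = a$. Consequently $(I,+)$ is a join-semilattice with least element $0$ and greatest element $\bot$, while $(I,\cdot)$ is a meet-semilattice with greatest element $0$ (the multiplicative identity on $I$) and least element $\bot$. A point $x \in M$ is normal precisely when $0 \cdot x$ is the additive bottom $0$. Note the self-dual role of the constants: $0$ is at once the additive bottom and the multiplicative top of $I$, and $\bot$ is the reverse.

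Closure of $\rho$ can then be read off from this structure. For addition, $0\cdot(x+y) = 0\cdot x + 0\cdot y$; if $x,y$ are abnormal then $0\cdot x,\,0\cdot y$ are nonzero elements of $I$, and a join of elements of $I$ equals the bottom $0$ only when both joinands are $0$, so $x+y$ is again abnormal (the mixed case being immediate from $0\cdot x + 0 = 0\cdot x$). For multiplication the decisive move is $a\cdot y = a\cdot(0\cdot y)$ for $a \in I$ (using $a\cdot 0 = a$), which places both factors in $I$; a meet of elements of $I$ can equal the top $0$ only when both are $0$, so an abnormal factor forces an abnormal product. For division I would establish the equivalence $\tfrac1y$ normal $\iff y$ invertible in $N$: the identity $\tfrac{y}{y} = 1 + 0\cdot\tfrac1y$, a one-line consequence of the fraction-addition axiom~\eqref{fracDivSum}, gives one direction, and for the converse, writing $yq=1$ with $q \in N$ and expanding $\tfrac yy\cdot\tfrac qq = \tfrac{yq}{yq} = 1$ produces a sum of elements of $I$ equal to $0$ (after cancelling the summand $1 \in N$), which the join-semilattice forces to vanish; hence $0\cdot\tfrac1y = 0$, and $\tfrac1y = q$ by uniqueness of inverses (Lemma~\ref{dividers_are_well-defined}). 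These facts make $\rho$ respect $\div$ in every case, matching the inverse-based division of the target, and $\rho$ visibly preserves $0,1,\bot$ and $-$.

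The step I expect to be the genuine obstacle is the multiplicative, and hence the division, closure: a priori nothing forbids an abnormal element from multiplying to a normal one -- a zero-divisor phenomenon among the idempotents of $I$ -- which would destroy homomorphicity. What rules this out is exactly that $0$ is simultaneously the additive bottom and the multiplicative top of $I$; this self-dual behaviour of the constants is the heart of the argument and the reason $N$ behaves like a clean ``floor'' of $M$. Finally $\rho$ is surjective onto $\mathsf{Enl}_\bot(N_\div) \in \mathsf{CR}_{\div,\bot}$, which establishes the proposition; here one tacitly assumes $M$ is non-trivial, equivalently $0 \neq \bot$, as holds in any GCM with more than one element.
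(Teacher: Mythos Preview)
Your construction is exactly the paper's: the target is $\mathsf{Enl}_\bot(N_\div)$ with $N=M_0=\{x:0\cdot x=0\}$, and $\rho$ fixes $N$ and sends the complement to $\bot$; the verification that $\rho$ is a homomorphism is then a case analysis over the operations. So the approach is essentially the same.

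Where you differ is in organisation and in one substantive point. You package the case analysis through the idempotent set $I=0\cdot M$ and its join/meet structure, which is precisely the lattice $L_M$ the paper invokes later (following Dias--Dinis); this makes the ``abnormal in $\Rightarrow$ abnormal out'' closure for $+$ and $\cdot$ transparent, whereas the paper handles these by ad~hoc calculation and, for the case both summands abnormal, by citing an external result. More importantly, your treatment of division is strictly more complete: you prove the equivalence ``$\tfrac{1}{y}\in N\iff y$ invertible in $N$'' in both directions, and the converse direction is exactly what is needed to justify $\rho(a)^{-1}=\bot$ in the paper's subcase~(v1ii). The paper shows only that $a\cdot a^{-1}\neq 1$ there, which does not by itself rule out some \emph{other} inverse $c\in N$ for $a$; your semilattice argument (expanding $\tfrac{a}{a}\cdot\tfrac{c}{c}=1$ and forcing each $I$-summand to vanish) closes that gap. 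The trade-off is that your route leans on a few auxiliary identities ($0\cdot 0=0$, $a\cdot a=a$ and $a+a=a$ for $a\in I$) that you sketch rather than derive in full, but these are standard consequences of $E_{\mathsf{ftc\text{-}cm}}$ and cause no trouble.
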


\begin{proof}
Let $M$ be a GCM, and using the notation of~\cite{DiasD2023} we write $M_0 = \{a \in M \mid 0 \cdot a = 0\}$.   $M_0$ is a commutative ring to which we add inverse common division, $\mathsf{Enl}_{\bot}(M_0)$.

We define $\rho$ as follows: 
$$\rho(a) = a \ \textrm{for} \ a \in M_0  \  \textrm{and} \ \rho(a) = \bot \ \textrm{for} \ a \not\in M_0.$$ 
To see that $\rho$ is a homomorphism: 

\noindent (i) \textit{Constants}. as $0,1 \in M_0$  and $\bot \notin M_0$, $\rho$ preserves the constants. 

\noindent (ii) \textit{Addition}. To check $\rho(a) + \rho(b) = \rho (a+ b)$ we have four cases: 

(ii1) if $a \in M_0$ and $b \in M_0$ then $0 \cdot (a+b) = 0 \cdot a + 0 \cdot b = 0 + 0 = 0$ so that  $\rho(a) + \rho(b) = a + b = \rho (a+ b)$; 

(ii2) $a \in M_0$ and $b \notin M_0$ then $\rho(a) + \rho(b) = a + \bot = \bot =  \rho (a+ b)$, while assuming that $a + b \in M_0$ and if $a + b \in M_0$ we find
a contradiction as follows $0 \cdot b = 0 \cdot 0 + b = 0 \cdot a + 0 \cdot b = 0 \cdot (a + b) = 0$, i.e. $b \in M_0$; 

(ii3) $a \notin M_0$ and $b \notin M_0$, is dealt with as case (ii2); and finally 

(ii4) $a \notin M_0$ and $b \notin M_0$, from Proposition 2.3.1. (ce5) in~\cite{BergstraP2015} we find that $a+b \in M_0$ implies $a \in M_0$ so that we know that  $a+b \notin M_0$ whence $\rho(a) + \rho(b) = \bot + \bot = \bot =  \rho (a+ b)$.

\noindent (iii) \textit{Multiplication}. This case works the same as addition.

\noindent (iv) \textit{Additive inverse}. This is immediate.

\noindent (v) \textit{Multiplicative inverse}. This has two subcases: (v1) $a \in M_0$ and (v2) $a \notin M_0$.

For (v1), if $a \in M_0$ there are two subcases: (v1i) $ a^{-1} \in M_0$, which is immediate and (v1ii)  $ a^{-1} \notin M_0$. In case (v1ii) it cannot be the case that $ a \cdot a^{-1} = 1$ as otherwise 
$ 0 \cdot a^-1= (0 \cdot a) \cdot a^{-1} = 0 \cdot (a \cdot a^{-1}) = 0 \cdot 1 = 0$ thereby contradicting  $ a^{-1} \notin M_0$ so we find that $\rho (a^{-1}) = \bot$ by definition of $\rho$ in a common meadow and $\rho(a)^{-1} = \bot$ by definition of a vulgar meadow. In case 

For (v2), i.e., $a \notin M_0$ we find that $a^{-1} \notin M_0$ because otherwise $ 0 \cdot a= (0 \cdot a^{-1}) \cdot a = 0 \cdot (a^{-1} \cdot a) = 0 \cdot 1 = 0$.
\end{proof} 


\subsection{GCM homomorphisms and $\mathsf{AVL}$}\label{homomorphisms_and_AVL}

We first notice that $\mathsf{AVL}$ is a natural property for GCM's.

\begin{proposition}
\label{InitAlgAVL}
Let $I(\Sigma_{\mathsf{cm}},E_{\mathsf{ftc-cm}})$ be an initial algebra for the equations $E_{\mathsf{ftc-cm}}$. Then $I(\Sigma_{\mathsf{cm}},E_{\mathsf{ftc-cm}})$ satisfies 
$\mathsf{AVL} $. 
\end{proposition}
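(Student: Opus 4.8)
The plan is to exploit the way the initial algebra is built. In $I=I(\Sigma_{\mathsf{cm}},E_{\mathsf{ftc-cm}})$ an element is a congruence class of closed terms, and by the initial-algebra construction recalled in the excerpt a ground equation $e$ holds in $I$ precisely when $E_{\mathsf{ftc-cm}}\vdash e$. Since $\mathsf{AVL}$ has the single variable $x$ and every element of $I$ is the class of some closed term $t$, verifying $I\models\mathsf{AVL}$ reduces to checking, for each closed term $t$, the ground implication
$$E_{\mathsf{ftc-cm}}\vdash \frac{1}{t}=\bot \quad\Longrightarrow\quad E_{\mathsf{ftc-cm}}\vdash 0\cdot t=t .$$
Both $\frac{1}{t}=\bot$ and $0\cdot t=t$ are ground equations, so the whole statement is a question about provable ground equations.

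The key step is to pass from syntax to semantics using the completeness of $E_{\mathsf{ftc-cm}}$ for the equational theory of common meadows established in~\cite{BergstraT2023arxiv}, which gives $E_{\mathsf{ftc-cm}}\vdash e \iff \mathsf{CM}\models e$ for every equation $e$, ground ones included. Under this equivalence the implication to be proved becomes: if $\frac{1}{t}=\bot$ holds in \emph{every} common meadow, then $0\cdot t=t$ holds in \emph{every} common meadow. I would prove this model by model. Fix a common meadow $F_\bot$ and let $a$ be the value of $t$ in $F_\bot$; the hypothesis gives $\frac{1}{a}=\bot$ in $F_\bot$. Since each common meadow satisfies $\mathsf{AVL}$ (Proposition~\ref{axiom _meadows_relative_to_rings}), instantiating its variable at $a$ yields $0\cdot a=a$, that is $F_\bot\models 0\cdot t=t$. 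As $F_\bot$ was arbitrary, $\mathsf{CM}\models 0\cdot t=t$, and completeness returns $E_{\mathsf{ftc-cm}}\vdash 0\cdot t=t$, as required.

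The point to watch, and the main obstacle to a naive argument, is that $I$ is \emph{not} itself a common meadow: for instance $0\cdot\frac{1}{1+1}\neq 0$ holds in $I$, so one cannot conclude $I\models\mathsf{AVL}$ merely by citing that common meadows satisfy $\mathsf{AVL}$. The work is done by the reduction to ground instances together with completeness, and the per-model use of $\mathsf{AVL}$ is legitimate only because the hypothesis $\frac{1}{t}=\bot$ is assumed to hold \emph{uniformly} across all common meadows. A self-contained alternative avoiding the completeness theorem would first flatten $t$ to a flat fracterm $\frac{p}{q}$ with $p,q$ over $\Sigma_r$ (Theorem~\ref{FF}), then show that $E_{\mathsf{ftc-cm}}\vdash\frac{1}{t}=\bot$ holds exactly when $E_{\mathsf{ftc-cm}}\vdash p=0$ or $E_{\mathsf{ftc-cm}}\vdash q=0$, and finally derive $0\cdot t=t$ directly: if $p$ is provably $0$ then $t=\frac{0}{q}$ and $0\cdot t=\frac{0\cdot 0}{q}=\frac{0}{q}=t$, while if $q$ is provably $0$ then $t=\bot$ and $0\cdot t=\bot=t$. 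Establishing that dichotomy for when $\frac{1}{t}=\bot$ is provable is the crux of this second route, and it essentially repackages the same canonical-form reasoning that underlies the completeness theorem.
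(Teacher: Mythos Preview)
Your main argument is correct, but it takes a genuinely different route from the paper's. The paper proceeds more directly: it flattens $t$ to $\frac{n}{m}$ with $n,m$ closed $\Sigma_r$-terms (i.e., numerals), computes $\frac{1}{t}=\frac{m\cdot m}{n\cdot m}$, observes that $\frac{m\cdot m}{n\cdot m}=\bot$ forces $n\cdot m=0$, and then uses that numerals live in an integral domain to conclude $n=0$ or $m=0$, from which $0\cdot t=t$ follows by direct computation. This is essentially your second, ``self-contained'' alternative. Your primary route instead invokes the completeness theorem from~\cite{BergstraT2023arxiv} to convert the ground provability question into a semantic one across all common meadows, where $\mathsf{AVL}$ is already available. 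What you gain is conceptual transparency and the avoidance of any explicit term manipulation; what you lose is self-containment, since the completeness theorem is a substantially heavier result than the proposition at hand (and its proof already contains the flattening and numeral analysis you would otherwise carry out directly). The paper's approach is the more economical one in this sense, though yours is valid and nicely explains \emph{why} the initial algebra inherits $\mathsf{AVL}$ from the common meadows despite not being one itself.
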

\begin{proof}
To see this, we consider a closed term $t$ such that $\frac{1}{t} = \bot$ is provable from $E_{\mathsf{ftc-cm}}$.
Using fracterm flattening (Theorem \ref{FF}) we find that $t = \frac{n}{m}$ for appropriate numerals $n$ and $m$. 
Then $\frac{1}{t} = \frac{m \cdot m}{n \cdot m}$. Now $\frac{m \cdot m}{n \cdot m}= \bot$ implies $n \cdot m = 0$
from which it follows that either $n= 0$ or $m=0$ and in both cases $0 \cdot t = t$.
\end{proof}

Now, 
$\rho(M) \cong \mathsf{Enl}_\bot(\Int_\div)$ and so fails $\mathsf{AVL} $, so that $\mathsf{AVL}$ is not 
preserved under homomorphisms of the form $\rho$. 

We will now consider a different kind of homomorphism on GCM's.
Part (i) of the following observation  is made in~\cite{DiasD2023}.

\begin{lemma} 
\label{HomToGCM1}
Let $M$ be a GCM and let $a  \in M$ be such that $a \neq \bot$, $a \neq 0$ and $0 \cdot a = a$.
Let the function $\phi_a \colon M \to M$ be defined by 
$$\phi_a(b)= b + a,$$ and write $x =_a y$ for $\phi_a(x) = \phi_a(y)$:

(i) $\phi_a$ is a homomorphism, the image $\phi_{a}(M)$ of which is again a non-trivial GCM (i.e., $\phi_a(0) \neq \phi_a(1)$); 

(ii) if $M$ satisfies $\mathsf{AVL}$ then the homomorphic image $\phi_{a}(M)$ satisfies $\mathsf{AVL}$;

(iii) if $a = \frac{1}{b}$ then $a \cdot b =_a 1$.
\end{lemma}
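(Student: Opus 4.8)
The plan is to obtain $\phi_a(M)$ as a homomorphic image of $M$: if the relation $=_a$ is a congruence on $M$, then (up to the obvious isomorphism $\phi_a(M)\cong M/{=_a}$) $\phi_a$ is the associated quotient map, and since $M\models E_{\mathsf{ftc-cm}}$ and $\mathsf{GCM}=\mathsf{Alg}(\Sigma_{r,\div,\bot},E_{\mathsf{ftc-cm}})$ is a variety, hence closed under homomorphic images, the image is again a generalised common meadow, its constants being $\phi_a(0)=a$ (the new zero), $\phi_a(1)=1+a$ (the new one) and $\phi_a(\bot)=\bot$. In this way part (i) reduces to two things: the congruence property, and the non-triviality clause $\phi_a(0)\neq\phi_a(1)$.

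First I would record the arithmetic forced by $0\cdot a=a$. From $E_{\mathsf{wcr},\bot}$ one gets $a+a=(0+0)\cdot a=0\cdot a=a$, then $-a=-(0\cdot a)=(-0)\cdot a=0\cdot a=a$, then $a\cdot a=0\cdot(a\cdot a)=0\cdot a=a$, and $a\cdot 0=0\cdot a=a$. These immediately give the additive and negation parts of the homomorphism: $\phi_a(x+y)=(x+a)+(y+a)$ since $a+a=a$; $\phi_a(0)=a$ is the additive identity of the image since $(x+a)+a=x+a$; and $\phi_a(-x)=-(x+a)$ since $-a=a$. They also show the image $\{\,b+a : b\in M\,\}$ is closed under the operations. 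For non-triviality I would rule out $a=1+a$ directly: combined with $0\cdot a=a$ this quickly forces $a\in\{0,\bot\}$, against the hypotheses.

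The step I expect to be the main obstacle is the multiplicative (and, through it, the division) half of the congruence. Multiplying $x+a=x'+a$ by $y$ yields only $xy+a\cdot y=x'y+a\cdot y$, and the whole difficulty is to replace the spurious summand $a\cdot y$ by $a$, i.e.\ to see $xy+a=x'y+a$. The identities to exploit are the absorption law $a\cdot w+a=a\cdot w$ (from $a\cdot w+a=a\cdot w+a\cdot 0=a\cdot(w+0)=a\cdot w$), the reduction $a\cdot x=a\cdot(0\cdot x)$, and, for terms involving division, the axiom $\tfrac{x}{y+0\cdot z}=\tfrac{x+0\cdot z}{y}$, which governs how an error in a denominator is transferred to the numerator. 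Pinning down exactly how the idempotent $a$ behaves under multiplication so that this replacement is valid is the heart of the proof; once multiplication is handled, the division clause follows by reducing every term to a flat fracterm via Theorem~\ref{FF} and checking the two flat components.

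For (ii), since $\mathsf{AVL}$ is a conditional equation it is not preserved by homomorphic images for free, so I would transport it along the surjection $\phi_a$: an instance of $\mathsf{AVL}$ in $\phi_a(M)$ has the form $\frac{\phi_a(1)}{\phi_a(x)}=\bot\to \phi_a(0)\cdot\phi_a(x)=\phi_a(x)$; I would pull the hypothesis back to the corresponding instance over $M$, apply $M\models\mathsf{AVL}$, and push the conclusion forward, using that $\phi_a$ respects $\div$, $\cdot$ and the constants. For (iii) the computation is direct and is where $a=\tfrac1b$ is used: from $0\cdot a=a$ we get $a=0\cdot\tfrac1b=\tfrac0b$, whence by \eqref{fracDivSum}
$$1+a=\tfrac11+\tfrac0b=\tfrac{1\cdot b+1\cdot 0}{1\cdot b}=\tfrac bb\qquad\text{and}\qquad a\cdot b=\tfrac1b\cdot b=\tfrac bb .$$
Thus $a\cdot b=1+a$, so $a\cdot b+a=(1+a)+a=1+a$, that is $a\cdot b=_a 1$, as required.
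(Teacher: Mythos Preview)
Your overall plan coincides with the paper's: for (i) you reduce to showing that $=_a$ is a congruence and then invoke closure of the variety $\mathsf{GCM}$ under homomorphic images (the paper simply cites~\cite{DiasD2023} for this step); your treatments of (ii) and (iii) are essentially the paper's own arguments, just written out a little more fully.

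The one genuine gap is the multiplicative clause of the congruence, which you correctly flag as the crux but do not close. The missing identity is
\[
a\cdot y \;=\; a + 0\cdot y \qquad\text{for every }y,
\]
and it is \emph{not} a consequence of the weak-ring axioms alone: it needs the division axiom~(15). Instantiating (15) with $y=1$ gives $\frac{p}{1+0\cdot q}=p+0\cdot q$, and computing the left side via (12)--(13) together with $\frac{1}{1+0\cdot q}=1+0\cdot q$ yields the general law $p+0\cdot(pq)=p+0\cdot q$. Taking $p=a$, $q=y$ and using $0\cdot a=a$ then gives $a+ay=a+0\cdot y$; combined with your absorption law $ay+a=ay$ this is exactly $a\cdot y=a+0\cdot y$. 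With this in hand the replacement you were looking for is immediate:
\[
xy+ay \;=\; xy + a + 0\cdot y \;=\; (xy+0\cdot y)+a \;=\; xy+a,
\]
so $xy+ay=x'y+ay$ gives $xy+a=x'y+a$, and in fact $\phi_a(x)\cdot\phi_a(y)=(x+a)(y+a)=xy+a=\phi_a(xy)$, so $\phi_a$ is a genuine $\Sigma_{r,\div,\bot}$-homomorphism onto its image with the \emph{restricted} operations, not merely onto an abstract quotient. The division clause follows by the same identity (no detour through fracterm flattening is needed). Your non-triviality argument is fine; $a=1+a$ forces $a=\bot$ by Proposition~\ref{aux1}.
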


\begin{proof} 
\label{HomDandD}
For (i), we refer to~\cite{DiasD2023}, though the check that 
$p =_a q \iff p + 0 \cdot a = q + 0 \cdot a$ is a congruence is immediate for all cases. 

For (ii), consider $c \in M$ such that $\phi_a(\frac{1}{c}) = \bot$, it must be verified that 
$0 \cdot \phi_a(c) = \phi_a(c)$. Now, $\phi_a(\frac{1}{c}) = \bot$ implies that (working in $M$, 
and doing some of the work of the proof of (i)) 
$$\frac{1}{c} + 0 \cdot a = \frac{1}{c} + a = \bot + a = \bot$$ 
and
$$\frac{1}{c} + 0 \cdot a =  \frac{1 + 0 \cdot a}{c} = \frac{1}{c + 0 \cdot a} = \frac{1}{\phi_a(c)} $$ 
so that $\frac{1}{\phi_a(c)} = \bot$ and with $\mathsf{AVL}$ in $M$ one finds 
$0 \cdot \phi_a(c) = \phi_a(c) $ as required.

For (iii): $a \cdot b = b \cdot \frac{1}{b} = 1 + \frac{0}{b} = 1 + 0 \cdot a =_a 1$.

(iv) is immediate.
\end{proof}

Given a GCM $M$ with non-$\bot$ $a \in M$ such that $0 \neq 0 \cdot a$ one may define a congruence $='_a$ as follows:

(i) $b ='_a \bot \iff b + 0 \cdot a = b$, and 

(ii) $b ='_a c \iff (b = c\, \vee\, (b='_a \bot \wedge c ='_a \bot))$. 

It is a straightforward verification that $='_a$ is a congruence relation with $1 \not ='_a \bot$ and consequently $1 \not ='_a 0$. $M/\! ='_a $ is a non-trivial GCM. Note the natural  homomorphism: 

\begin{proposition} 
\label{CounterExAVL}
The natural homomorphism $\pi^\bot_{a}: M \to M/\! ='_a$ need not preserve the validity of $\mathsf{AVL}$.
\end{proposition}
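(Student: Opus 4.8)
The statement is negative, so the plan is to produce a single counterexample: a generalised common meadow $M$ that validates $\mathsf{AVL}$, together with an admissible parameter $a$ (non-$\bot$ with $0\cdot a \neq 0$), such that the quotient $M/{='_a}$ fails $\mathsf{AVL}$. The first task is to choose $M$. The admissibility requirement $0\cdot a\neq 0$ forces $M$ to contain a non-$\bot$ element outside its ring part $M_0$, so $M$ can be neither a common meadow nor a commutative ring with inverse based common division, since both satisfy $\mathsf{NVL}$; nevertheless I still need $M\models\mathsf{AVL}$. The initial algebra $M=I(\Sigma_{\mathsf{cm}},E_{\mathsf{ftc-cm}})$ meets both demands at once: it validates $\mathsf{AVL}$ by Proposition~\ref{InitAlgAVL}, and it contains the required intermediate elements $\frac{0}{n}$.

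I would take $a=\frac{1}{1+1}$, so that $0\cdot a=\frac{0}{1+1}$ and the collapsing condition reads $b='_a\bot \iff b+\frac{0}{1+1}=b$; I propose the witness $x=1+1$. First I would verify that $a$ is admissible and that $x$ works at the level of $M$: $a\neq\bot$ because $\frac{1}{1+1}=\bot$ fails in the rational common meadow, and $0\cdot a=\frac{0}{1+1}\neq 0$ because $\frac{0}{1+1}=0$ fails in the common meadow over $\Int_2$ (there $\frac{0}{1+1}=\frac{0}{0}=\bot$). Since these are equations and $I$ is initial, failure in even one common meadow (a generalised common meadow) separates the two terms in $I$. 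To defeat $\mathsf{AVL}$ at $x$ in the quotient I must then establish (a) $\frac{1}{x}='_a\bot$, i.e. $\frac{1}{1+1}+\frac{0}{1+1}=\frac{1}{1+1}$, and (b) that $0\cdot x=0$ and $x=1+1$ are not $='_a$-related.

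For (a), axiom~(\ref{fracDivSum}) rewrites $\frac{1}{1+1}+\frac{0}{1+1}$ to $\frac{1+1}{(1+1)(1+1)}$, so the requirement becomes the closed equation $\frac{1+1}{(1+1)(1+1)}=\frac{1}{1+1}$. This holds in every common meadow: over a field of characteristic $\neq 2$ one cancels the factor $1+1$, and over characteristic $2$ both sides are $\bot$. By completeness of $E_{\mathsf{ftc-cm}}$ for the equational theory of $\mathsf{CM}$ (the result of~\cite{BergstraT2023arxiv} used in Theorem~\ref{ComplForRings}), validity throughout $\mathsf{CM}$ gives $E_{\mathsf{ftc-cm}}\vdash \frac{1+1}{(1+1)(1+1)}=\frac{1}{1+1}$, so the identity holds in $I$ and hence $\frac{1}{x}='_a\bot$. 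For (b), since $0\neq 1+1$ in $I$, having $0='_a(1+1)$ would require both $0$ and $1+1$ to collapse; but $0+\frac{0}{1+1}=\frac{0}{1+1}\neq 0$ in $I$ (again by the $\Int_2$ model), so $0$ does not collapse and $0\cdot x$, $x$ remain distinct in $M/{='_a}$. Thus $\pi^\bot_a(x)$ witnesses the failure of $\mathsf{AVL}$ in the quotient while $M\models\mathsf{AVL}$, which is exactly the claim.

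The main obstacle is item (a): the collapse equation must hold in $I$, the \emph{generic} generalised common meadow, not merely in common meadows, and this is precisely where completeness does the work, upgrading ``true throughout $\mathsf{CM}$'' to a theorem of $E_{\mathsf{ftc-cm}}$ and therefore an identity of $I$. I would emphasise the resulting asymmetry when writing up: each non-collapse fact ($a\neq\bot$, $0\cdot a\neq 0$, and $0$ not collapsing) needs only a single common-meadow counter-model to separate terms in $I$, whereas the one collapse fact requires validity across all of $\mathsf{CM}$ together with completeness.
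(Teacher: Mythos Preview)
Your argument is correct and produces a valid counterexample, but it differs from the paper's route. The paper introduces two fresh constants $a,b$, forms the initial algebra of $E_{\mathsf{ftc-cm}}\cup\{\mathsf{AVL},\ \frac{1}{b}+0\cdot a=\frac{1}{b}\}$, and then refutes the relevant identities by interpreting $a$ and $b$ in concrete common meadows. You instead work in the plain initial algebra $I(\Sigma_{\mathsf{cm}},E_{\mathsf{ftc-cm}})$ and pick the specific closed terms $a=\frac{1}{1+1}$, $x=1+1$, relying on Proposition~\ref{InitAlgAVL} for $\mathsf{AVL}$ in $M$ and on model separations (via $\rat$ and $\Int_2$) for the inequalities. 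Your approach is more elementary in that it avoids enlarging the signature and avoids adding $\mathsf{AVL}$ as an explicit axiom; the paper's approach is more generic in that it does not depend on the arithmetic of $2,3,\ldots$ in the base.

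One point is worth tightening. Your step (a) is needlessly elaborate. Because you chose $x=1+1$ and $a=\frac{1}{1+1}$, you have $\frac{1}{x}=a$, and the collapse condition $\frac{1}{x}='_a\bot$ is literally $a+0\cdot a=a$. This is immediate from $E_{\mathsf{wcr},\bot}$ via $a+0\cdot a=(1+0)\cdot a=1\cdot a=a$; there is no need to rewrite through axiom~(\ref{fracDivSum}) and then invoke the completeness theorem for $\mathsf{CM}$. So in your write-up the ``asymmetry'' you highlight disappears: every ingredient is either a direct consequence of the axioms or a single-model separation in $I$. The completeness theorem is not actually needed anywhere in your proof.
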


\begin{proof}
Let $a$ and $b$ be fresh constants and let $I_S$ be an initial algebra of the algebraic specification
$$S=(\Sigma_{\mathsf{cm}}\cup\{a,b\},E_{\mathsf{ftc-cm}} \cup \{\mathsf{AVL},\frac{1}{b} + 0 \cdot a = \frac{1}{b}\}).$$  Now, $I_S$ satisfies $\mathsf{AVL}$ while $I_S/\!='_a $ does not. 

To see the this, we notice that in $I_S/\!='_a $, $\frac{1}{b} = \bot$, i.e., 
$\frac{1}{b} ='_a  \bot$ while $0 \cdot b \not ='_a b$. To see that $0 \cdot b \not ='_a b$ fails in $I_S/\!='_a $ first notice that upon choosing $\frac{1}{2}$ for $a$ and $\bot$ for $b$, $I_S$ is mapped to an initial algebra $M'$ of the specification 
$S' =(\Sigma_{\mathsf{cm}}\cup\{a,b\},E_{\mathsf{ftc-cm}}\cup \{\mathsf{AVL},b = \frac{1}{2}, a = \bot \})$ which is also an initial algebra of $S' =(\Sigma_{\mathsf{cm}}\cup\{a,b\},E_{\mathsf{ftc-cm}}\cup \{b = \frac{1}{2}, a = \bot \})$ in view of Proposition~\ref{InitAlgAVL}.
Now $M'$ satisfies $ 0 \cdot b \neq 0$ (i.e., $0 \cdot \frac{1}{2} \neq 0 \cdot \frac{1}{2}$) and (with $b = \frac{1}{2}$, $a = \bot$) $b+ 0 \cdot a \neq b$, and $0 \cdot b + 0 \cdot a \neq 0 \cdot b$. It follows that none of $ 0 \cdot b = 0$, $ b + 0 \cdot a = b$ and $0 \cdot b + 0 \cdot a = 0 \cdot b$ are provable from $S$ so that neither of these equations is true in $M$, with as a consequence that in $M$ $0 \cdot b \neq b$, thereby refuting the validity of  $0 \cdot b = b$ in $I_S/\!='_a $ and for that reason $\mathsf{AVL}$ refuting as well.
\end{proof}
The example of Proposition~\ref{CounterExAVL} indicates that chains of homomorphisms 
applied to a GCM in order to get closer to a common meadow are best made up from 
homomorphisms of the type $\phi_{0 \cdot a}(b) = b + 0 \cdot a$, rather than $\rho$ or $\pi^{\bot}_{a}$.

Recall completeness for conditional equations (Definition \ref{def:completeness}), 
and that $E_{\mathsf{ftc-cm}}$ does not axiomatise the conditional theory of common meadows -- Proposition \ref{incomplete_conditional_theory}.
However, it does more than axiomatise the equational theory of commutative rings -- Proposition \ref{ComplForRings}: 

\begin{theorem}\label{completeness_for_rings} 
$E_{\mathsf{ftc-cm}}$ provides a complete equational axiomatisation of the conditional equational theory of the class of commutative rings with inverse based common division.
\end{theorem}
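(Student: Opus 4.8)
The statement is a completeness theorem, so the plan is to split it into soundness and completeness and to reduce completeness to a point-separation property. Soundness is immediate: by Proposition~\ref{RingSound} every member of $\mathsf{CR}_{\div,\bot}$ satisfies all of $E_{\mathsf{ftc-cm}}$, so by soundness of (conditional) equational logic anything derivable from $E_{\mathsf{ftc-cm}}$ holds throughout $\mathsf{CR}_{\div,\bot}$. For the converse I would first record the reduction already implicit in the paper: since $\mathsf{GCM}=\mathsf{Alg}(\Sigma_{r,\div,\bot},E_{\mathsf{ftc-cm}})$ is a variety, its conditional equational theory is exactly $\{c : E_{\mathsf{ftc-cm}}\vdash c\}$. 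Thus the theorem is equivalent to the equality $CEqnThy(\mathsf{CR}_{\div,\bot})=CEqnThy(\mathsf{GCM})$, which strengthens the equational statement of Theorem~\ref{ComplForRings} to conditional equations. One inclusion is free from $\mathsf{CR}_{\div,\bot}\subseteq\mathsf{GCM}$ and Lemma~\ref{subclass_lemma}; the whole content is the reverse inclusion, i.e.\ that every conditional equation valid in all commutative rings with inverse based common division is valid in every generalised common meadow.

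To prove that reverse inclusion I would use a separation (subdirect representation) argument. Fix a conditional equation $c=\big(\bigwedge_i t_i=r_i\ \to\ t=r\big)$ valid in $\mathsf{CR}_{\div,\bot}$, an arbitrary $M\in\mathsf{GCM}$, and a valuation $v$ in $M$ satisfying all the premises $t_i=r_i$. Every homomorphism $h\colon M\to N$ with $N\in\mathsf{CR}_{\div,\bot}$ carries $v$ to a valuation satisfying the same premises, so $N$ satisfies the conclusion under $h\circ v$, i.e.\ $h(t^{M,v})=h(r^{M,v})$. Hence, if the family of all such homomorphisms is jointly injective, then $t^{M,v}=r^{M,v}$ and $M\models c$. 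So the proof reduces to the following Separation Lemma: for every GCM $M$ and every pair $u\neq w$ in $M$ there is a homomorphism onto a member of $\mathsf{CR}_{\div,\bot}$ that keeps $u$ and $w$ distinct; equivalently $\mathsf{GCM}\subseteq\mathbb{SP}(\mathsf{CR}_{\div,\bot})$.

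The natural first tool is the canonical map $\rho\colon M\to\mathsf{Enl}_\bot(M_0)$ of Proposition~\ref{HomToRwCD}, which is the identity on the ring part $M_0=\{a:0\cdot a=0\}$ and sends every error (element outside $M_0$) to $\bot$. This already separates $u,w$ whenever at least one of them lies in $M_0$. The residual, and genuinely hard, case is when $u$ and $w$ are \emph{both} errors, since $\rho$ collapses them both to $\bot$. Here I would study the idempotent commutative structure of errors $\mathcal{E}=\{0\cdot a : a\in M\}$, which under $+$ is a semilattice with least element $0$ and greatest element $\bot$, together with the retraction $a\mapsto 0\cdot a$. There are two subcases. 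If $0\cdot u\neq 0\cdot w$ I would separate them by a semilattice character $\mathcal{E}\to\{0,\bot\}$ (a prime filter) sending one error part to $\bot$ and the other to $0$, and lift this character to a GCM-congruence whose quotient satisfies $\mathsf{NVL}$, hence lies in $\mathsf{CR}_{\div,\bot}$ (using the characterisation of $\mathsf{CR}_{\div,\bot}$ as the models of $E_{\mathsf{ftc-cm}}+\mathsf{NVL}$). If instead $0\cdot u=0\cdot w$ while $u\neq w$, I would pass to the quotient that forces this common error part to $0$; then $u,w$ land in the ring part and $\rho$ separates them, provided that this collapse does not identify $u$ and $w$.

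The main obstacle is precisely this error-separation step. Concretely, the work is to show that the idempotent error semilattice $\mathcal{E}$ of an arbitrary GCM has enough prime filters to separate any two distinct error parts, and that each such filter lifts to a \emph{well-defined and consistent} GCM-congruence whose quotient validates $\mathsf{NVL}$ (so that no clash $0=\bot$ is forced); and, for the equal-error-part subcase, to prove that forcing $0\cdot u=0$ cannot merge two elements that are distinct in $M$ --- intuitively because their difference is then visible in the ring part and survives the collapse. Fracterm flattening (Theorem~\ref{FF}) should reduce these verifications to manipulations of flat fracterms, and the $\phi_a$-homomorphisms of Lemma~\ref{HomToGCM1} provide concrete congruences of the required additive-error type from which the needed quotients can be assembled.
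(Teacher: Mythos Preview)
Your overall framework is right and matches the paper's strategy: reduce to showing that any GCM counterexample to a conditional equation can be pushed into some member of $\mathsf{CR}_{\div,\bot}$ by a homomorphism that keeps the two values $b=t(\vec a)$ and $c=r(\vec a)$ distinct. Where you diverge from the paper is in the execution of the separating homomorphism, and here you make life harder than necessary.

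The paper does not build semilattice characters or prime filters on the error lattice. It uses a single uniform construction in all cases: the composite $\rho\circ\phi_{0\cdot b}$, where $\phi_{0\cdot b}(x)=x+0\cdot b$ is the homomorphism of Lemma~\ref{HomToGCM1} and $\rho$ is the map of Proposition~\ref{HomToRwCD}. The point is that $\phi_{0\cdot b}$ shifts the zero so that $b$ itself lands in the ring part of the image (since $\phi_{0\cdot b}(b)=b$ and $\phi_{0\cdot b}(0)=0\cdot b$, hence the new $0\cdot b$ equals the new $0$), after which $\rho$ leaves $b$ alone. Then one only has to check what happens to $c$ in three easy cases: if $c=\bot$ it stays $\bot$; if $0\cdot b=0\cdot c$ then $\phi_{0\cdot b}(c)=c+0\cdot b=c+0\cdot c=c$ as well, so both land in the ring part and $\rho$ fixes both, separating them; and if $0\cdot b\neq 0\cdot c$, one of the two, say $b$, satisfies $(0\cdot b)\cdot(0\cdot c)\neq 0\cdot b$, which forces $\phi_{0\cdot b}(c)$ to lie outside the ring part, so $\rho$ sends it to $\bot$ while fixing $b$.

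This dissolves both of your stated obstacles. Your worry that ``forcing $0\cdot u=0$ might merge $u$ and $w$'' in the equal-error-part case is unfounded: $\phi_{0\cdot b}$ is literally the identity on $b$ and on $c$ there. And your plan to separate unequal error parts via a prime filter on the idempotent semilattice is replaced by the one-line observation that choosing the correct one of $b,c$ to play the role of $b$ already gives a $\phi_{0\cdot b}$ that sends exactly one of them outside the ring part. So keep your reduction and separation framework, but drop the semilattice-character machinery in favour of the explicit composite $\rho\circ\phi_{0\cdot b}$; the case analysis is then short and elementary.
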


\begin{proof} We need to show that whenever a conditional equation $\psi$ is true in all commutative rings with inverse common division it is derivable from $E_{\mathsf{ftc-cm}}$, i.e., $\psi$ is true in all GCMs. So suppose that $\psi$ is refuted in a GCM, say $M$. We will transform $M$ into a commutative ring with common division $M'$ which refutes $\psi$ as well. We consider three cases.

(i) We write $\psi \equiv t_1 = r_1 \wedge \ldots \wedge t_n = r_n \to t=r$, and we assume that variables $x_1,\ldots,x_k$ occur in $\psi$
and that substituting $a_1, \ldots,a_k$ for these variables provides an instance in $M$ refuting $\psi$. Let $b = t(a_1,\ldots ,a_k)$ and
$c = r(a_1,\ldots ,a_k)$. Because $b\neq c$ at least one of both, say $b$ differs from $\bot$. 

We assume first that $c = \bot$ and $b \neq \bot$. We know that $0 \cdot b \neq \bot$ and we consider the mapping $\rho \circ \phi_{0 \cdot b}$ with $\rho$ as used in the proof of Proposition~\ref{HomToRwCD} and $\phi_{0 \cdot b}$ as used in the proof of Lemma~\ref{HomToGCM1}.
As $\rho \circ \phi_{0 \cdot b}$ is a homomorphism it preserves all conditions of the conditional equation $\phi$. 

Concerning the conclusion of $\psi$ we notice that 
$$\phi_{0 \cdot b}(b) = b+ 0 \cdot b = (1 + 0 )\cdot b = 1 \cdot b = b,$$
and in the image $\phi_{0 \cdot b}(M)$, $\phi_{0 \cdot b}(b) \neq \bot$ because if $ b =_{0 \cdot b} \bot$ so that $b + 0\cdot b = b = \bot$ contradicting the assumption on $b$ and $c$. Clearly, $\phi_{0 \cdot b}(c) = \phi_{0 \cdot b}(\bot)= \bot$ so that the conclusion of $\psi$ remains false in the image of $\phi_{0 \cdot b}$  on the substitution
 $\phi_{0 \cdot b}(a_1),\ldots, \phi_{0 \cdot b}(a_k)$.  In $\phi_{0 \cdot b}(M)$ then, $0 = 0 \cdot b$ so that $\rho$ acts as identity on $\phi_{0 \cdot b}(b)$, with the effect that $\psi$ fails in $\rho(\phi_{0 \cdot b}(M))$, which is a commutative ring with inverse based common division.

(ii) Next, we consider the case that $b$ and $c$ are both non-bot and that $0 \cdot b = 0 \cdot c$. 
Again, we may consider the homomorphism $\rho \circ \phi_{0 \cdot b}$ and we find $\rho (\phi_{0 \cdot b}(b)) = b$ as above and 
$\rho (\phi_{0 \cdot b}(c) )=  c$ so that again in $\rho \circ \phi_{0 \cdot b}(M)$ the conditional equation $\psi$ is invalid on the substitution $\phi_{0 \cdot b}(a_1),\ldots, \phi_{0 \cdot b}(a_k)$.

(iii) Finally, we assume that $0 \cdot b \neq 0 \cdot c$. Suppose that both $(0 \cdot b) \cdot (0 \cdot c) = 0 \cdot b$ and $(0 \cdot c) \cdot (0 \cdot c) = 0 \cdot c$ then $0 \cdot b = 0 \cdot c$ which contradicts the assumptions on $b$ and $c$. We assume w.o.l.g. that  
$(0 \cdot b) \cdot (0 \cdot c) \neq 0 \cdot b$. Again we will use the homomorphism $\rho \circ \phi_{0 \cdot b}$. Now we find that 
$0 \cdot \psi_{0 \cdot b}(b) = \phi_{0 \cdot b}(0)$ while $0 \cdot \phi_{0 \cdot b}(c) \neq \phi_{0 \cdot b}(0)$, as otherwise 
 $ 0 \cdot (0 \cdot c) =_{0 \cdot b} 0$ which would imply that $(0 \cdot b) \cdot (0 \cdot c) \neq 0 \cdot b$. It follows that $\rho$ will leave
 $\phi_{0 \cdot b}(b)$ unchanged while $\rho(\phi_{0 \cdot b}(c)) = \bot$ with the effect that $\phi$ fails in $\rho (\phi_{0 \cdot b}(M))$.
\end{proof}


\subsection{$\bot$-splitting in GCMs}

\begin{definition}
A GCM $M$ features (non-trivial) $\bot$-{\em splitting} if $\bot$ is the sum of two non-$\bot$ elements in $M$. 
\end{definition}

Clearly, a common meadow will not feature non-trivial $\bot$-splitting and, more generally, nor will a commutative ring with inverse based common division.

Moreover, if in a GCM $M$, $a, b \in M$ feature non-trivial $\bot$-splitting then $0 \cdot a \neq 0$ as otherwise $0 \cdot (a+ b) = 0 \cdot \bot$ so that $0 \cdot b = \bot$ and then $b = \bot$; similarly, it follows that $0 \cdot b \neq 0$.

\begin{example}
{\em Let $a$ and $b$ be two constants with associated axiom $a + b= \bot $. In the initial algebra $I$ of $E_{\mathsf{ftc-cm}} + \{ a + b= \bot \} $ we find $0 \neq a \neq \bot$, $0 \neq b \neq \bot$ and $ a\neq b$. Thus, $I$ allows non-trivial $\bot$-splitting.} 
\qed
\end{example}

\begin{proposition}
Let $M$ be a GCM which satisfies $\mathsf{AVL}$.  If $M$  possesses non-trivial zero divisors then $M$ features non-trivial $\bot$-splitting.
\end{proposition}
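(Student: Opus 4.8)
The plan is to produce the splitting explicitly. Read a non-trivial zero divisor as a pair of nonzero elements $a,b$ of the ring reduct $M_0=\{x\in M\mid 0\cdot x=0\}$ (the commutative ring identified in the proof of Proposition~\ref{HomToRwCD}) with $a\cdot b=0$; in particular $0\cdot a=0\cdot b=0$, and $a,b\neq\bot$ since $\bot\notin M_0$. I claim that then
\[
\bot=\frac{1}{a}+\frac{1}{b},
\]
and that both summands differ from $\bot$, which is exactly non-trivial $\bot$-splitting. So the proof splits into an arithmetic identity and a non-degeneracy check.

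For the identity I would use the fracterm addition axiom~(\ref{fracDivSum}) together with $1=\frac{1}{1}$ and $\bot=\frac{1}{0}$, computing
\[
\frac{1}{a}+\frac{1}{b}=\frac{(1\cdot b)+(a\cdot 1)}{a\cdot b}=\frac{a+b}{0}=(a+b)\cdot\frac{1}{0}=(a+b)\cdot\bot=\bot,
\]
the final step being absorption. This part needs no hypotheses beyond $a\cdot b=0$: the numerator $a+b$ is immaterial, since any fracterm whose denominator is provably $0$ collapses to $\bot$.

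The crux — and the only place where $\mathsf{AVL}$ enters — is to rule out $\frac{1}{a}=\bot$ and $\frac{1}{b}=\bot$, for otherwise the displayed equation would be a trivial rewriting of $\bot$ rather than a genuine splitting into \emph{two} non-$\bot$ terms. Here I would argue contrapositively: if $\frac{1}{a}=\bot$, then $\mathsf{AVL}$ yields $0\cdot a=a$, while $a\in M_0$ gives $0\cdot a=0$, so $a=0$, contradicting that $a$ is a non-trivial zero divisor. Hence $\frac{1}{a}\neq\bot$, and symmetrically $\frac{1}{b}\neq\bot$. Combined with the computation, this exhibits $\bot$ as a sum of two non-$\bot$ elements.

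The main obstacle is thus concentrated entirely in this non-degeneracy step: the arithmetic $\frac{1}{a}+\frac{1}{b}=\bot$ is forced by the denominator alone, so everything hinges on $\mathsf{AVL}$ (applied to proper ring elements, i.e. using $0\cdot a=0$) to keep the summands nonstandard. This is consistent with the remark preceding the statement, namely that splitting elements must satisfy $0\cdot x\neq 0$: writing $e=0\cdot\frac{1}{a}$, the identity $b\cdot\frac{a}{a}=b+0\cdot\frac{1}{a}$ alongside $b\cdot\frac{a}{a}=\frac{a\cdot b}{a}=\frac{0}{a}=0\cdot\frac{1}{a}$ gives $b+e=e$, so $e\neq 0$ because $b\neq 0$; thus the witnesses $\frac{1}{a},\frac{1}{b}$ indeed lie outside $M_0$, as they must.
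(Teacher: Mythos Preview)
Your proof is correct and follows essentially the same route as the paper: exhibit $\frac{1}{a}+\frac{1}{b}=\frac{a+b}{a\cdot b}=\frac{a+b}{0}=\bot$, then use $\mathsf{AVL}$ to argue that $\frac{1}{a}=\bot$ would force $0\cdot a=a$ while $0\cdot a=0$, hence $a=0$, a contradiction. The only cosmetic difference is that you place the zero divisors in $M_0$ from the outset (so $0\cdot a=0$ is assumed), whereas the paper starts with arbitrary $a,b\in M$ satisfying $a\cdot b=0$ and derives $0\cdot a=0$ from $0\cdot(a\cdot b)=0$; your closing paragraph verifying that $0\cdot\frac{1}{a}\neq 0$ is a pleasant extra consistency check not present in the paper.
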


\begin{proof}
Suppose $M$ allows non-trivial zero divisors, then for some $a,b \in M$, $ a \cdot b = 0$ while $a \neq 0 \neq b$. 
Now, consider the sumterm $t = \frac{1}{a} + \frac{1}{b}$. We will find that $t$ features non-trivial $\bot$-splitting. We have $t = \frac{a + b}{a \cdot b} = \frac{a + b}{0}= \bot$. Moreover, if $\frac{1}{a} = \bot$ then with $\mathsf{AVL}$, $0 \cdot a = a$ and from $ a \cdot b = 0$ follows $ 0 \cdot a \cdot b = 0$, which implies $0 \cdot a = 0$. Both facts together yield $a = 0$, which contradicts the assumptions on $a$ whence $\frac{1}{a} \neq \bot$. Similarly, it follows that $\frac{1}{b} \neq \bot$.
\end{proof}

If $a$ and $b$ feature $\bot$-splitting the homomorphisms $\phi_{0 \cdot a}$ and $\phi_{0 \cdot b}$ do not commute.
Indeed we find that $b =_{0 \cdot a} \bot$ and that $a =_{0 \cdot b} \bot$. It follows that the union of $=_{0 \cdot a}$ and $=_{0 \cdot b}$ 
does not generate a congruence relation and that the transitions from $M$ to $M/\!=_{0 \cdot a}$ and to $M/\!=_b$ 
exclude one-another. Moreover, if for non-$\bot$ $b$ it is the case that $b =_{0 \cdot a} \bot$ then 
$a$ and $b$ feature non-trivial $\bot$-splitting in $M$. Indeed: if $b + 0\cdot a = \bot + 0 \cdot a$ then $b + 0\cdot a = \bot$ and 
$0 \cdot (b + 0\cdot a) =0 \cdot \bot$ so that $0 \cdot (a+b) = \bot$ and $a + b = \bot$.

\begin{proposition} 
Given a GCM $M$ which satisfies $\mathsf{AVL}$ and which avoids non-trivial $\bot$-splitting, 
the union  of the congruences $=_{0 \cdot a}$ for all non-$\bot$ elements $a$ of $M$ 
generates a non-trivial congruence $=_{\mathsf{cm}}$ on $M$ such that $M/\!=_{\mathsf{cm}}$ is a common meadow.
\end{proposition}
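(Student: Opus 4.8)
The plan is to exhibit the generated congruence explicitly and then verify the two laws that, together with non-triviality, characterise common meadows; throughout I assume $M$ is non-trivial, i.e. $0 \neq 1$ (otherwise $E_{\mathsf{ftc-cm}}$ collapses $M$ to a point and the statement is vacuous). For non-$\bot$ $a$ write $=_{0\cdot a}$ for the relation $p =_{0\cdot a} q \iff p + 0\cdot a = q + 0\cdot a$. Since $0\cdot(0\cdot a) = 0\cdot a$ and, by absence of $\bot$-splitting, $0\cdot a \neq \bot$ (because $0\cdot a = a + (-a)$ with $a, -a \neq \bot$), this is the congruence of Lemma~\ref{HomToGCM1} attached to the idempotent $0\cdot a$, trivial when $0\cdot a = 0$. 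The decisive structural observation is that the family $\{=_{0\cdot a}\}_{a\neq\bot}$ is directed: by distributivity $0\cdot a + 0\cdot b = 0\cdot(a+b)$, and $a+b\neq\bot$ since $M$ has no $\bot$-splitting, so a $=_{0\cdot a}$-step followed by a $=_{0\cdot b}$-step collapses into a single $=_{0\cdot(a+b)}$-step. Hence $\bigcup_{a\neq\bot} =_{0\cdot a}$ is already transitive; being reflexive (witness $a=1$) and symmetric, it is an equivalence relation and therefore coincides with the join of the $=_{0\cdot a}$ in the congruence lattice. This shows at once that $=_{\mathsf{cm}}$ is a congruence and gives its explicit form
\[ p =_{\mathsf{cm}} q \quad\Longleftrightarrow\quad p + 0\cdot a = q + 0\cdot a \ \text{ for some } a\neq\bot. \]

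A second consequence of no $\bot$-splitting is that the class of $\bot$ is the singleton $\{\bot\}$: if $x =_{\mathsf{cm}} \bot$ with $x\neq\bot$, then $x + 0\cdot a = \bot$ for some non-$\bot$ $a$, and then $x$ and $0\cdot a$ witness $\bot$-splitting (the remark preceding the statement). Since the generalised common meadows form a variety, $M/{=_{\mathsf{cm}}}$ is again a GCM. It satisfies $\mathsf{NVL}$: if $[x]\neq[\bot]$ then $x\neq\bot$, and the witness $a=x$ gives $0\cdot x + 0\cdot x = 0\cdot x = 0 + 0\cdot x$, i.e. $0\cdot x =_{\mathsf{cm}} 0$, so $0\cdot[x] = [0]$. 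By the characterisation of $\mathsf{CR}_{\div,\bot}$ as the models of $E_{\mathsf{ftc-cm}}$ satisfying $\mathsf{NVL}$, the quotient lies in $\mathsf{CR}_{\div,\bot}$.

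For $\mathsf{AVL}$ one argues by pulling back along the singleton class of $\bot$: if $\frac{1}{[x]} = [\bot]$ then $\frac1x =_{\mathsf{cm}} \bot$, hence $\frac1x = \bot$ in $M$ itself, so $\mathsf{AVL}$ in $M$ yields $0\cdot x = x$ and thus $0\cdot[x]=[x]$. What remains, and is the main obstacle, is non-triviality of the quotient. That $[1]\neq[\bot]$ is clear ($1\neq\bot$ and $[\bot]$ is a singleton); the delicate point is $[0]\neq[1]$. Suppose $0 =_{\mathsf{cm}} 1$; then $0\cdot a = 1 + 0\cdot a$ for some non-$\bot$ $a$, and $e := 0\cdot a \neq \bot$ by the above. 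I would now evaluate $\frac1e$ in two ways. Using $\frac{1}{x} = \frac{x}{x\cdot x}$, $e\cdot e = e$, and the axiom $\frac{x}{y+0\cdot z} = \frac{x+0\cdot z}{y}$ one gets $\frac1e = \frac{e}{e\cdot e} = \frac{e}{e} = \frac{0\cdot a}{0 + 0\cdot a} = \frac{0\cdot a + 0\cdot a}{0} = \frac{e}{0} = \bot$. On the other hand, from $e = 1 + e = 1 + 0\cdot a$ the same axiom gives $\frac1e = \frac{1}{1 + 0\cdot a} = \frac{1 + 0\cdot a}{1} = 1 + 0\cdot a = 1 + e = e$. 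Hence $e = \bot$, a contradiction. Therefore $[0]\neq[1]$, the quotient is a non-trivial member of $\mathsf{CR}_{\div,\bot}$ satisfying $\mathsf{AVL}$, and by the characterisation of common meadows inside $\mathsf{CR}_{\div,\bot}$ it is a common meadow.

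The whole argument is powered by the no-$\bot$-splitting hypothesis: it makes the family $\{=_{0\cdot a}\}$ directed (transitivity of the union), isolates $\bot$ in its own class (which gives $\mathsf{AVL}$ for free), and keeps $e = 0\cdot a$ away from $\bot$, so that the two evaluations of $\frac1e$ can collide and force non-degeneracy. The only genuinely subtle step is this last collision establishing $[0]\neq[1]$; everything else is bookkeeping with the explicit description of $=_{\mathsf{cm}}$.
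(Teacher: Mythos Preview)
Your proof is correct and follows essentially the same route as the paper: directedness of the family $\{=_{0\cdot a}\}$ via $0\cdot a + 0\cdot b = 0\cdot(a+b)$ and the no-$\bot$-splitting hypothesis, the singleton class of $\bot$, the verification of $\mathsf{NVL}$ to place the quotient in $\mathsf{CR}_{\div,\bot}$, and lifting $\mathsf{AVL}$ from $M$. The one place you go beyond the paper is the explicit argument for $[0]\neq[1]$; the paper handles this by invoking Lemma~\ref{HomToGCM1}(i) (each $\phi_{0\cdot a}(M)$ is already non-trivial, hence so is any $=_{0\cdot a}$, hence by directedness so is $=_{\mathsf{cm}}$), whereas you derive the contradiction $e=\bot$ directly from $e=1+e$, which is exactly the content of Proposition~\ref{aux1}.
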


\begin{proof} In view of the absence of $\bot$-splitting,  
the union of the non-trivial congruences $=_{0 \cdot a}$ and $=_{0 \cdot b}$ generates the non-trivial congruence 
$=_{0 \cdot (a+b)}$. Further $=_{\mathsf{cm}}$ is a commutative ring with inverse based common division, 
as any $a \in M$ with $0 \cdot a \neq 0$ will satisfy $0 \cdot a =_{\mathsf{0\cdot a}} 0$ and for that reason also 
$0 \cdot a =_{\mathsf{cm}} 0$. 

To see that $M/\!=_{\mathsf{cm}}$ satisfies $\mathsf{AVL}$ consider $a \in M$ such that 
$\frac{1}{a} =_{\mathsf{cm}} \bot$. Then for some non-$\bot$, $b \in |M|$, $\frac{1}{a} + 0 \cdot b=_{\mathsf{cm}} \bot + 0\cdot b= \bot$.
Because $M$ avoids $\bot$-splitting and $0 \cdot b \neq \bot$, $\frac{1}{a} = \bot$ so that with $\mathsf{AVL}$, in $M$, $a = 0 \cdot a$, and therefore $0 \cdot a =_{\mathsf{cm}} a$ as required for the check of $\mathsf{AVL}$.
\end{proof}

In case $M$ allows non-trivial $\bot$-splitting, choices must be made which element of a $\bot$-splitting pair to map to $0$ and which on to map to $\bot$ as the ordering of factorization now matters. Now, it is possible to use a  a well-ordering of the non-$\bot$-elements in order to remove all $\bot$-splitting pairs.

\begin{lemma} 
\label{HomToCM2}
Given a  GCM $M$ that satisfies $\mathsf{AVL}$, with a non-$\bot$ element 
$a \in M$ with $0 \neq 0 \cdot a$,  then there is a congruence $=_{\mathsf{cm}}$ on $M$ 
such that  $M/=_{\mathsf{cm}}$ is a common meadow and $a \not =_{\mathsf{cm}} \bot$. 
\end{lemma}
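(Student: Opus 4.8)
The plan is to build $=_{\mathsf{cm}}$ as a join of the $\mathsf{AVL}$-preserving shift congruences $=_{0 \cdot c}$ of Lemma~\ref{HomToGCM1}, selecting the summands by a maximality condition (Zorn's lemma, equivalently transfinite recursion along a well-ordering of the non-$\bot$ elements as the preceding remark suggests) so that $a$ is never collapsed onto $\bot$. First I would \emph{protect} $a$: since $0 \neq 0 \cdot a$ and $a \neq \bot$, the element $0 \cdot a$ meets the hypotheses of Lemma~\ref{HomToGCM1}, so I pass to $\phi_{0 \cdot a}(M)$. This is again a GCM satisfying $\mathsf{AVL}$ (Lemma~\ref{HomToGCM1}(ii)), in it $\phi_{0\cdot a}(a) = a + 0 \cdot a = a \neq \bot$, and $0 \cdot a =_{0\cdot a} 0$, so from now on $0 \cdot a = 0$. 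The point is that, by the remark preceding this lemma, a shift $\phi_{0 \cdot c}$ can send $a$ to $\bot$ only if $a$ and $c$ exhibit $\bot$-splitting, which forces $0 \cdot a \neq 0$; once $0 \cdot a = 0$, $a$ is immune to every subsequent shift. (At the end I take the kernel of the composite $M \to \phi_{0\cdot a}(M) \to M'$ as the congruence on the original $M$.)

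The crux, and the main obstacle, is to rule out \emph{self-splitters}, that is, non-$\bot$ elements $e$ with $0 \cdot e = \bot$; these are exactly the elements that cannot be repaired by a shift, since the shift datum $0 \cdot e$ would equal $\bot$ and collapse everything. I claim $\mathsf{AVL}$ already forbids them: in any GCM satisfying $\mathsf{AVL}$, $0 \cdot x = \bot$ implies $x = \bot$. Suppose $e \neq \bot$ and $0 \cdot e = \bot$. Multiplying by $\frac{1}{e}$ and regrouping, $\bot = (0\cdot e)\cdot\frac{1}{e} = 0 \cdot \frac{e}{e} = \frac{0}{e}$, whence $\frac{e}{e} = 1 + \frac{0}{e} = \bot$. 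Using the identity $\frac{1}{1/y} = \frac{y^2}{y}$, which holds in every common meadow and is therefore provable from $E_{\mathsf{ftc-cm}}$ and holds in every GCM (Theorem~\ref{ComplForRings}), we get $\frac{1}{1/e} = \frac{e^2}{e} = e \cdot \frac{e}{e} = \bot$; now $\mathsf{AVL}$ applied to $\frac{1}{e}$ gives $0 \cdot \frac{1}{e} = \frac{1}{e}$, and since $0 \cdot \frac{1}{e} = \frac{0}{e} = \bot$ this yields $\frac{1}{e} = \bot$. A final application of $\mathsf{AVL}$, now to $e$, gives $0 \cdot e = e$, i.e.\ $e = \bot$, a contradiction.

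With self-splitters excluded, the construction goes through. I consider the congruences $\theta$ on $M$ that are joins of shift congruences $=_{0 \cdot c}$ and satisfy $a \not\mathrel{\theta} \bot$, ordered by inclusion; the identity congruence is admissible, and a chain has the directed union as an upper bound, which still keeps $a$ off $\bot$ because $a \mathrel{\bigcup_i \theta_i} \bot$ would already hold in some $\theta_i$. By Zorn's lemma there is a maximal such $\theta$; set $=_{\mathsf{cm}} = \theta$ and $M' = M/=_{\mathsf{cm}}$. Then $M'$ is a GCM and keeps $a \neq \bot$ by construction. It satisfies $\mathsf{AVL}$: each summand preserves $\mathsf{AVL}$ by Lemma~\ref{HomToGCM1}(ii) (legitimate, since no shift datum is $\bot$, by the self-splitter lemma), and $\mathsf{AVL}$ passes to the directed union, for if $\frac{1}{[x]} = \bot$ in $M'$ then $\frac{1}{x} \mathrel{\theta'} \bot$ already for a finite subjoin $\theta'$, where $\mathsf{AVL}$ applies to give $0 \cdot x \mathrel{\theta'} x$, and this persists in $M'$.

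Finally I would verify $\mathsf{NVL}$ by maximality. If some non-$\bot$ element $[d] \in M'$ had $0 \cdot [d] \neq 0$, then by the self-splitter lemma $0 \cdot [d] \neq \bot$, so $\phi_{0 \cdot d}$ is a genuine shift; it cannot send $a$ to $\bot$, because $0 \cdot a = 0$ in $M'$ makes $a$ unable to $\bot$-split. Hence adjoining $=_{0 \cdot d}$ to $\theta$ would produce a strictly larger admissible congruence (strict, as it forces $0 \cdot [d] = 0$), contradicting maximality. Thus $M'$ satisfies $E_{\mathsf{ftc-cm}} + \mathsf{NVL} + \mathsf{AVL}$, so by Proposition~\ref{axiom_rings_with_IBCD} and Proposition~\ref{axiom _meadows_relative_to_rings} it is a common meadow, and $a \not=_{\mathsf{cm}} \bot$, as required.
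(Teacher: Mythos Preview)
Your argument is correct and follows the same underlying idea as the paper: build the desired congruence transfinitely from the shift congruences $=_{0\cdot c}$ of Lemma~\ref{HomToGCM1}, starting with the one attached to $a$ so that $a$ is never sent to $\bot$, and finish by invoking $\mathsf{NVL}+\mathsf{AVL}$ to recognise a common meadow. The paper does this by an explicit transfinite recursion along a well-ordering of $U_M=\{b:0\neq b,\;0\cdot b=b,\;b\neq\bot\}$ with $a$ placed first; you do it by Zorn's lemma on the poset of admissible joins of shift congruences. The two are interchangeable.

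Two points are worth noting. First, your isolated ``self-splitter'' lemma (under $\mathsf{AVL}$, $0\cdot e=\bot\Rightarrow e=\bot$) is a genuine addition: the paper's proof silently needs $0\cdot a\neq\bot$ at the very first step (and again whenever a new shift is introduced), and you supply the reason. You should, however, invoke it \emph{before} the protection step, since applying Lemma~\ref{HomToGCM1} to $0\cdot a$ already requires $0\cdot a\neq\bot$. Second, your preservation of $\mathsf{AVL}$ through the join is more explicit than the paper's one-line appeal to Lemma~\ref{HomToGCM1}(ii); it works because a finite subjoin $=_{0\cdot c_1}\vee\cdots\vee=_{0\cdot c_n}$ equals the single shift congruence $=_{0\cdot c_1+\cdots+0\cdot c_n}$ (and this sum is $\neq\bot$ since the subjoin sits inside $\theta$ and hence keeps $a$ off $\bot$). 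Stating that identification would make your argument airtight.
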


\begin{proof}
Given $M$ with properties as stated  we first define the set $U_M$ by 
$$U_M = \{b \in |M| \mid 0 \neq  b\, \&\, 0 \cdot b = b\, \&\, b \neq \bot \}.$$
Now, let $\alpha$ be an ordinal with $g$ a 1-1 mapping from 
the successor ordinals in $\alpha$ to $U_M$ such that $g(1) = a$. The congruences $=^g_\beta$ for $\beta \leq \alpha$ are as follows:
\begin{itemize}
\item $=^g_0$ is the equality relation of $M$,

\item for a successor ordinal $\beta  +1$: if $0 \cdot g(\beta+1) =^g_{\beta}0$ or if 
$0 \cdot g(\beta+1) =^g_{\beta}\bot$ then $=^g_{\beta+1}$ is the same as $=^g_{\beta}$; otherwise
$=^g_{\beta+1}$ is the congruence relation generated by the union of $=^g_{\beta}$  and 
$=_{g(\beta+1)}$, and

\item
for a limit ordinal $\beta$,  $=^g_{\beta}\, = \bigcup_{\gamma < \beta} =^g_{\gamma}$.
\end{itemize}
\noindent 
Now, we will find that $\equiv^g_{\alpha}$ is a nontrivial congruence on $M$ such that  in $M/\!\equiv^g_{\alpha}$ for every $c$ either $c = \bot$ or $0 \cdot c = 0$, so that $M/\!\equiv^g_{\alpha}$ is a commutative ring with inverse based common division, 
which satisfies $\mathsf{AVL}$ (in view of Proposition~\ref{HomToGCM1} each factor GCM $M/=^g_{\beta}$ satisfies $\mathsf{AVL}$),
and which is a common meadow in view of Proposition~\ref{axiom _meadows_relative_to_rings}. Moreover, $a \not = \bot$ as in $M/=_{g(1)} = M/=_a$, $a = 0$ and not $ a = \bot$, and all further congruences are consistent extensions of $=_a$.
\end{proof}


\subsection{Weak commutative rings with $\bot$}

Consider the effect of $\bot$ on the ring properties.

\begin{definition}
An algebra $R$  satisfying the equations 1-10 of Table~\ref{EwcrBot} is called a {\em weak commutative ring} ($WCR$). An algebra $R_\bot$  satisfying all the equations 1-11 of Table \ref{EwcrBot} is called a {\em weak commutative ring with $\bot$}  ($WCR_{\bot}$).
\end{definition}

Every commutative ring is a weak commutative ring but not conversely.

\begin{lemma}
 For any given commutative ring $R$, $R_\bot$ is a $WCR_{\bot}$.  Further,  a WCR of the form $R_\bot$ satisfies $\mathsf{NVL}$
$$0 \cdot x \neq 0 \to x = \bot,$$
and all WCR's that satisfy 
this formula are of the form $R_\bot$.
\end{lemma}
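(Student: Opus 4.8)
The plan is to prove the three claims separately: that the enlargement $R_\bot$ is a $WCR_\bot$, that such enlargements satisfy $\mathsf{NVL}$, and conversely that $\mathsf{NVL}$ characterises the enlargement form. The first two are direct verifications; the third carries the real content.

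For ``$R_\bot$ is a $WCR_\bot$'' I would check equations (1)--(11) of Table~\ref{EwcrBot} by distinguishing whether any variable is assigned $\bot$. In equations (1)--(10) both sides contain exactly the same variables, so a $\bot$-assignment propagates by absorption to make both sides $\bot$, while an assignment inside $R$ reduces each equation to a commutative-ring identity --- using $0 \cdot a = 0$ in $R$ for the two weak equations (4) and (10). Equation (11) is additive absorption. This is the inspection already invoked for Proposition~\ref{RingSound}. For ``$R_\bot$ satisfies $\mathsf{NVL}$'' I would argue contrapositively: if $x \neq \bot$ then $x \in R$, so $0 \cdot x = 0$ holds in the ring $R$; hence $x \neq \bot \to 0 \cdot x = 0$.

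For the converse, let $A$ be a $WCR_\bot$ satisfying $\mathsf{NVL}$; I may assume $0 \neq \bot$, since $0 = \bot$ forces $x = x + 0 = x + \bot = \bot$ by (3) and (11), giving a one-point algebra that is not of enlargement form. Set $R = \{a \in A : a \neq \bot\}$, which by $\mathsf{NVL}$ (and $0 \cdot \bot = \bot$) equals $\{a : 0 \cdot a = 0\}$. I would then carry out three steps. First, $\bot$ is absorptive for all operations: additive absorption is (11) with commutativity; $-\bot = \bot$, since otherwise $-\bot \in R$ forces $0\cdot(-\bot) = 0$, whereas (4) and (9) give $0\cdot(-\bot) = (-\bot) + \bot = \bot \neq 0$; and $x \cdot \bot = \bot$ follows because, if $a \cdot \bot \neq \bot$, then $a\cdot\bot \in R$ and distributivity with $z = \bot$ gives $a\cdot\bot = a\cdot b + a\cdot\bot$ for all $b$, so cancelling the ring element $a \cdot \bot$ yields $a \cdot b = 0$ for all $b$, hence $a = 0$ and $a\cdot\bot = 0\cdot\bot = \bot$, a contradiction. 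Second, $R$ is closed under $+,\cdot,-$ and contains $0,1$: for addition, $a + b = \bot$ with $a,b \in R$ would give $0 = 0\cdot a + 0\cdot b = 0\cdot(a+b) = 0\cdot\bot = \bot$, and the $\cdot$ and $-$ cases are analogous. Third, $R$ is a commutative ring: equations (1)--(3) and (5)--(9) restrict to the ring axioms on $R$, and (4) with $\mathsf{NVL}$ gives $a + (-a) = 0\cdot a = 0$, supplying genuine additive inverses. Since every element of $A$ is either in $R$ or is the absorptive $\bot$, $A$ coincides with $R_\bot$.

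The main obstacle is the closure step: ruling out that a sum or product of non-$\bot$ elements escapes to $\bot$ (the $\bot$-splitting phenomenon examined later in the paper). This is exactly where $\mathsf{NVL}$ is indispensable, entering through the distributivity identity $0\cdot(a+b) = 0\cdot a + 0\cdot b$ together with $0 \neq \bot$. A related care point is the full multiplicative absorption of $\bot$, where the cancellation argument above, rather than a one-line computation, is needed because distributivity alone leaves $a \cdot \bot$ underdetermined.
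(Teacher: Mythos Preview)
The paper states this lemma without proof, so there is no argument to compare against; your proposal supplies a proof where the paper is silent, and the overall line is sound. The three-part decomposition is the natural one, and the substantive third part is handled correctly: you use $\mathsf{NVL}$ to identify $R$ with $\{a : 0\cdot a = 0\}$, derive absorption of $\bot$ for each operation from the $E_{\mathsf{wcr},\bot}$ axioms, and then check closure and the ring laws on $R$.

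Two small points are worth tightening. First, your treatment of the degenerate case $0 = \bot$ correctly observes that the resulting one-point algebra is not of the form $R_\bot$ (since $R_\bot$ always has at least two elements), but this actually exposes a trivial edge case in the lemma as stated rather than something you can simply ``assume away''; it would be cleaner to note explicitly that the lemma is meant for nontrivial $WCR_\bot$'s, or that the one-point structure is the sole exception. Second, in the multiplicative absorption step, the phrase ``cancelling the ring element $a\cdot\bot$'' hides a short computation: one adds $-(a\cdot\bot)$ and uses axiom~(4) together with the hypothesis $0\cdot(a\cdot\bot)=0$ to get $a\cdot b + 0 = 0$, hence $a\cdot b = 0$. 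Since at that stage $R$ has not yet been shown to be a ring, spelling this out avoids any appearance of circularity. With those clarifications your argument is complete.
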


Let $\Sigma_{r,\bot}$ be a signature of rings and fields with $\bot$.  Now, given a WCR $R_\bot$ the algebra $(R_\bot)_\div$ is obtained by expanding $R_\bot$ with a division function $x \div y$ such that $x \div y = x \cdot (1 \div y)$ and $1 \div a$ is $b$ if $ a \cdot b = 1$ (such $b$ is unique, see Lemma \ref{dividers_are_well-defined} and  Definition \ref{division_is_well-defined} above), and 
$1 \div a$ is undefined if no proper inverse $b$ is available for $a$ in $R_\bot$. 
We notice that if $R_\bot$ is non-trivial, then $(R_\bot)_\div$ is a partial algebra.

Although $\bot \in R_\bot$ already,  for $(R_\bot)_\div$ we totalise and define $\mathsf{Enl}_\bot((R_\bot)_\div)$ by having $a \div b = \bot$ whenever $a \div b$ is undefined in
$(R_\bot)_\div$.

\begin{proposition}
$\mathsf{Enl}_\bot((R_\bot)_\div)$ satisfies $E_{\mathsf{ftc-cm}}$ if, and only if, $R_\bot$ satisfies $\mathsf{NVL}$: $0 \cdot x \neq 0 \to x = \bot$.
\end{proposition}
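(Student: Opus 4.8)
The plan is to prove the two implications separately, and the key realisation is that the inverse-based definition of division must be used essentially in the harder (``only if'') direction: it is exactly what distinguishes $\mathsf{Enl}_\bot((R_\bot)_\div)$ from an arbitrary model of $E_{\mathsf{ftc-cm}}$.

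For the ``if'' direction I would argue as follows. Suppose $R_\bot \models \mathsf{NVL}$. By the preceding lemma, a WCR satisfying $\mathsf{NVL}$ is exactly one of the form $S_\bot$ for a genuine commutative ring $S$, so $R_\bot = S_\bot$. The proper inverses computed in $S_\bot$ then agree with those computed in $S$, since $\bot$ can never be a proper inverse, so $\mathsf{Enl}_\bot((R_\bot)_\div)$ coincides with $\mathsf{Enl}_\bot(S_\div)$, a commutative ring with inverse based common division. By Proposition~\ref{RingSound}, every such algebra satisfies $E_{\mathsf{ftc-cm}}$, which closes this direction.

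For the ``only if'' direction I would argue directly: assume $\mathsf{Enl}_\bot((R_\bot)_\div) \models E_{\mathsf{ftc-cm}}$ and take an arbitrary $a \neq \bot$, aiming to show $0 \cdot a = 0$. First a preliminary fact about $WCR_\bot$'s: distributivity forces $0 \cdot a \neq \bot$ whenever $a \neq \bot$, since $0 \cdot a = \bot$ would give $a = a \cdot 1 = a \cdot (0 + 1) = a \cdot 0 + a \cdot 1 = \bot + a = \bot$. Writing $d = 0 \cdot a$, we thus have $d \neq \bot$, and hence $b := 1 + d \neq \bot$ (otherwise $0 \cdot b = \bot$, whereas $0 \cdot b = 0 \cdot a = d \neq \bot$). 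Now I would instantiate the $0 \cdot z$-shifting axiom $\frac{x}{y + 0 \cdot z} = \frac{x + 0 \cdot z}{y}$ of Table~\ref{FTCcm} at $x = y = 1$, $z = a$ to obtain $\frac{1}{b} = \frac{b}{1} = b$. Here the inverse-based nature of the operation is decisive: since $\frac{1}{b} = b \neq \bot$, the value $b$ must be a \emph{genuine} inverse of $b$ in $R_\bot$, i.e.\ $b \cdot b = 1$. On the other hand the weak-ring identities give $d + d = d$ and $d \cdot d = d$ (using $0 + 0 = 0$ and equation (10) of Table~\ref{EwcrBot}), whence $b^2 = (1+d)^2 = 1 + d = b$. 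Combining these, $b = b^2 = 1$, that is $1 + d = 1$; cancelling via $1 + (-1) = 0 \cdot 1 = 0$ then yields $d = 0$. Thus $0 \cdot a = 0$ for every non-$\bot$ element, i.e.\ $R_\bot \models \mathsf{NVL}$.

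The main obstacle is locating the single equation of $E_{\mathsf{ftc-cm}}$ that detects an $\mathsf{NVL}$-violation; the $0 \cdot z$-shifting axiom is precisely the right one, but \emph{only} because division is inverse based -- in a general GCM one could consistently set $\frac{1}{b} = b$ without $b$ being invertible, which is exactly why the analogous statement fails for arbitrary models of $E_{\mathsf{ftc-cm}}$. The second delicate point, easy to overlook, is ruling out $0 \cdot a = \bot$ for non-$\bot$ $a$; this is where the distributive law rather than any division axiom does the work, and it guarantees that an $\mathsf{NVL}$-violating $a$ has $0 \cdot a$ distinct from both $0$ and $\bot$, so that $b = 1 + 0 \cdot a$ is a genuine non-$\bot$, non-invertible witness on which the shifting axiom can be refuted.
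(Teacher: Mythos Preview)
Your proof is correct and follows essentially the same route as the paper. Both directions match: the ``if'' direction reduces to $\mathsf{Enl}_\bot(S_\div)$ for a genuine ring $S$ via the preceding lemma, and the ``only if'' direction uses the very same witness $b = 1 + 0\cdot a$, the shifting axiom to obtain $b^{-1}=b$, the inverse-based definition of division to turn this into $b\cdot b = 1$, and a direct expansion to obtain $b\cdot b = b$, whence $0\cdot a = 0$. The paper is slightly terser (it frames the second half as a contradiction and does not name axiom~(15) explicitly), while you spell out the preliminary step $0\cdot a\neq\bot$ more carefully; but the argument is the same. One small quibble: your intermediate claim $d\cdot d = d$ is not quite what axiom~(10) gives --- it yields $d\cdot d = 0\cdot d$ --- but since $d + 0\cdot d = d$ you still get $b^2 = 1 + d + d^2 = 1 + d = b$, so the conclusion is unaffected.
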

\begin{proof} 
If $R_\bot$ satisfies $\mathsf{NVL}$
then it is of the form $\mathsf{Enl}_\bot (R)$ for some ring $R$, 
so that $\mathsf{Enl}_\bot((R_\bot)_\div)$ satisfies $E_{\mathsf{ftc-cm}}$. 

For the other direction, assume that $\mathsf{Enl}_\bot((R_\bot)_\div)$ satisfies $E_{\mathsf{ftc-cm}}$ and suppose $a \in R_\bot$ satisfies $0 \cdot a \neq 0$ and $a \neq \bot$. We consider $b = 1 + 0 \cdot a$. We first notice $b \neq \bot$, as otherwise: $\bot = \bot + (-1) =  b + (-1) = (1 + 0 \cdot a)   + (-1) = 0 \cdot a$ so that $a = (1+0) \cdot a = a + \bot = \bot$. Thus if $b^{-1}$ is defined in $\mathsf{Enl}_\bot((R_\bot)_\div)$ it must be a  proper inverse of $b$.
We find $ b^{-1} = (1 + 0 \cdot a)^{-1} = 1 + 0 \cdot a = b$, so that 
$$ 1 = b \cdot b^{-1} = (1 +  0 \cdot a) \cdot (1 +  0 \cdot a) = 1 + (0 + 0)  \cdot a + (0 \cdot a) \cdot (0 \cdot a) = 1 + 0 \cdot a$$ 
thus $0 \cdot a = 0$ in contradiction with the assumptions on $a$.
\end{proof} 

We notice that a WCR which does not satisfy $0 \cdot x \neq 0 \to x = \bot$ is easy to find: add a constant $c$ and take an initial algebra of $K = \mathsf{Alg}(\Sigma_{r, \bot, c}, E_{\mathsf{ftc-cm},c})$.

In~\cite{DiasD2023} one finds results concerning for precisely which WCR's have an expansion to a model of $E_{\mathsf{ftc-cm}}$ (i.e., to a generalised common meadow).

Recalling from Section \ref{PAs} the notation of~\cite{BergstraT2022TM}, for a common meadow $M$, $\mathsf{Pdt}_\bot(M)$ is a meadow (= field with partial division). For a commutative ring with common division $R_{\div, \bot}$, $\mathsf{Pdt}_\bot(R_{\div, \bot})$ is a commutative ring with partial division, and for an integral domain with common division $R_{\div, \bot}$, 
$\mathsf{Pdt}_\bot(M)$ is an integral domain with partial division.



\section{Examples of conditional equations}\label{conditional_equations}


\subsection{Deriving conditional equations}
We will comment on the logic of common meadows which seems to allow for a rather tailor made development using an additional auxiliary operator $\zeta$ and various dedicated proof rules in combination with the rule $\mathsf{R_{cm}}$ mentioned above.

Obviously, all non-trivial ($0 \neq 1$) commutative rings with inverse based common division satisfy the conditional equation
$0 = 1 \to 0 = \bot.$ More generally:

\begin{proposition}  
$E_{\mathsf{ftc-cm}} \vdash 0 = 1 \to 0 = \bot$.
\end{proposition}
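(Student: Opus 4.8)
The plan is to work entirely by substitution using just two of the division axioms from Table~\ref{FTCcm}, namely $\bot = \frac{1}{0}$ and $x = \frac{x}{1}$, together with the hypothesis $0 = 1$. Since conditional equational logic lets us assume the premise $0 = 1$ and then reason equationally to obtain the conclusion, the whole argument will be a short chain of term rewrites under that assumption.

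Concretely, I would proceed as follows. First, invoke the axiom $\bot = \frac{1}{0}$ to express $\bot$ as a flat fracterm whose denominator is $0$. Next, use the hypothesis $0 = 1$ to substitute $1$ for the $0$ appearing in the denominator, giving $\bot = \frac{1}{1}$. Then apply the axiom $x = \frac{x}{1}$ in the instance $x = 1$ (read right-to-left) to collapse $\frac{1}{1}$ to $1$, so that $\bot = 1$. Finally, apply the hypothesis $0 = 1$ once more to rewrite $1$ as $0$, yielding $\bot = 0$, i.e.\ $0 = \bot$, as required. Assembled into a single display, the derivation reads
$$\bot = \frac{1}{0} = \frac{1}{1} = 1 = 0,$$
where the second equality uses $0 = 1$ on the denominator, the third uses $x = \frac{x}{1}$ at $x = 1$, and the fourth uses $0 = 1$ again.

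There is essentially no hard step here: the substitution of equal terms into equations and the transitivity of equality are exactly the rules available in conditional equational logic (recall the remark after the initial-algebra construction in Section~\ref{equations_and_absorption}). The only point requiring a moment's thought is the initial one, namely recognising that $\bot$ must be routed through its fracterm representation $\frac{1}{0}$ so that the hypothesis $0 = 1$ can act on the denominator; once $\bot$ is written as a fraction with denominator $0$, the collapse via $x = \frac{x}{1}$ is forced. I therefore expect the entire proof to be a two- or three-line calculation with no genuine obstacle.
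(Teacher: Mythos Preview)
Your proposal is correct and is essentially identical to the paper's proof: the paper writes the chain $0 = 1 = \frac{1}{1} = \frac{1}{0} = \bot$, which is exactly your chain read in reverse, using the same two axioms $x = \frac{x}{1}$ and $\bot = \frac{1}{0}$ together with the hypothesis $0=1$.
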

\begin{proof}$E_{\mathsf{ftc-cm}} \cup \{0=1 \} \vdash
 0 =1 = \frac{1}{1} = \frac{1}{0} = \bot$.
\end{proof}

However, the structure $A_0$ in which $0=1, \bot \neq 0$ and
$$ 0 +0 = 0+1 = 0 \cdot 0 = 0 \cdot 1 =0, \  \bot + 0 = \bot + 1 = \bot + \bot= \bot$$
satisfies $E_{\mathsf{wcr},\bot}$ while it rejects $0 = 1 \to 0 = \bot$.
We find that $E_{\mathsf{wcr},\bot} \not\vdash 0 = 1 \to 0 = \bot.$ It follows that $E_{\mathsf{ftc-cm}}$ is 
not a conservative extension of $E_{\mathsf{wcr},\bot}$.

As a strengthening of $E_{\mathsf{ftc-cm}} \vdash 0 = 1 \to 0 = \bot$ we find:
\begin{proposition}
\label{aux1}
$E_{\mathsf{ftc-cm}} \vdash x = x+ 1 \to x = \bot.$
\end{proposition}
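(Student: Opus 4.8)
The plan is to reduce the whole conclusion $x = \bot$ to the single equation $0 \cdot x = \bot$. Once $0 \cdot x = \bot$ is in hand, the rest is automatic: since $1 = 1 + 0$ (axiom $x+0=x$) and distributivity give $x = 1 \cdot x = (1+0)\cdot x = x + 0\cdot x$, the absorption axiom $x + \bot = \bot$ yields $x = x + 0\cdot x = x + \bot = \bot$. So the entire content of the proof is extracting $0\cdot x = \bot$ from the hypothesis $x = x+1$.

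The first step is to rewrite the hypothesis in a form that carries a factor $0\cdot$. Adding $-x$ to both sides of $x = x+1$ and using commutativity and associativity gives $x + (-x) = (x + (-x)) + 1$, and the weakened-ring axiom $x + (-x) = 0 \cdot x$ converts this into $0 \cdot x = 0 \cdot x + 1$, i.e.\ $0\cdot x = 1 + 0\cdot x$. This is the only use of the hypothesis, and it is exactly what lets me identify two denominators in the next step.

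The key move is to compute $\frac{1}{0\cdot x}$ in two ways, exploiting that its denominator $0\cdot x$ now equals $1 + 0\cdot x$. Two identities hold for every $x$ with no hypothesis, purely from the denominator-shifting axiom $\frac{u}{y + (0\cdot z)} = \frac{u + (0\cdot z)}{y}$, the axiom $\frac{u}{1} = u$, and the routine fact $\frac{u}{0} = \bot$ (used verbatim in the proof immediately preceding this proposition): namely $\frac{1}{0 + 0\cdot x} = \frac{1 + 0\cdot x}{0} = \bot$ and $\frac{1}{1 + 0\cdot x} = \frac{1 + 0\cdot x}{1} = 1 + 0\cdot x$. Since $0 + 0\cdot x = 0\cdot x$ always, the first reads $\frac{1}{0\cdot x} = \bot$. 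Using the hypothesis $1 + 0\cdot x = 0\cdot x$ to rewrite the denominator of the second, it reads $\frac{1}{0\cdot x} = 1 + 0\cdot x$. Chaining the two gives $1 + 0\cdot x = \bot$, and one final application of the hypothesis gives $0\cdot x = 1 + 0\cdot x = \bot$, which is the equation I reduced to.

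I do not expect a genuine obstacle: the only ingredient that is not a direct instance of the listed equations is the auxiliary fact $\frac{u}{0} = \bot$ (equivalently $u\cdot\bot = \bot$, multiplicative absorption of $\bot$), and this is a standard consequence of $E_{\mathsf{ftc-cm}}$ already invoked just above in the text. The single point that requires care is the bookkeeping of the $0\cdot z$-shift: the same axiom is applied once with $y = 1$ (leaving a denominator that still carries the $1$) and once with $y = 0$ (producing a genuine $\frac{\cdot}{0}$), and it is precisely the hypothesis-supplied equality $0\cdot x = 1 + 0\cdot x$ that glues these two instances together into a single value for $\frac{1}{0\cdot x}$.
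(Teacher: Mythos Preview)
Your proof is correct and follows essentially the same line as the paper's: derive $0\cdot x = 1 + 0\cdot x$ from the hypothesis, then use fracterm manipulations to show $0\cdot x = \bot$, and finish with $x = x + 0\cdot x = x + \bot = \bot$. The only cosmetic difference is in how $\frac{1}{0\cdot x} = \bot$ is obtained: the paper factors $\frac{1}{0\cdot x} = \frac{1}{0}\cdot\frac{1}{x} = \bot\cdot\frac{1}{x} = \bot$ via the multiplication-of-fracterms axiom, whereas you use a second instance of the denominator-shift axiom with $y=0$ to get $\frac{1}{0\cdot x} = \frac{1+0\cdot x}{0} = \bot$; both are immediate.
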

\begin{proof} We will make implicit use of the axioms of $E_{\mathsf{ftc-cm}}$. Suppose $x = x + 1$ then $x -x = (x+1) - x= (x-x) + 1$ so that
$0 \cdot x = 0 \cdot x + 1$. 
Now $0 \cdot x = 1 + 0 \cdot x =  \frac{1} {1 + 0 \cdot x} = \frac{1} {0 \cdot x} =   \frac{1}{0} \cdot \frac{1}{ x}= \bot \cdot \frac{1}{ x} = \bot$. Thus $x = 1 \cdot x = (1 + 0) \cdot x = x + 0 \cdot x =  x + \bot = \bot$.
\end{proof}

Similarly we find:
\begin{proposition} 
$E_{\mathsf{ftc-cm}} \vdash 0 \cdot x =  \frac{1}{y}  \to x = \bot.$
\end{proposition}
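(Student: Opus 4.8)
The plan is to reduce everything to the single claim $0 \cdot x = \bot$. Once that is in hand, the final move from the proof of Proposition~\ref{aux1} applies verbatim: $x = 1 \cdot x = (1+0)\cdot x = x + 0\cdot x = x + \bot = \bot$. So the whole task is to convert the hypothesis $0 \cdot x = \frac{1}{y}$ into an assertion that $0\cdot x$ equals $\bot$, and the difficulty is that in a generalised common meadow one cannot simply multiply the hypothesis by $0$, since $0\cdot(-)$ need not collapse to $0$.

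The engine I would use is the same one that drives Proposition~\ref{aux1}: for every term $w$ one has $E_{\mathsf{ftc-cm}} \vdash \frac{1}{0\cdot w} = \bot$, because $\frac{1}{0\cdot w} = \frac{1}{0}\cdot\frac{1}{w} = \bot \cdot \frac{1}{w} = \bot$, using \eqref{fracDivMult}, the axiom $\bot = \frac{1}{0}$, and absorption of $\bot$. Having a reciprocal available is exactly what lets this lemma bite. Concretely, I would first rewrite $\frac{y}{y}$ through the hypothesis: $\frac{y}{y} = y\cdot\frac{1}{y} = y\cdot(0\cdot x) = 0\cdot(xy)$, so that by the lemma $\frac{1}{\,\frac{y}{y}\,} = \frac{1}{0\cdot(xy)} = \bot$.

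Next I would pin $\frac{y}{y}$ down to $\bot$ itself. The identity $\frac{y}{y} = 1 + 0\cdot\frac{1}{y}$ (immediate from \eqref{fracDivSum} together with $0 = \frac{0}{1}$) exhibits $\frac{y}{y}$ in the form $1 + 0\cdot z$, and such terms are self-reciprocal: instantiating the axiom $\frac{x}{y + 0\cdot z} = \frac{x + 0\cdot z}{y}$ at numerator and first denominator equal to $1$ gives $\frac{1}{1 + 0\cdot z} = 1 + 0\cdot z$. Hence $\frac{y}{y} = \frac{1}{\,\frac{y}{y}\,} = \bot$. Feeding this back through the same identity yields $1 + 0\cdot\frac{1}{y} = \bot$, and since $\frac{1}{y} = 0\cdot x$ this reads $1 + 0\cdot(0\cdot x) = \bot$.

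Finally I would unwind this to $0\cdot x = \bot$ using two one-line absorption facts: $1 + w = \bot$ forces $w = \bot$ (add $-1$ and use $0 = 1+(-1)$ from axiom~(4) together with $\bot + z = \bot$), and $0\cdot v = \bot$ forces $v = \bot$ (since $v = v + 0\cdot v$). Applying the first with $w = 0\cdot(0\cdot x)$ and then the second twice gives $0\cdot x = \bot$ and then $x = \bot$. The one genuinely non-routine step is the self-reciprocity of $\frac{y}{y}$: this is what converts the otherwise harmless idempotent $\frac{y}{y}$ into $\bot$, and it is precisely what lets the argument go through \emph{without} any appeal to $0\cdot 0 = 0$ or to ring-style cancellation, which are unavailable here.
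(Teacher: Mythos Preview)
Your argument is correct. Both your proof and the paper's begin by computing $\frac{y}{y}$ in two ways via the hypothesis, and both finish by reducing to the implication $1+0\cdot x=\bot \Rightarrow x=\bot$; where they diverge is in the middle. The paper substitutes the hypothesis into $\frac{y}{y}=1+\frac{0}{y}$ and uses the derived identities $0\cdot(xy)=0\cdot x+0\cdot y$, $0\cdot y+0\cdot y=0\cdot y$, and $0\cdot 0\cdot x=0\cdot x$ to manufacture an equation of the shape $w=1+w$ (with $w=0\cdot x+0\cdot y$), then invokes Proposition~\ref{aux1} directly. You instead show $\frac{y}{y}=0\cdot(xy)$, so that $\frac{1}{\,y/y\,}=\bot$, and then use the self-reciprocity $\frac{1}{1+0\cdot z}=1+0\cdot z$ (an immediate instance of axiom~(15)) to conclude $\frac{y}{y}=\bot$ outright. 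Your route is a little more self-contained, needing only the explicitly listed axioms rather than the auxiliary $0\cdot(-)$ identities the paper imports from earlier work; the paper's route reuses Proposition~\ref{aux1} wholesale rather than just its final line. Both are short, and the self-reciprocity observation is a pleasant trick.
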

\begin{proof} 
$E_{\mathsf{ftc-cm}}$ implies $\frac{y}{y} = 1 + \frac{0}{y}$ so that with $0 \cdot x =  \frac{1}{y}$   we find 
$\frac{y}{y} = y \cdot \frac{1}{y} = y \cdot 0 \cdot x = 0\cdot x + 0 \cdot y = 1 +  \frac{0}{y} = 1 + 0 \cdot \frac{1}{y} = 
1 + 0 \cdot 0 \cdot x = 1 + 0 \cdot x$ Upon adding $0 \cdot y$ on both sides  
$0\cdot x + 0 \cdot y + 0\cdot y = 1 +  0\cdot x + 0\cdot y \ \textrm{and} \ 0\cdot x + 0 \cdot y = 1 +  0\cdot x + 0\cdot y$ so that 
$0\cdot x + 0 \cdot y  = 1 +  0\cdot x + 0\cdot y $
 and with Proposition \ref{aux1} $0\cdot x + 0 \cdot y = \bot$. Then
with  $0\cdot x + 0 \cdot y = 1 +  0\cdot x$  as derived above  we have $1 +  0\cdot x= \bot $ from which $x= \bot$ easily follows. \end{proof}

\begin{proposition} If $M \models x + y = \bot \to t=r$ in all common meadows $M$ then 
$E_{\mathsf{ftc-cm}} \vdash x + y = \bot \to t=r$.
\begin{proof} 
Suppose $t$ contains both $x$ and $y$ as free variables then, 
using arguments from~\cite{BergstraP2015} or~\cite{BergstraT2023arxiv}, 
$E_{\mathsf{ftc-cm}} \vdash t = t + 0\cdot x + 0 \cdot y = \bot$, 
so that in all common meadows $r = \bot$. 
In this case apply fracterm flattening and choose terms $p$ over the signature of rings such that 
$E_{\mathsf{ftc-cm}} \vdash r = \frac{p}{q}$. 
It follows that in all common meadows (and for that reason in all fields $F$), 
$q = 0$ so that $E_{\mathsf{ftc-cm}} \vdash q = 0$ and $E_{\mathsf{ftc-cm}} \vdash r = \bot$, 
whence  $E_{\mathsf{ftc-cm}} \vdash t = r$. 

We may then focus on the case that
both $t$ and $r$ contain at most one variable of $x$ and $y$. We assume that $t$ does not contain $y$ 
and then we distinguish two cases: 

(i) $x$ occurs in $t$ and $y$ occurs in $r$, in which case 
$E_{\mathsf{ftc-cm}} \vdash t = t + 0\cdot x$ and $E_{\mathsf{ftc-cm}} \vdash r = r + 0\cdot y$ so that in all common meadows
$\bot = t + 0 \cdot x + 0 \cdot y = t  + 0 \cdot y = r + 0 \cdot y = r$. From this observation we find that, upon choosing $x= \bot$,
$\bot = r$ holds in all common meadows so that  it is derivable from 
$E_{\mathsf{ftc-cm}} $ according to the completeness result of~\cite{BergstraT2023arxiv}.
The case that $y$ occurs in $t$ and $x$ occurs in $r$ is dealt with symmetrically.

(ii) $y$ does not occur in $t$ in which case 
 $y $ occurs in neither $t$ and $r$, 
then choosing $y = \bot$ we find that $t = r$ is valid in all common meadows so that it is derivable from 
$E_{\mathsf{ftc-cm}} $ according to the completeness result of~\cite{BergstraT2023arxiv}. 
\end{proof}
\end{proposition}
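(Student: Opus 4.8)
The plan is to exploit the very special shape of the hypothesis. Since the non-$\bot$ part of a common meadow is a field, a sum $x+y$ of two genuine field elements is again a field element, so in any common meadow $x+y=\bot$ holds \emph{exactly} when $x=\bot$ or $y=\bot$; thus the semantic hypothesis only constrains the two substitution instances $x\mapsto\bot$ and $y\mapsto\bot$. Because conditional completeness fails (Proposition~\ref{incomplete_conditional_theory}), the argument cannot appeal to it and must instead route everything through \emph{equational} validity and then invoke the equational completeness of $E_{\mathsf{ftc-cm}}$ for common meadows together with fracterm flattening (Theorem~\ref{FF}); this is the external input that does the real work.

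First I would record two provable facts inside $E_{\mathsf{ftc-cm}}$: (a) if a variable $v$ occurs in a term $s$ then $E_{\mathsf{ftc-cm}}\vdash s = s + 0\cdot v$; and (b) $E_{\mathsf{ftc-cm}}\vdash 0\cdot x + 0\cdot y = 0\cdot(x+y)$, so that under the hypothesis $x+y=\bot$ one derives $0\cdot x + 0\cdot y = 0\cdot\bot = \bot$ purely equationally. Combining (a) and (b) yields the key \emph{collapsing move}: whenever both $x$ and $y$ occur in a term $s$, then $E_{\mathsf{ftc-cm}}+\{x+y=\bot\}\vdash s = s + 0\cdot x + 0\cdot y = s + \bot = \bot$.

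Then I would split on the occurrences of $x$ and $y$ in $t$ and $r$. If both variables occur in $t$, the collapsing move gives $\vdash t=\bot$ under the hypothesis, and the valid conditional then forces $r=\bot$ in every common meadow satisfying $x+y=\bot$; it remains only to make $r=\bot$ provable. For this I flatten $r$ to $\tfrac{p}{q}$ with $p,q$ over $\Sigma_r$: if both variables also occur in $r$, the collapsing move applies again; otherwise $r$ omits, say, $y$, and the requirement that $r=\bot$ hold whenever $y=\bot$ with the remaining variables field-valued forces the denominator $q$ to vanish on every field, whence $q=0$ is equationally valid and hence provable, giving $\vdash r=\bot$ and thus $\vdash t=r$. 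The symmetric situation with $r$ in place of $t$ is identical. In the complementary case each of $t,r$ contains at most one of $x,y$; after renaming I may assume $t$ is free of $y$. If $y$ also fails to occur in $r$, setting $y\mapsto\bot$ makes the hypothesis vacuously true, so $t=r$ holds in all common meadows and, being $y$-free, holds unconditionally and is provable by equational completeness. If instead $x$ occurs in $t$ and $y$ in $r$, I instantiate $x\mapsto\bot$ (validating the hypothesis) to get $t=\bot$ by absorption and hence $r=\bot$ throughout; since $r$ is $x$-free this is an unconditional validity, hence provable, and symmetrically $\vdash t=\bot$, so again $\vdash t=r$.

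The step I expect to be the genuine obstacle is the flattening argument in the first case: translating the \emph{semantic} fact ``$r=\bot$ whenever one of $x,y$ is $\bot$'' into the \emph{provable} equation $r=\bot$. This is precisely the point where conditional completeness is unavailable, and one must instead extract an honest field identity $q=0$ from the absorption/vacuity analysis and feed it to the equational completeness theorem of the cited work.
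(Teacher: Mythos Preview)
Your proposal is correct and follows the paper's proof almost verbatim: the same collapsing identity $s = s + 0\cdot x + 0\cdot y = \bot$ for terms containing both variables, the same flattening reduction to an equational validity over fields, and the same instantiation of $x$ or $y$ by $\bot$ followed by an appeal to the equational completeness of $E_{\mathsf{ftc-cm}}$. Your case split under ``$t$ free of $y$'' omits the subcase where $t$ contains neither $x$ nor $y$ while $y$ occurs in $r$, but that is handled by instantiating $x\mapsto\bot$ exactly as in your first subcase, and the paper's own case analysis is equally loose at this point.
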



\subsection{AVL and conditional equations}\label{completeness_conjecture}

As an example of the use of this rule consider the following conditional equation:
$$\phi_{2,3,5} \equiv \frac{0}{2} \cdot  \frac{0}{3} =  \frac{0}{2} \cdot  \frac{0}{5} \to  \frac{0}{3} =  \frac{0}{5}. $$

\begin{lemma}
$\phi_{2,3,5}$ holds in all common meadows.
\end{lemma}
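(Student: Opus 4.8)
The plan is to verify $\phi_{2,3,5}$ semantically by a finite case analysis driven by the characteristic of the underlying field. The starting point is the following observation about any common meadow $\mathsf{Enl}_\bot(F_\div)$: for a numeral $n$ we have $\frac{0}{n} = 0 \cdot \frac{1}{n}$, and $\frac{1}{n}$ is a genuine element of $F$ exactly when $n$ is invertible, i.e. when $n \neq 0$ in $F$. Hence $\frac{0}{n} = 0$ when $n \neq 0$ in $F$, whereas $\frac{0}{n} = 0 \cdot \frac{1}{0} = 0 \cdot \bot = \bot$ when $n = 0$ in $F$. So each of $\frac{0}{2}$, $\frac{0}{3}$, $\frac{0}{5}$ takes the value $0$ or $\bot$, and the value $\bot$ is attained by the numeral $p$ precisely when $F$ has characteristic $p$.

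First I would record that a field has a unique characteristic, which is $0$ or a prime, so at most one of the three numerals $2, 3, 5$ can vanish in $F$; this splits the argument into four cases. If none of $2,3,5$ is zero, then $\frac{0}{2} = \frac{0}{3} = \frac{0}{5} = 0$, and both the hypothesis and the conclusion reduce to $0 = 0$. If $\mathrm{char}(F) = 2$, then $\frac{0}{2} = \bot$ while $\frac{0}{3} = \frac{0}{5} = 0$; by absorption both products $\frac{0}{2} \cdot \frac{0}{3}$ and $\frac{0}{2} \cdot \frac{0}{5}$ equal $\bot$, so the hypothesis holds and the conclusion $\frac{0}{3} = \frac{0}{5}$ is again $0 = 0$. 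If $\mathrm{char}(F) = 3$, then $\frac{0}{3} = \bot$ but $\frac{0}{2} = \frac{0}{5} = 0$; now the left-hand product is $0 \cdot \bot = \bot$ while the right-hand product is $0 \cdot 0 = 0$, so the hypothesis becomes $\bot = 0$, which fails since $\bot \neq 0$, and the conditional equation holds vacuously. The case $\mathrm{char}(F) = 5$ is symmetric, with the roles of the two products interchanged.

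The only point requiring care — what I regard as the crux of an otherwise routine check — is the asymmetry built into the hypothesis: the term $\frac{0}{2}$ occurs as a common factor on both sides, so when the $\bot$-value sits on the numeral $2$ the hypothesis is forced true and the conclusion must be verified directly, whereas when the $\bot$-value sits on $3$ or $5$ the hypothesis fails and truth is vacuous. Making this explicit needs nothing beyond the absorption property of $\bot$ and the fact that $\bot \neq 0$ in a common meadow; in particular, no appeal to $\mathsf{AVL}$ or to the full axiom set $E_{\mathsf{ftc-cm}}$ is required, since the verification is purely semantic. Having exhausted all cases, I would conclude that $\phi_{2,3,5}$ holds in every common meadow.
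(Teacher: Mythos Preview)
Your proof is correct. Both your argument and the paper's are case analyses, but the organisation differs in a way worth noting. You split directly on the characteristic of the underlying field $F$ and read off the values of $\frac{0}{2},\frac{0}{3},\frac{0}{5}$ from the semantic definition of $\mathsf{Enl}_\bot(F_\div)$; this makes the four cases uniform and the verification mechanical. The paper instead splits on whether $\frac{1}{3}=\bot$ or $\frac{1}{5}=\bot$, then invokes $\mathsf{AVL}$ to recover $3=0$ (respectively $5=0$) before computing the remaining values and deriving the contradiction $\bot=0$ from the hypothesis. Your semantic route is shorter and, as you observe, needs neither $\mathsf{AVL}$ nor $E_{\mathsf{ftc-cm}}$; the paper's route is slightly more ``internal'' to the axiom system, staying closer to the syntactic machinery used elsewhere in the section. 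The content is the same: in characteristic $3$ or $5$ the hypothesis fails (so the implication is vacuous), while in characteristic $2$ or otherwise the conclusion holds outright.
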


\begin{proof}
This is a simple case distinction. Note that the conclusion may be written as $0 \cdot \frac{1}{3} = 0 \cdot  \frac{1}{5}$.

Assuming $\frac{0}{2} \cdot  \frac{0}{3} =  \frac{0}{2} \cdot  \frac{0}{5} $ and $\frac{1}{3} = \bot$ we will show that 
$\frac{0}{5} =  \bot $, leaving the proof that with $\frac{1}{5} =  \bot$ one may derive $\frac{1}{3} = \bot$ to the reader. 
 If $\frac{1}{3} = \bot$ then it is immedate that $ 0 \cdot 3 = 0 \cdot (1 + 1 + 1) = 0$, moreover with $\mathsf{AVL}$ one finds $0 \cdot 3 = 3$ so that $3 = 0$. Then $ 5 \cdot 2 = 1 + 9 = 1\!\!\mod 3$, so that $2$ is the inverse of $5$ and conversely. It follows that 
 $0 \cdot (5 \cdot 2) = 0 \cdot 1 = 0$ and thus $0 \cdot 5 = 0$ and $0 \cdot 2 = 0$ so that $\frac{0}{5} =  0$ and $\frac{0}{2} = 0$.
 
Now the condition   $ \frac{0}{2} \cdot \frac{0}{3} =  \frac{0}{2} \cdot  \frac{0}{5}$ can be used, and upon substitution of the now known facts this yields: $0 \cdot \bot = 0 \cdot 0$ and thus $\bot = 0$, which suffices to obtain $\frac{0}{3} =  \frac{0}{5}$.
\end{proof}

\begin{proposition} 
$E_{\mathsf{ftc-cm}}+ \mathsf{AVL} \not \vdash \phi_{2,3,5} $.
\end{proposition}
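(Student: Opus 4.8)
The plan is to prove the non-derivability semantically: by soundness of conditional equational logic (and the Birkhoff--Mal'cev lemma recalled above, which identifies $\vdash$ with validity), it suffices to exhibit a single GCM $M$ with $M\models E_{\mathsf{ftc-cm}}+\mathsf{AVL}$ in which the premise $\frac{0}{2}\cdot\frac{0}{3}=\frac{0}{2}\cdot\frac{0}{5}$ holds but the conclusion $\frac{0}{3}=\frac{0}{5}$ fails. The Lemma above shows $\phi_{2,3,5}$ is valid in every common meadow, and as a quasi-identity it is therefore valid in every subalgebra of a product of common meadows; so $M$ must be a genuine GCM carrying a generalized zero $\frac{0}{n}=0\cdot\frac{1}{n}$ that is neither $0$ nor $\bot$. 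Writing $g(n)=\frac{0}{n}$, the premise reads $g(2)\cdot g(3)=g(2)\cdot g(5)$ and the conclusion $g(3)=g(5)$; the cleanest way to separate them is to arrange $g(2)=g(5)=e$ for a proper generalized zero $e$ while $g(3)=0$, for then both sides of the premise collapse to $e$, whereas $g(3)=0\neq e=g(5)$.

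First I would set $N=\mathsf{Enl}_\bot((\mathbb{Z}[1/3])_\div)$, the commutative ring with inverse based common division over $\mathbb{Z}[1/3]$ (in which $3$ is a unit but $2$ and $5$ are not), and $K=\mathsf{Enl}_\bot(\mathbb{Q}_\div)$, the common meadow of the rationals. Both satisfy $E_{\mathsf{ftc-cm}}$ (Proposition~\ref{RingSound}), hence so does $N\times K$. I would then define $M$ as the subset
$$M=\{(r,r)\mid r\in\mathbb{Z}[1/3]\}\cup\{(\bot,q)\mid q\in\mathbb{Q}\}\cup\{(\bot,\bot)\}\subseteq N\times K,$$
and check that $M$ is closed under all operations: sums, products and additive inverses preserve the three shapes, and the only interesting case is the multiplicative inverse, where $\frac{1}{(r,r)}=(1/r,1/r)$ when $r$ is a unit of $\mathbb{Z}[1/3]$ (i.e.\ $r=\pm 3^{k}$) but jumps to $(\bot,1/r)$ when $r$ is a non-zero non-unit, because $\frac{1}{r}=\bot$ in $N$ while $\frac{1}{r}=1/r$ in $K$. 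As a subalgebra of a model of $E_{\mathsf{ftc-cm}}$, $M$ is a GCM.

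For the computation, $2=(2,2)$ and $5=(5,5)$ are non-units in the first coordinate, so $\frac{1}{2}=(\bot,1/2)$ and $\frac{1}{5}=(\bot,1/5)$, giving $g(2)=0\cdot\frac{1}{2}=(\bot,0)=g(5)$, a proper generalized zero; whereas $3=(3,3)$ is a unit, so $\frac{1}{3}=(1/3,1/3)$ and $g(3)=(0,0)=0_M$. Hence $\frac{0}{2}\cdot\frac{0}{3}=(\bot,0)=\frac{0}{2}\cdot\frac{0}{5}$ while $\frac{0}{3}=(0,0)\neq(\bot,0)=\frac{0}{5}$, so the premise holds and the conclusion fails.

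The main obstacle, and the only delicate point, is verifying $M\models\mathsf{AVL}$, since $\mathsf{AVL}$ is a quasi-identity and is \emph{not} inherited from the ambient product: $N$ itself refutes it (being a ring, not a field, it has $\frac{1}{2}=\bot$ yet $0\cdot 2=0\neq 2$), so $N\times K\not\models\mathsf{AVL}$ and the check must be done by hand. The point is that in $M$ the equation $\frac{1}{x}=\bot$ can hold only at generalized zeros: from the inverse computation, $\frac{1}{x}=(\bot,\bot)$ forces the value of $x$ to vanish in both coordinates, i.e.\ $x\in\{(0,0),(\bot,0),(\bot,\bot)\}$, and each of these already satisfies $0\cdot x=x$. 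Intuitively, passing to this subalgebra of $N\times K$ repairs $\mathsf{AVL}$ precisely because the inverse of a non-unit such as $2$ no longer collapses to $\bot$ but escapes to the nonzero generalized zero $(\bot,0)$, whose $K$-coordinate keeps it distinct from $\bot$. With $M\models E_{\mathsf{ftc-cm}}+\mathsf{AVL}$ refuting $\phi_{2,3,5}$, soundness yields $E_{\mathsf{ftc-cm}}+\mathsf{AVL}\not\vdash\phi_{2,3,5}$.
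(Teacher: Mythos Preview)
Your countermodel is correct and is in fact isomorphic to the one the paper builds: in the paper's language your $M$ has the three-element lattice $0=(0,0)$, $c=(\bot,0)$, $\bot=(\bot,\bot)$, with the ring $\mathbb{Z}[1/3]$ sitting over $0$ and a copy of $\mathbb{Q}$ sitting over $c$, exactly the data the paper assembles by hand via the natural embedding $\iota\colon\mathbb{Z}(\tfrac{1}{3})\to\mathbb{Q}_\bot$ and the Dias--Dinis structure theory for GCMs.

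The difference is purely in presentation, but it is a useful one. The paper describes $M$ intrinsically (a lattice of generalized zeros with a ring attached to each) and then lists the operations case by case, appealing to~\cite{DiasD2023,DiasD2024} for the fact that such a gadget is a GCM. You instead realise $M$ extrinsically as a subalgebra of the product $\mathsf{Enl}_\bot((\mathbb{Z}[1/3])_\div)\times\mathsf{Enl}_\bot(\mathbb{Q}_\div)$: this makes $M\models E_{\mathsf{ftc-cm}}$ automatic (equations are inherited by subalgebras of products) and isolates the only nontrivial verification, namely $\mathsf{AVL}$, which you correctly note does \emph{not} descend from the product and must be checked directly. Your case analysis for $\mathsf{AVL}$ is right: $\tfrac{1}{x}=(\bot,\bot)$ forces $x\in\{(0,0),(\bot,0),(\bot,\bot)\}$, and each of these is its own $0$-multiple. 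The upshot is that your argument is self-contained and avoids citing the external structure theory, at the cost of having to verify closure of $M$ under the operations; the paper's version hides that closure check inside the cited lattice machinery.
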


\begin{proof} A model of $E_{\mathsf{ftc-cm}}+ \mathsf{AVL}$ in which $\phi_{2,3,5} $ fails can be found following the structure theory for GCM's as set out in~\cite{DiasD2023,DiasD2024}. The underlying idea is that all GCM's $M$, contain a lattice $L_M$ made up by elements from $0 \cdot M$. Let $m \in 0 \cdot M$ then $m$ determines the subring of $M$ which consists of 
multiples of $1 + m$. 

A common meadow always has 2-element lattice containing $0$ and $\bot$ only. 
The same holds for commutative rings with inverse based common division. 
Each element $b \in L_M$ of the lattice serves as the zero for a ring $R^b$. 

We consider a structure with a three element lattice containing: $0, c, \bot$.  
Here  $c$ can be understood as an element which satisfies $c = 0 \cdot c$ 
and $c = c +  \frac{0}{p} $ for all primes $p$. $R_{0} $ is a copy of $\Int(\frac{1}{3})$, 
that is the ring of integers augmented with a constant for $\frac{1}{3}$ 
which satisfies $3 \cdot \frac{1}{3} = 1$ and $R^c$, 
the substructure generated by $c$ is a common meadow of 
rational numbers,  denoted as $Q_{\bot}$. 

We assume that the domain of $R_{0} $ is made up of simplified fracterms with as denominators powers of $3$ only, while 
the domain of $R_{0} $ is made of of all simplified fracterms. The natural isomorphism $\iota \colon 
\Int(\frac{1}{3}) \to Q_{\bot}$ (i.e., from $R_{0} $ to $R_{c} $) is needed to describe the operations of $M$:

$p \star q = p \star_{R_c} q$,  for all pairs  $p,q \in R_c$, $\star \in \{+,-, \cdot ,\div\}$;

$p \star q = \iota(p) \star_{R_c} q$,  for all pairs  $p \in R_0,q \in R_c$, $\star \in \{+,-, \cdot ,\div\}$;

$p \star q = p \star_{R_c} \iota(q)$,  for all pairs  $p \in R_c,q \in R_0$, $\star \in \{+,-, \cdot ,\div\}$;

$p \star q = p \star_{R_0}q$,  for $p,q \in R_0$, $\star \in \{+,-, \cdot \}$;

$p \div 0 = \bot$ for $p \in R_0$;

$p \div q = \iota (p) \div \iota(q)$ for all $q$ with a numerator that is not a power of $3$.

\noindent It is an elementary check of al axioms that $M$ satisfies $E_{\mathsf{ftc-cm}}+ \mathsf{AVL}$ and indeed 
$M\not \models  \phi_{2,3,5}$.
\end{proof} 

We expect that the conditional equational theory of common meadows has no finite axiomatisation, more specifically we formulate the following conjecture:
\begin{conjecture} 
There is no finite set of conditional equations $E$ true of all 
common meadows such that for all prime numbers $p \geq 5$, $E \vdash   \frac{0}{2} \cdot  \frac{0}{3} =  \frac{0}{2} \cdot  \frac{0}{p} \to  \frac{0}{3} =  \frac{0}{p} $.
\end{conjecture}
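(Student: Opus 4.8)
The plan is to argue by contradiction and to play the finiteness of a hypothetical base against the infinitude of the primes. Suppose there were a finite $E \subseteq CEqnThy(\mathsf{CM})$ with $E \vdash \phi_{2,3,p}$ for every prime $p \geq 5$, where $\phi_{2,3,p} \equiv \frac{0}{2}\cdot\frac{0}{3} = \frac{0}{2}\cdot\frac{0}{p} \to \frac{0}{3} = \frac{0}{p}$. Since $E$ is finite there is a bound $B$ on the number of its conditional equations, their variables and their term-depth, and only finitely many primes occur among the numerals in $E$; call that finite set $P_E$. By soundness it then suffices, for a single well-chosen prime $p \geq 5$, to exhibit a model $M_p$ with $M_p \models E$ and $M_p \not\models \phi_{2,3,p}$, since this contradicts $E \vdash \phi_{2,3,p}$. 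The whole point of the argument is that we are free to pick $p \notin P_E$ and larger than any threshold forced by $B$.

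For the refuting model I would generalise the three–element lattice GCM used to prove $E_{\mathsf{ftc-cm}}+\mathsf{AVL}\not\vdash\phi_{2,3,5}$, using the Dias--Dinis structure theory to control the construction. There, with a lattice $\{0,c,\bot\}$, bottom ring $R_0$ a copy of $\Int$ in which $3$ is a unit but $2$ and $5$ are not, and a common meadow $R^c = Q_\bot$ of rationals glued on through $\iota$, one computes $\frac{0}{3}=0$ and $\frac{0}{2}=\frac{0}{5}=c$, so that the premise reads $c = c$ while the conclusion reads $0 = c$ and fails. The same computation refutes $\phi_{2,3,p}$ whenever $2$ and $p$ are non-units and $3$ is a unit of the base ring. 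The natural attempt is therefore to take $M_p$ with base ring $\Int$ localised away from $\{2,p\}$, so that the ``defect'' at $p$ recedes as $p$ grows, and then to prove the transfer statement $M_p \models E$.

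The step I expect to be the main obstacle is exactly this transfer, and it is presumably why the statement is left as a conjecture. One wants a \emph{locality lemma}: any conditional equation true in all common meadows, of complexity below $B$ and mentioning no prime $\geq p$, must survive in $M_p$, because its bounded, prime-poor terms cannot probe the one place where $M_p$ departs from a genuine common meadow. The snag is that, with the minimal three–element lattice, refuting $\phi_{2,3,p}$ \emph{forces} $2$ to be a non-unit of the base ring: if $\frac{0}{2}=0$ then the premise becomes equivalent to the conclusion. And $2$ is fixed across the whole family, not large. Thus merely enlarging $p$ does not immunise $M_p$ against an $E$ whose conditional equations detect the anomaly ``$2\neq 0$ yet $\frac{1}{2}$ is not a proper inverse'' at the prime $2$. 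The remedy would be to engineer a richer distributive lattice --- realisable as a GCM by the structure theory --- in which $\frac{0}{2}$ lies above both $\frac{0}{3}$ and $\frac{0}{p}$ in the lattice of $0\cdot M$ while $\frac{0}{3}\neq\frac{0}{p}$, so that the premise $\frac{0}{2}\cdot\frac{0}{3}=\frac{0}{2}\cdot\frac{0}{p}$ holds because both products collapse onto $\frac{0}{2}$ and no single prime carries the whole defect. The difficulty then migrates to verifying that this custom GCM genuinely satisfies $E$.

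An alternative packaging is to assemble models $M_p$, each refuting only its own $\phi_{2,3,p}$, into a non-principal ultraproduct $\prod_p M_p / U$, which then satisfies every $\phi_{2,3,q}$, each being refuted by a single factor. If some fixed $c^\ast \in E$ failed in infinitely many $M_p$, the ultraproduct would refute $c^\ast$; a contradiction would follow provided the ultraproduct still satisfies $c^\ast$, that is, provided the limit object lands in, or close enough to, $\mathsf{CM}$. Since $\mathsf{CM}$ is not even a quasivariety, this closure is not automatic, and the ultraproduct route runs into the same essential wall: controlling how an unknown finite $E$ interacts with the fixed small primes $2$ and $3$ while the movable prime $p$ is sent to infinity.
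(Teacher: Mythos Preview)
The statement is a \emph{conjecture}; the paper does not prove it. Immediately after establishing $E_{\mathsf{ftc-cm}}+\mathsf{AVL}\not\vdash\phi_{2,3,5}$, the authors say only that they ``expect that the conditional equational theory of common meadows has no finite axiomatisation'' and then record the conjecture with no supporting argument. There is therefore no paper proof to compare your proposal against.

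Your proposal is not a proof either, and you are explicit about this. What you have written is a correct reconstruction of the natural strategy --- build a family of GCMs $M_p$ refuting $\phi_{2,3,p}$ while satisfying a hypothetical finite $E$ --- together with an honest identification of the crux: a locality or transfer lemma asserting that conditional equations of bounded complexity cannot detect the defect introduced at a large prime $p$. Your observation that this is blocked because the premise of $\phi_{2,3,p}$ forces the \emph{fixed} prime $2$ to be a non-unit in the base ring, so that a finite $E$ could in principle exploit that stationary anomaly regardless of how large $p$ is, is exactly the obstruction. The two repairs you sketch (a richer lattice so that no single small prime carries the whole defect; an ultraproduct over $p$) are the natural moves, and your diagnosis that both founder on the same issue --- the non-closure of $\mathsf{CM}$ under the relevant constructions, and the unavoidable involvement of $2$ and $3$ --- is sound.

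In short: the paper offers no proof, your write-up is a coherent heuristic analysis rather than a proof, and the gap you name (the transfer step in the presence of the fixed small primes) is the genuine open problem.
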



\section{Customising the logic of common meadows and completeness}\label{customising_logic}

\subsection{Adding a new proof rule}

The following proof rule is sound for common meadows, with $E$ a conjunction of equations: 
$$\mathsf{R_{cm}}\,\,\frac{\vdash E \wedge  t = \bot  \to r =\bot, \  \vdash E \wedge r = \bot  \to t=\bot}
{ \vdash E \to 0 \cdot t=0 \cdot r}$$

We denote derivability in the proof system of conditional equational logic extended with 
$\mathsf{R_{cm}}$ by $\vdash_{\mathsf{R_{cm}}}$. 
The following observation is immediate.
\begin{proposition} 
\label{Rcm-sound1}
The proof system $\vdash_\mathsf{R_{cm}}$ is sound for commutative rings with inverse based common division.
\end{proposition}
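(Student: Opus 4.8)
The plan is to argue by induction on the structure of a $\vdash_{\mathsf{R_{cm}}}$-derivation. Since ordinary conditional equational logic is already sound for every class of $\Sigma_{r,\div,\bot}$-algebras, all of its axioms and inference rules preserve truth in $\mathsf{CR}_{\div,\bot}$, and the only new case to treat is an application of $\mathsf{R_{cm}}$. Thus the whole content reduces to checking that $\mathsf{R_{cm}}$ is \emph{locally sound}: for every $M \in \mathsf{CR}_{\div,\bot}$, if $M$ validates both premises $E \wedge t = \bot \to r = \bot$ and $E \wedge r = \bot \to t = \bot$, then $M$ validates the conclusion $E \to 0 \cdot t = 0 \cdot r$.

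The key fact I would isolate first is a definedness dichotomy peculiar to $\mathsf{CR}_{\div,\bot}$: writing $M = \mathsf{Enl}_\bot(R_\div)$ for a commutative ring $R$, every element $a \in M$ satisfies $0 \cdot a = 0$ when $a \in R$ (because $R$ is a ring) and $0 \cdot a = \bot$ when $a = \bot$ (because $\bot$ is absorptive). Equivalently, this is just the validity of $\mathsf{NVL}$ in such structures. Consequently, since $\bot \notin R$ and hence $0 \neq \bot$, for any two elements $b, c \in M$ one has $0 \cdot b = 0 \cdot c$ if, and only if, $b = \bot \Leftrightarrow c = \bot$. This is precisely the place where the argument uses that we are in a ring with common division rather than an arbitrary GCM, in which $0 \cdot a$ may take values other than $0$ and $\bot$.

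With this in hand the rule is immediate. Fix $M$ as above and an arbitrary valuation of the variables into $M$ under which the hypotheses $E$ hold, and let $b$ and $c$ be the resulting values of $t$ and $r$. The first premise gives $b = \bot \Rightarrow c = \bot$ and the second gives $c = \bot \Rightarrow b = \bot$, so $b = \bot$ if, and only if, $c = \bot$. By the dichotomy above, $0 \cdot b = 0 \cdot c$, which is exactly the conclusion $0 \cdot t = 0 \cdot r$ under this valuation. As the valuation was arbitrary among those satisfying $E$, the conclusion sequent holds in $M$, completing the induction step.

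I do not expect a genuine obstacle here; the proposition is essentially a sanity check that the new rule does not break soundness before it is deployed in the completeness theorem. The one point worth stating carefully is the quantifier structure of the induction: in the premises of $\mathsf{R_{cm}}$ the turnstile is $\vdash_{\mathsf{R_{cm}}}$, so the validity of those premises in $\mathsf{CR}_{\div,\bot}$ is furnished by the induction hypothesis rather than assumed outright, and it is only the passage from valid premises to valid conclusion — the local soundness of $\mathsf{R_{cm}}$ — that invokes the ring-specific dichotomy.
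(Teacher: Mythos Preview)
Your argument is correct and is exactly the intended one; the paper itself dismisses this proposition as ``immediate'' without giving any details, so you have simply spelled out the dichotomy $0\cdot a\in\{0,\bot\}$ (equivalently $\mathsf{NVL}$) that makes the rule obviously sound in $\mathsf{CR}_{\div,\bot}$. There is nothing to add or compare.
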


As $\mathsf{AVL}$ fails in commutative rings  with inverse based common division that are not fields (Proposition \ref{axiom _meadows_relative_to_rings}), and in view of Proposition~\ref{Rcm-sound1}, $\mathsf{AVL}$ is not derivable with $\vdash_{\mathsf{R_{cm}}}$ from $E_{\mathsf{ftc-cm}}$.

We will write $\models_{\mathsf{cm}}$ for satisfaction in all common meadows and $\vdash_{\mathsf{cm/avl}}$ for derivability from $E_{\mathsf{ftc-cm}}+ \mathsf{AVL}$.

\begin{lemma}\label {completenessA} 
If $\models_{\mathsf{cm}} E\to \frac{1}{t} = \bot$ then $\vdash_{\mathsf{cm/avl}} E \to\frac{1}{t} = \bot$.
\end{lemma}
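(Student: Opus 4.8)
The plan is to argue contrapositively and model-theoretically, converting a generalised common meadow countermodel into a common meadow countermodel by means of the homomorphism machinery developed above. Suppose $\not\vdash_{\mathsf{cm/avl}} E \to \frac{1}{t} = \bot$. Since derivations from $E_{\mathsf{ftc-cm}} + \mathsf{AVL}$ in conditional equational logic coincide with first order derivations, and first order logic is complete, non-derivability yields a model $A$ of $E_{\mathsf{ftc-cm}} + \mathsf{AVL}$ — that is, a GCM satisfying $\mathsf{AVL}$ — together with an assignment $\sigma$ of the variables such that $A, \sigma \models E$ while $A, \sigma \not\models \frac{1}{t} = \bot$. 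Write $a = \llbracket \tfrac{1}{t} \rrbracket_\sigma$, so that $a \neq \bot$. It then suffices to produce a common meadow $N$ with an assignment under which $E$ holds but $\frac{1}{t}$ takes a value distinct from $\bot$, for this contradicts $\models_{\mathsf{cm}} E \to \frac{1}{t} = \bot$ and establishes the lemma.

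First I would collapse $A$ onto a common meadow while keeping the witness $a$ away from $\bot$. If $A$ has no non-$\bot$ element $b$ with $0 \cdot b \neq 0$, then $A$ satisfies $\mathsf{NVL}$, so $A$ is a commutative ring with inverse based common division, and with $\mathsf{AVL}$ it is already a common meadow by Proposition~\ref{axiom _meadows_relative_to_rings}; take $N = A$. Otherwise I distinguish two cases according to $a$. If $0 \cdot a \neq 0$, then since $a \neq \bot$ Lemma~\ref{HomToCM2} applies directly and supplies a congruence $=_{\mathsf{cm}}$ for which $N = A/\!=_{\mathsf{cm}}$ is a common meadow with $a \not=_{\mathsf{cm}} \bot$. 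If instead $0 \cdot a = 0$, I would invoke the same well-ordered quotient construction underlying Lemma~\ref{HomToCM2}, now with no constraint on a distinguished element, to obtain a common meadow quotient $\pi \colon A \to N$; here $a$ survives automatically, for $\pi(0 \cdot a) = 0 \cdot \pi(a)$ and $0 \cdot a = 0$, so $\pi(a) = \bot$ would force $0 = \pi(0) = 0 \cdot \bot = \bot$ in $N$, contradicting the non-triviality of the common meadow $N$. In all cases we obtain a common meadow $N$ and a canonical GCM homomorphism $\pi \colon A \to N$ with $\pi(a) \neq \bot$.

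The remaining step is routine: $\pi$ being a homomorphism gives $\llbracket s \rrbracket_{\pi \circ \sigma} = \pi(\llbracket s \rrbracket_\sigma)$ for every term $s$, so $A, \sigma \models E$ yields $N, \pi\circ\sigma \models E$, while $\llbracket \tfrac{1}{t} \rrbracket_{\pi\circ\sigma} = \pi(a) \neq \bot$; thus $N$ is the desired countermodel. The main obstacle I expect to be entirely contained in the passage from $A$ to $N$: one must ensure simultaneously that the quotient is a genuine non-trivial common meadow, that it still validates every equation of $E$, and that $a$ is not identified with $\bot$. The delicate point is the possibility of $\bot$-splitting in $A$, which is precisely why a naive union of the congruences $=_{0\cdot b}$ need not be a congruence and why the well-ordered construction of Lemma~\ref{HomToCM2} is required; the role of $\mathsf{AVL}$ is to guarantee that each stage of that construction preserves $\mathsf{AVL}$, so that the limiting quotient is indeed a common meadow. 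Finally, I would note that, pleasingly, the argument uses no instance of the new rule $\mathsf{R_{cm}}$: for conclusions of the special shape $\frac{1}{t} = \bot$ the axioms $E_{\mathsf{ftc-cm}} + \mathsf{AVL}$ already suffice.
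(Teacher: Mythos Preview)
Your proof is correct and follows the same contrapositive, model-theoretic strategy as the paper: obtain a GCM satisfying $\mathsf{AVL}$ that refutes the conditional equation, then push it down to a common meadow via Lemma~\ref{HomToCM2} while keeping $\frac{1}{t}$ away from $\bot$.

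The only organizational difference is that the paper avoids your case split by first applying the homomorphism $\phi(x) = x + 0 \cdot \tfrac{1}{t(\vec{c})}$ to the initial countermodel $M$. In the image $M'$ one then has $t \cdot \tfrac{1}{t} = 1$ outright (part~(iii) of Lemma~\ref{HomToGCM1}), so \emph{any} subsequent GCM homomorphism automatically keeps $\tfrac{1}{t}$ non-$\bot$, and Lemma~\ref{HomToCM2} can be invoked without tracking a distinguished element. Your route achieves the same end by handling $0 \cdot a = 0$ and $0 \cdot a \neq 0$ separately; in the former case your observation that $0 \cdot \pi(a) = \pi(0) = 0$ forces $\pi(a) \neq \bot$ is exactly the mechanism the paper's preliminary $\phi$-step arranges in advance. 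Both arguments are sound; the paper's version is marginally slicker in that it reduces everything to a single application of Lemma~\ref{HomToCM2} without a case distinction.
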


\begin{proof} 
Suppose $\not\vdash_{\mathsf{cm/avl}} \frac{1}{t} = \bot$ then for some GCM $M$ which satisfies $
\mathsf{AVL}$ it is the case that for some substitution $\vec{c}$ of the 
free variables of $E \to \frac{1}{t} = \bot$ it is the case that $\frac{1}{t(\vec{c})} \neq \bot$. We have to find a common meadow in which this is also true.

Consider the mapping $\phi(x) = x + 0 \cdot \frac{1}{t}$. 
First of all, we notice that $\phi(1) \neq \bot$, as otherwise it would be so that 
$1 + 0 \cdot \frac{1}{t} = \bot$ which implies $\frac{1}{t} = \bot$. 
Thus, $\phi$ is a homomorphism to a GCM, say $M'$, where 
$$t(\vec{\phi(c)})\cdot \frac{1}{t(\vec{\phi(c)})} = 1 + \frac{0}{t(\vec{\phi(c)})} = \phi(1 + \frac{0}{t(\vec{c})}) = \phi(1) = 1$$ 
which also satisfies 
$\mathsf{AVL}$ in view of Lemma~\ref{HomToGCM1}. 
In $M'$, $E \to \frac{1}{t} = \bot$ is not valid on the substitution $\vec{\phi(c)}$. 
To see the this, notice that the homomorphism $\phi$ preserves all conditions while 
$\frac{1}{t(\vec{\phi(c)})}$ implies $ 1 = \bot $ against the assumption on $M$. 
With Lemma~\ref{HomToCM2} we find the existence of a homomorphic $M''$  image of $M'$ which is a common meadow and in which the image of $t(\vec{\phi(c)}$ is invertible so
that $\frac{1}{t(\vec{c})} \neq \bot$.
\end{proof}

\begin{lemma}\label{completenessB}
If $\models_{\mathsf{cm}} E\to t = 0 \cdot t$ then $\vdash_{\mathsf{cm/avl}} t = 0 \cdot t$.
\end{lemma}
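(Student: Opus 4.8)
The plan is to avoid a fresh model-theoretic construction and instead reduce the statement to Lemma~\ref{completenessA}, exploiting the fact that $\mathsf{AVL}$ is exactly the implication that converts $\frac{1}{t} = \bot$ into $0 \cdot t = t$. The guiding observation is semantic: in any common meadow an element $x$ satisfies $x = 0 \cdot x$ if, and only if, $x \in \{0, \bot\}$ (for $x \neq \bot$ we have $0 \cdot x = 0$ by $\mathsf{NVL}$, so $x = 0 \cdot x$ forces $x = 0$), and likewise $\frac{1}{x} = \bot$ if, and only if, $x \in \{0, \bot\}$ (a nonzero field element is invertible, whereas $\frac{1}{0} = \frac{1}{\bot} = \bot$). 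Hence, over the class of common meadows, the equations $t = 0 \cdot t$ and $\frac{1}{t} = \bot$ are interchangeable as the conclusion of a conditional equation with hypotheses $E$.

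First I would transfer the hypothesis: assuming $\models_{\mathsf{cm}} E \to t = 0 \cdot t$, the equivalence just noted gives $\models_{\mathsf{cm}} E \to \frac{1}{t} = \bot$, since any substitution into a common meadow that validates $E$ forces $t \in \{0, \bot\}$ and therefore $\frac{1}{t} = \bot$. Applying Lemma~\ref{completenessA} to this yields $\vdash_{\mathsf{cm/avl}} E \to \frac{1}{t} = \bot$. Next I would chain this with $\mathsf{AVL}$. The instance of $\mathsf{AVL}$ obtained by substituting $t$ for $x$ is $\frac{1}{t} = \bot \to 0 \cdot t = t$, which is immediately available since $\mathsf{AVL} \in E_{\mathsf{ftc-cm}} + \mathsf{AVL}$. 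Composing the two derivations by the hypothetical syllogism of conditional equational logic (equivalently, by the first-order derivations guaranteed earlier) gives $\vdash_{\mathsf{cm/avl}} E \to 0 \cdot t = t$, which is the desired conclusion. Here I read the conclusion of the lemma as $\vdash_{\mathsf{cm/avl}} E \to t = 0 \cdot t$, matching the shape of Lemma~\ref{completenessA}.

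I expect no serious obstacle: the whole argument is a two-step chaining once the semantic equivalence of $t = 0 \cdot t$ and $\frac{1}{t} = \bot$ over common meadows is in place, with $\mathsf{AVL}$ supplying precisely the missing implication on the proof-theoretic side. The only point needing care is the direction $t = 0 \cdot t \Rightarrow \frac{1}{t} = \bot$ of that equivalence, which relies on $\mathsf{NVL}$ holding in every common meadow; this is exactly why the reduction uses $\models_{\mathsf{cm}}$ rather than mere validity in all generalised common meadows. Should a self-contained argument be preferred, the alternative is to proceed contrapositively, as in Lemma~\ref{completenessA}: a refuting generalised common meadow with $t(\vec{c}) \neq 0 \cdot t(\vec{c})$ has $t(\vec{c}) \neq \bot$, so $\mathsf{AVL}$ forces $\frac{1}{t(\vec{c})} \neq \bot$, and then the homomorphism $\phi_{0 \cdot \frac{1}{t}}$ together with Lemma~\ref{HomToCM2} pushes the counterexample into a common meadow where the image of $t$ remains invertible, hence $t \neq 0 \cdot t$; but this route merely re-proves Lemma~\ref{completenessA} in disguise, so the short reduction is preferable.
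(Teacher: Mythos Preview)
Your proposal is correct and follows essentially the same route as the paper: pass from $\models_{\mathsf{cm}} E \to t = 0 \cdot t$ to $\models_{\mathsf{cm}} E \to \frac{1}{t} = \bot$ via the semantic equivalence in common meadows, invoke Lemma~\ref{completenessA}, and then chain with $\mathsf{AVL}$ to obtain $\vdash_{\mathsf{cm/avl}} E \to 0 \cdot t = t$. Your reading of the conclusion as $\vdash_{\mathsf{cm/avl}} E \to t = 0 \cdot t$ (with the hypotheses $E$ retained) is the intended one, matching both the paper's own proof and the subsequent use of the lemma.
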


\begin{proof} Assuming $\models_{\mathsf{cm}} E\to t = 0 \cdot t$ we have $\models_{\mathsf{cm}} E\to \frac{1}{t} = \bot$
so that with Lemma~\ref{completenessA} $\vdash_{\mathsf{cm/avl}} E \to\frac{1}{t} = \bot$ whence with $\mathsf{AVL}$
$\vdash_{\mathsf{cm/avl}} E \to t = 0 \cdot t = \bot$.
\end{proof}

\begin{theorem} 
If $\models_{\mathsf{cm}} E\to t = \bot$ then $\vdash_{\mathsf{cm/avl}} E \to t = \bot$.
\label{completenessB2}
\end{theorem}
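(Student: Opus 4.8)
The plan is to reduce the claim to Lemma~\ref{completenessA} by replacing $t$ with the auxiliary term $s = 1 + 0 \cdot t$, whose reciprocal detects exactly when $t$ collapses to $\bot$. The subtlety to keep in mind is that, although the hypothesis $\models_{\mathsf{cm}} E \to t = \bot$ yields both $\models_{\mathsf{cm}} E \to \frac{1}{t} = \bot$ and $\models_{\mathsf{cm}} E \to t = 0 \cdot t$, these two consequences together only pin $t$ down to the set $\{0, \bot\}$ in a common meadow, so Lemmas~\ref{completenessA} and~\ref{completenessB} applied naively are too weak to recover $t = \bot$. The term $s = 1 + 0 \cdot t$ repairs this: in any common meadow $s = 1$ when $t$ is a field element and $s = \bot$ when $t = \bot$, so $\frac{1}{s} = \bot$ holds if, and only if, $t = \bot$.

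First I would establish the semantic reduction. Assuming $\models_{\mathsf{cm}} E \to t = \bot$, whenever $E$ holds we have $t = \bot$, hence $s = 1 + 0 \cdot \bot = \bot$ and $\frac{1}{s} = \bot$; therefore $\models_{\mathsf{cm}} E \to \frac{1}{1 + 0 \cdot t} = \bot$. Applying Lemma~\ref{completenessA} with the term $1 + 0 \cdot t$ in place of $t$ then gives $\vdash_{\mathsf{cm/avl}} E \to \frac{1}{1 + 0 \cdot t} = \bot$.

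It remains to convert $\frac{1}{1 + 0 \cdot t} = \bot$ into $t = \bot$ purely by derivation from $E_{\mathsf{ftc-cm}} + \mathsf{AVL}$. Instantiating $\mathsf{AVL}$ at $x = 1 + 0 \cdot t$ gives $0 \cdot (1 + 0 \cdot t) = 1 + 0 \cdot t$. Since the equation $0 \cdot (1 + 0 \cdot t) = 0 \cdot t$ is valid in all common meadows and hence provable from $E_{\mathsf{ftc-cm}}$ by the completeness result of~\cite{BergstraT2023arxiv}, we obtain $0 \cdot t = 1 + 0 \cdot t = (0 \cdot t) + 1$. Proposition~\ref{aux1} ($x = x + 1 \to x = \bot$) applied to $x = 0 \cdot t$ then yields $0 \cdot t = \bot$, and finally $t = t + 0 \cdot t = t + \bot = \bot$, using $t + 0 \cdot t = (1 + 0) \cdot t = t$ and the axiom $x + \bot = \bot$. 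Chaining these derivations under the hypothesis $E$ gives $\vdash_{\mathsf{cm/avl}} E \to t = \bot$.

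The main obstacle is the choice of the auxiliary term: one must find a unary context $C[t]$ for which $\frac{1}{C[t]} = \bot$ is equivalent, across all common meadows, to $t = \bot$ itself (and not merely to $t \in \{0,\bot\}$), so that the problem falls within the scope of Lemma~\ref{completenessA}. Once $C[t] = 1 + 0 \cdot t$ is identified, the remaining steps are routine equational bookkeeping together with Proposition~\ref{aux1} and $\mathsf{AVL}$.
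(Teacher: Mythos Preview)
Your argument is correct, and it takes a genuinely different route from the paper's proof. The paper proceeds by fracterm flattening: it writes $t$ as $\frac{p}{q}$ with $p,q$ division-free and every variable of $p$ also occurring in $q$, argues semantically that $\models_{\mathsf{cm}} E \to 0\cdot q = q$, then invokes Lemma~\ref{completenessB} (which in turn rests on Lemma~\ref{completenessA} plus $\mathsf{AVL}$) to obtain $\vdash_{\mathsf{cm/avl}} E \to 0\cdot q = q$ and hence $E \to t = \bot$. You bypass flattening and Lemma~\ref{completenessB} entirely by observing that the context $C[t] = 1 + 0\cdot t$ already does the work of the denominator $q$: in any common meadow $\frac{1}{1+0\cdot t} = \bot$ iff $t = \bot$, so Lemma~\ref{completenessA} applies directly, and the residual derivation is a short chain through $\mathsf{AVL}$, the equational completeness result, and Proposition~\ref{aux1}. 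Your approach is more self-contained and avoids the syntactic normalisation step; the paper's approach has the minor expository advantage of making explicit that the obstruction to $t$ being defined lives in its denominator.
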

\begin{proof} Using fracterm flattening we may write 
$t$ as $\frac{p(\vec{x})}{q(\vec{x})}$ where $p$ and $q$ are division free 
and, moreover, it may be assumed that all variables occurring in $p$ are also present in $q$. So, we assume that 
$\models_{\mathsf{cm}} t = \frac{p(\vec{x})}{q(\vec{x})}$. 
Assuming $\models_{\mathsf{cm}} E \to \frac{p(\vec{x})}{q(\vec{x})} = \bot$ we have 
that in each common meadow for each substitution either the numerator of $\frac{p(\vec{x})}{q(\vec{x})}$ equals $\bot$  
or the denominator equals $0$ or $\bot$. Now if the numerator is $\bot$ then for at least one variable $\bot$ has been substituted so that the denominator equals $\bot$ as well. In each of these cases $q(\vec{x})\cdot 0 = q(\vec{x})$ in the substitution at hand. It follows that 
$\models_{\mathsf{cm}} E \to q(\vec{x})\cdot 0 = q(\vec{x})$ so that, with Lemma~\ref{completenessB},
$\vdash_{\mathsf{cm/avl}} q(\vec{x})\cdot 0 = q(\vec{x})$ from which it follows trivially that 
$\vdash_{\mathsf{cm/avl}} t = \bot$
\end{proof}

\begin{lemma}
\label{completenessC} If 
$\models_{\mathsf{cm}} E \wedge 0 \cdot t = 0 \cdot r \to t = r$ then 
$\vdash_{\mathsf{cm/avl}} E \wedge 0 \cdot t = 0 \cdot r \to t = r$.
\end{lemma}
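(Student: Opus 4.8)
The plan is to reduce the statement to Lemma~\ref{completenessB}, which already handles conclusions of the special shape $v = 0 \cdot v$. Throughout I would write $G$ for the conjunction of equations $E \wedge 0 \cdot t = 0 \cdot r$, and introduce the auxiliary term $v = t + (-r)$ (i.e. $t - r$). The whole argument rests on the observation that, \emph{relative to the hypotheses in $G$}, the target equation $t = r$ is interderivable with the equation $v = 0 \cdot v$, both semantically over $\mathsf{CM}$ and syntactically from $E_{\mathsf{ftc-cm}} + \mathsf{AVL}$.

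First I would record two purely equational facts, provable from $E_{\mathsf{ftc-cm}}$ under the extra hypothesis $0 \cdot t = 0 \cdot r$. Using the axiom $x + (-x) = 0 \cdot x$, additivity of multiplication by $0$, the sign laws, and the idempotence $0 \cdot (0 \cdot t) = 0 \cdot t$, one computes $0 \cdot v = 0 \cdot t + 0 \cdot(-r) = 0 \cdot t - 0 \cdot r = 0 \cdot t - 0 \cdot t = 0 \cdot(0 \cdot t) = 0 \cdot t$, so that $0 \cdot v = 0 \cdot t$ is provable from $G$; consequently $v = 0 \cdot v$ and $v = 0 \cdot t$ are interderivable from $G$. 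Moreover $t = r \to v = 0 \cdot t$ (since $v = t - r = t - t = 0 \cdot t$ when $t = r$) and, conversely, $v = 0 \cdot t \to t = r$ (add $r$ to both sides and simplify, using $r + 0 \cdot t = r + 0 \cdot r = r$ and $(t - r) + r = t + 0 \cdot r = t$, both of which use $0 \cdot t = 0 \cdot r$). Hence, from $G$, the three equations $t = r$, $v = 0 \cdot t$ and $v = 0 \cdot v$ are all interderivable in conditional equational logic over $E_{\mathsf{ftc-cm}} + \mathsf{AVL}$.

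Second, I would transport the semantic hypothesis through this equivalence. Assuming $\models_{\mathsf{cm}} G \to t = r$, evaluate in an arbitrary common meadow: whenever $G$ holds we obtain $t = r$, hence $v = t - t = 0 \cdot t = 0 \cdot v$; therefore $\models_{\mathsf{cm}} G \to v = 0 \cdot v$. Now apply Lemma~\ref{completenessB} with the conjunction $G$ in the role of the hypothesis set and $v$ in the role of the term, obtaining $\vdash_{\mathsf{cm/avl}} G \to v = 0 \cdot v$. Finally, feed this back into the syntactic interderivability of the previous paragraph to conclude $\vdash_{\mathsf{cm/avl}} G \to t = r$, which is exactly $\vdash_{\mathsf{cm/avl}} E \wedge 0 \cdot t = 0 \cdot r \to t = r$.

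The only delicate point is the bookkeeping with the absorptive element in the two equational facts, in particular the identity $0 \cdot (t - r) = 0 \cdot t$, which genuinely needs the hypothesis $0 \cdot t = 0 \cdot r$ (it fails without it, e.g. when exactly one of $t, r$ is $\bot$). Everything else is routine manipulation inside $E_{\mathsf{ftc-cm}}$, and the substantive work has already been discharged by Lemma~\ref{completenessB} (and, through it, by Lemma~\ref{completenessA} together with $\mathsf{AVL}$). I would also verify that Lemma~\ref{completenessB} is indeed stated for an arbitrary conjunction of equational hypotheses, so that enlarging $E$ by the single equation $0 \cdot t = 0 \cdot r$ to form $G$ is legitimate and the hypotheses are retained in its conclusion.
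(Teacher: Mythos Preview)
Your argument is correct and takes a genuinely different route from the paper's own proof. The paper argues Lemma~\ref{completenessC} directly by contraposition at the semantic level: starting from a GCM $M$ satisfying $\mathsf{AVL}$ that refutes $E \wedge 0\cdot t = 0\cdot r \to t = r$ on some substitution $\vec{c}$, it considers $s = \frac{1}{t(\vec{c}) - r(\vec{c})}$, uses $\mathsf{AVL}$ together with the hypothesis $0\cdot t = 0\cdot r$ to show $s \neq \bot$, applies the homomorphism $\phi_{0\cdot s}$, and then invokes Lemma~\ref{HomToCM2} to produce a common meadow in which the conditional still fails. In other words, the paper repeats for this lemma the kind of model-construction argument already underlying Lemma~\ref{completenessA}.

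You instead perform a clean \emph{syntactic} reduction to Lemma~\ref{completenessB}: under the additional hypothesis $0\cdot t = 0\cdot r$, the target $t = r$ is provably equivalent in $E_{\mathsf{ftc-cm}}$ to $v = 0\cdot v$ for $v = t - r$, so the semantic assumption transfers to $\models_{\mathsf{cm}} G \to v = 0\cdot v$ and Lemma~\ref{completenessB} yields the derivation, which you then convert back. Your computations are sound (the key identities $0\cdot(-r)=0\cdot r$, $0\cdot(0\cdot t)=0\cdot t$, and $x + 0\cdot x = x$ are all provable in $E_{\mathsf{ftc-cm}}$), and you rightly flag the need to check that Lemma~\ref{completenessB} carries the hypothesis set through to its conclusion; this is indeed the intended reading, as its proof makes clear despite the missing ``$E\to$'' in the displayed statement. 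Your approach is more economical, since it avoids redoing the homomorphism-to-a-common-meadow construction and shows that Lemma~\ref{completenessC} is really a corollary of Lemma~\ref{completenessB} once one notices that $t=r$ and $(t-r)=0\cdot(t-r)$ coincide modulo $0\cdot t = 0\cdot r$.
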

\begin{proof} In the style of the proof of Lemma~\ref{completenessA} 
 we consider a GCM $M$ that satisfies $\mathsf{ASL}$ and which refutes $E \wedge 0 \cdot t = 0 \cdot r \to t = r$.
 For some substitution $\vec{c}$ for the relevant free variables the conditions are true while $t$ and $r$ take different values: 
  in $M$: $t(\vec{c}) \neq r(\vec{c})$. Consider $s = \frac{1}{t(\vec{c}) - r(\vec{c})}$ in $M'$. We claim that $s \neq \bot$. Otherwise, with $\mathsf{AVL}$, $t(\vec{c}) - r(\vec{c}) = 0 \cdot (t(\vec{c}) - r(\vec{c}))$, so that 
  $t(\vec{c}) - r(\vec{c}) + r(\vec{c})= 0 \cdot (t(\vec{c}) - r(\vec{c})) + r(\vec{c})$ and, with $x + 0\cdot x = x$ and the condition 
  $0 \cdot t(\vec{c}) = 0 \cdot r(\vec{c})$ it follows that $t(\vec{c}) = r(\vec{c})$ in contradiction with the assumptions on $M'$. 
  Now we may consider the homomorphism $\phi(x) = x + 0 \cdot s$. 
  The image of $\phi$ is a GCM, that satisfies $\mathsf{AVL}$ in which 
  $t(\vec{c}) \neq r(\vec{c})$ differs from $\bot$, as otherwise $1 = t(\vec{c}) \neq r(\vec{c}) \cdot s= \bot$. Thus $M'$ refutes
  $E \wedge 0 \cdot t = 0 \cdot r \to t = r$, but now on a substitution $\vec{\phi(c)}$ which makes $t-r$ invertible. 
  Using Theorem~\ref{HomToCM2} we find a homomorphic image $M'' = \psi(M')$ of $M'$ which is a common meadow and in which
  $E \wedge 0 \cdot t = 0 \cdot r \to t = r$ is refuted on the substitution $\psi(\phi(\vec{c}))$.  
\end{proof}

Completeness can be obtained with the help of the additional rule $\mathsf{R_{cm}}$:

\begin{theorem}  
$\models_{\mathsf{cm}} E  \to t = r$ if, and only if, 
$E_{\mathsf{ftc-cm}}+ \mathsf{AVL} \vdash_{\mathsf{cm}} E  \to t = r$.
\end{theorem}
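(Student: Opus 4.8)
The plan is to prove the two implications separately: the forward (soundness) direction is routine, while the converse (completeness) rests on factoring $t = r$ into two independently derivable parts, only one of which genuinely needs the rule $\mathsf{R_{cm}}$.

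\emph{Soundness.} Suppose $E_{\mathsf{ftc-cm}} + \mathsf{AVL} \vdash_{\mathsf{R_{cm}}} E \to t = r$. Every axiom of $E_{\mathsf{ftc-cm}}$ is valid in all common meadows (Proposition~\ref{RingSound}, using $\mathsf{CM} \subseteq \mathsf{CR}_{\div,\bot}$), $\mathsf{AVL}$ is valid in all common meadows (Proposition~\ref{axiom _meadows_relative_to_rings}), and the rule $\mathsf{R_{cm}}$ is sound for $\mathsf{CR}_{\div,\bot}$ (Proposition~\ref{Rcm-sound1}) and hence for its subclass $\mathsf{CM}$. As each inference step preserves validity in $\mathsf{CM}$, the conclusion holds, i.e. $\models_{\mathsf{cm}} E \to t = r$.

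\emph{Completeness.} Assume $\models_{\mathsf{cm}} E \to t = r$. The governing observation is that in a common meadow an element is $\bot$ exactly when multiplying it by $0$ gives $\bot$, so the single equation $0 \cdot t = 0 \cdot r$ records precisely whether $t$ and $r$ equal $\bot$ simultaneously. I would therefore split the goal into the \emph{error-synchronisation} part $E \to 0 \cdot t = 0 \cdot r$ and the \emph{value-agreement} part $E \wedge 0 \cdot t = 0 \cdot r \to t = r$. Both are valid in $\mathsf{CM}$: the first because $t = r$ forces $0 \cdot t = 0 \cdot r$, the second because it only strengthens the antecedent of a valid conditional equation. For the value-agreement part I would invoke Lemma~\ref{completenessC} verbatim: its hypothesis is exactly the validity just noted, and its conclusion gives $\vdash_{\mathsf{cm/avl}} E \wedge 0 \cdot t = 0 \cdot r \to t = r$, a fortiori derivable in the extended system. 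The error-synchronisation part is where $\mathsf{R_{cm}}$ is indispensable, since $0 \cdot t = 0 \cdot r$ is in general not obtainable from $E_{\mathsf{ftc-cm}} + \mathsf{AVL}$ alone. To discharge the two premises of the rule, note that from $\models_{\mathsf{cm}} E \to t = r$ we get $\models_{\mathsf{cm}} E \wedge t = \bot \to r = \bot$ (if $E$ holds and $t = \bot$ then $r = t = \bot$); applying Theorem~\ref{completenessB2} with condition $E \wedge t = \bot$ and conclusion $r = \bot$ yields $\vdash_{\mathsf{cm/avl}} E \wedge t = \bot \to r = \bot$, and symmetrically $\vdash_{\mathsf{cm/avl}} E \wedge r = \bot \to t = \bot$. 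A single application of $\mathsf{R_{cm}}$ then delivers $\vdash_{\mathsf{R_{cm}}} E \to 0 \cdot t = 0 \cdot r$.

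To finish I would cut the two parts together: from $E \to 0 \cdot t = 0 \cdot r$ and $E \wedge 0 \cdot t = 0 \cdot r \to t = r$, ordinary conditional equational reasoning discharges the intermediate condition to give $\vdash_{\mathsf{R_{cm}}} E \to t = r$; this is legitimate because conditional equational provability coincides with first-order provability (the Birkhoff--Mal'cev lemma recalled earlier). The main obstacle here is conceptual rather than computational: one must recognise that $t = r$ decomposes into the error-synchronisation equation $0 \cdot t = 0 \cdot r$ handled by $\mathsf{R_{cm}}$ together with Theorem~\ref{completenessB2}, and the value-agreement conditional handled by Lemma~\ref{completenessC}. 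The real technical weight sits inside those lemmas, whose proofs rest on the homomorphism and $\bot$-splitting constructions of Lemmas~\ref{HomToGCM1} and~\ref{HomToCM2}.
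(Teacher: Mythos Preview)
Your proposal is correct and follows essentially the same architecture as the paper's own proof: split $t=r$ into the error-synchronisation part $E\to 0\cdot t=0\cdot r$ (obtained via $\mathsf{R_{cm}}$ after deriving the two $\bot$-implications) and the value-agreement part $E\wedge 0\cdot t=0\cdot r\to t=r$ (via Lemma~\ref{completenessC}), then cut. The only cosmetic difference is that the paper cites Lemma~\ref{completenessA} for the step $\vdash_{\mathsf{cm/avl}} E\wedge t=\bot\to r=\bot$, whereas you cite Theorem~\ref{completenessB2}; your citation is actually the more directly applicable one, since Theorem~\ref{completenessB2} is stated for arbitrary $t=\bot$ rather than for $\frac{1}{t}=\bot$.
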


\begin{proof}
Soundness of the proof system for its purpose is obvious. For completeness, we notice that if 
$\models_{\mathsf{cm}} E  \to t = r$ also $\models_{\mathsf{cm}} E \wedge t = \bot   \to r = \bot $ with Lemma~\ref{completenessA} also $\vdash_{\mathsf{cm/avl}} E \wedge t = \bot   \to r = \bot $. Similarly, symmetrically,  
$\vdash_{\mathsf{cm/avl}} E \wedge r = \bot   \to t = \bot $.  So, with $\mathsf{R_{cm}}$ we obtain
$E_{\mathsf{ftc-cm}}+ \mathsf{AVL} \vdash_{\mathsf{cm}}  E \to 0 \cdot t=0 \cdot r$. 

Obviously, $\models_{\mathsf{cm}}  E \wedge 0 \cdot t = 0 \cdot r \to 0 \cdot t=0 \cdot r$, as the additional condition does no harm.  With Lemma~\ref{completenessC}, we find
$\vdash_{\mathsf{cm/avl}} \vdash  E \wedge 0 \cdot t=0 \cdot r \to t=r$, and in combination with  
$E_{\mathsf{ftc-cm}}+ \mathsf{AVL} \vdash_{\mathsf{cm}}  E \to 0 \cdot t=0 \cdot r$ we find
$E_{\mathsf{ftc-cm}}+ \mathsf{AVL} \vdash_{\mathsf{cm}}  E \to t = r$.
\end{proof}


We do  not know whether or not the use of an additional rule --  $\mathsf{R_{cm}}$ or otherwise -- is necessary for a finite and complete axiomatisation of the conditional equational theory of common meadows, although as stated above we expect this 
 to be the case. Nevertheless, a foundational open problem regarding common meadows remains Problem~\ref{MainProblem}.

Instead of the rule $\mathsf{R_{cm}}$ we may also use a less symmetric, and less intuitive, but technically simpler rule $\mathsf{R_{cm}'}$
 $$\mathsf{R_{cm}'}\,\,\frac{\vdash E \wedge  t = \bot  \to r =\bot}
{ \vdash E \to 0 \cdot r=0 \cdot (t+r)}$$

Using $\mathsf{R_{cm}'}$ it is easy to obtain $\mathsf{R_{cm}}$ as a derived rule. The other direction is not obvious, however, and we leave open the question whether or not $\mathsf{R_{cm}'}$ is stronger than $\mathsf{R_{cm}}$.

\subsection{Expanding common meadows with a conditional operator}

Finally, we extend the signature of common meadows with a conditional operator. 
In particular, we propose to work with  
$x \vartriangleleft y  \vartriangleright z$ 
understood as 
\begin{center}
``if $y=0$ then $z$ else $x$, where $\bot$ is returned if $y = \bot$". 
\end{center}
The conditional operator is not strict in all arguments, e.g., $0 \vartriangleleft 1  \vartriangleright \bot = 0$. 
Nevertheless, it is easily shown from the axioms in 
Table~\ref{CondOp}, that for $a \neq \bot$, $\phi_a(x) = x + 0 \cdot a$ is a 
homomorphism w.r.t. the conditional operator. It follows that the completeness result of Theorem~\ref{completenessB} can be 
generalised to the signature of common meadows expanded with the conditional operator.

\begin{table}
\centering
\hrule
\begin{align}
{\tt import~}  &~ E_{\mathsf{wcr},\bot} + \
\mathsf{AVL} \nonumber \\
x \vartriangleleft 0\vartriangleright y &=  y\\
x \vartriangleleft \bot \vartriangleright y &=  \bot\\
y \cdot u = 1 \to x \vartriangleleft y \vartriangleright z &=  z\\
x \vartriangleleft (y + (0 \cdot u))\vartriangleright z &=  (x \vartriangleleft y  \vartriangleright z) + (0 \cdot u) \\
(x + (0 \cdot u)) \vartriangleleft y \vartriangleright (z + (0 \cdot u)) &=  (x \vartriangleleft y  \vartriangleright z) + (0 \cdot u) 
\end{align}
\hrule
\medskip
\caption{$E_{\mathsf{Cond-Op}}$: axioms for the conditional operator}
\label{CondOp}
\end{table}


\section{Integral domains with $\bot$}\label{integral_domains}

\subsection{Basics}
A {\em integral domain with common division} is a commutative ring with common division $\mathsf{Enl}_\bot(R_{\div})$ that is based on a ring $R$ that is an integral domain, i.e., it has no zero divisors: for all $a, b \in R$, 
$$a \cdot b = 0 \to a = 0 \vee b = 0.$$ 
Let $\mathsf{CID}_{\div, \bot}$ be the class of all such algebras. We notice that in personal communication Bruno Dinis suggested a somewhat stronger requirement for integral domains with inverse based common division, the implications of which we have not investigated. 
$$a \cdot b = 0 \cdot a \cdot b \to 0 \cdot a = a \vee 0 \cdot b  = b.$$ 

Clearly, every common meadow is an integral domain with common division, and each integral domain with common division is a commutative ring with common division, and both inclusions are proper. In symbols,
$$\mathsf{CM} \subsetneqq \mathsf{CID}_{\div, \bot} \subsetneqq \mathsf{CR}_{\div, \bot}.$$

\begin{proposition}
Every integral domain with common division can be extended to a common meadow.
\end{proposition}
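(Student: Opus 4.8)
The plan is to use the classical construction of the field of fractions. Let $\mathsf{Enl}_\bot(R_\div)$ be an integral domain with common division, so that $R$ is an integral domain. First I would form the field of fractions $F = \mathrm{Frac}(R)$, whose elements are equivalence classes of pairs $(a,b)$ with $a,b \in R$ and $b \neq 0$ under $(a,b) \sim (c,d) \iff a \cdot d = b \cdot c$; since $R$ has no zero divisors this is an equivalence relation, $F$ is a field, and there is the canonical ring embedding $\iota \colon R \to F$, $a \mapsto a/1$. Then $\mathsf{Enl}_\bot(F_\div)$ is a common meadow by the very definition of $\mathsf{CM}$, and it is the candidate extension.

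Second, I would make precise the sense in which this is an extension. At the level of rings, $\iota$ exhibits $R$ as a subring of $F$. Passing to partial division, one checks that the integral domain with partial division $\mathsf{Pdt}_\bot(\mathsf{Enl}_\bot(R_\div)) = R_\div$ is a weak subalgebra of the partial meadow $F_\div$: the carrier $R$ is closed under those divisions that are defined in $R_\div$, and whenever $\frac{a}{b}$ is defined in $R_\div$ (that is, $b^{-1} \in R$) its value $a \cdot b^{-1}$ coincides with the value $a/b$ computed in $F$, by uniqueness of inverses (Lemma \ref{dividers_are_well-defined}). Totalising the larger partial algebra with $\bot$ then yields the common meadow $\mathsf{Enl}_\bot(F_\div)$ into which $R_\div$ embeds as a partial subalgebra, so that $R \cup \{\bot\}$ sits inside it with the ring operations agreeing.

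The point demanding most care — and the reason the statement must be read at the level of the base ring rather than as a full-signature embedding of total algebras — is the behaviour of division. Division is not preserved as a total operation: a nonzero $a \in R$ with no inverse in $R$ satisfies $\frac{1}{a} = \bot$ in $\mathsf{Enl}_\bot(R_\div)$, whereas $\frac{1}{a} = a^{-1} \in F$ in the common meadow. Indeed, a full-signature embedding of $\mathsf{Enl}_\bot(R_\div)$ into any common meadow is impossible whenever $R$ is not a field: such an injective homomorphism $h$ would force $\frac{1}{h(a)} = h(\bot) = \bot$, so $h(a)$ would be a nonzero non-invertible element of a field, which is absurd. Hence the extension is understood via the inclusion $R_\div \subseteq F_\div$, with total division re-interpreted over the larger field so that previously-$\bot$ quotients acquire genuine values; establishing this compatibility is the crux, whereas verifying the remaining ring-theoretic identities is routine.
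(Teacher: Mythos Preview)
Your approach is exactly that of the paper: pass to the field of fractions $F=\mathrm{Frac}(R)$ and take $\mathsf{Enl}_\bot(F_\div)$ as the extending common meadow. The paper's own proof is a two-line sketch that stops there; your additional paragraph, observing that the word ``extension'' cannot mean a full $\Sigma_{r,\div,\bot}$-subalgebra embedding (since $\frac{1}{a}=\bot$ in $\mathsf{Enl}_\bot(R_\div)$ but $\frac{1}{a}\neq\bot$ in $\mathsf{Enl}_\bot(F_\div)$ for non-units $a$), is a correct and worthwhile clarification that the paper leaves implicit.
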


\begin{proof} 
Consider an integral domain with common division $M= \mathsf{Enl}_\bot(R_{\div})$ where the ring $R$ is an integral domain. Let $R'$ be a field of fractions for $R$ then $M' = \mathsf{Enl}_\bot((R')_{\div})$ is a common meadow which extends $M$.
\end{proof} 

Because integral domains with common division are a subclass of commutative rings with common division, the conditional equational theory of integral domains with common division includes the conditional equational theory of commutative rings with common division. 

\begin{proposition}
The conditional equational theory of integral domains with common division is a proper extension of the conditional equational theory of commutative rings with common division. In fact,
these inclusions are proper: 
$$ConEqnThy(\mathsf{CR}_{\bot,\div}) \subsetneqq ConEqnThy(\mathsf{CID}_{\bot,\div}) \subsetneqq ConEqnThy(\mathsf{CM}) $$ 
\end{proposition}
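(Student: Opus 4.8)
The plan is to dispatch the non-strict inclusions cheaply and then establish strictness by exhibiting two explicit separating conditional equations. Since we already have the chain $\mathsf{CM} \subsetneqq \mathsf{CID}_{\div, \bot} \subsetneqq \mathsf{CR}_{\div, \bot}$ of model classes, Lemma~\ref{subclass_lemma} immediately yields the two non-strict inclusions $ConEqnThy(\mathsf{CR}_{\div,\bot}) \subseteq ConEqnThy(\mathsf{CID}_{\div,\bot}) \subseteq ConEqnThy(\mathsf{CM})$. It therefore remains only to find, for each inclusion, a conditional equation lying in the larger theory but not in the smaller one.

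For the first strict inclusion I would use the conditional equation $x \cdot x = 0 \to x = 0$, already noted in the text as valid in all fields and all common meadows. First I would check that it holds in every integral domain with common division $\mathsf{Enl}_\bot(R_\div)$: if $x = \bot$ then $x \cdot x = \bot \neq 0$, so the premise fails and the implication holds vacuously; and if $x \in R$ then $x \cdot x = 0$ forces $x = 0$ by the absence of zero divisors. Hence this conditional equation belongs to $ConEqnThy(\mathsf{CID}_{\div,\bot})$. To see it is not in $ConEqnThy(\mathsf{CR}_{\div,\bot})$, I would take the commutative ring $\Int_4$, which has a $1$ but is not an integral domain, and observe that in $\mathsf{Enl}_\bot((\Int_4)_\div)$ we have $2 \cdot 2 = 0$ while $2 \neq 0$, refuting the conditional equation.

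For the second strict inclusion I would use $\mathsf{AVL}$. By the proof of Proposition~\ref{axiom _meadows_relative_to_rings}, $\mathsf{AVL}$ holds in every common meadow, so $\mathsf{AVL} \in ConEqnThy(\mathsf{CM})$. To exhibit an integral domain with common division refuting it, I would take $\mathsf{Enl}_\bot(\Int_\div)$, based on the integral domain $\Int$, which is not a field: here $\frac{1}{2} = \bot$ because $2$ has no proper inverse in $\Int$, yet $0 \cdot 2 = 0 \neq 2$, so $\mathsf{AVL}$ fails. This shows $\mathsf{AVL} \notin ConEqnThy(\mathsf{CID}_{\div,\bot})$ and, combined with the first part, completes the argument.

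The work here lies entirely in the choice of witnesses; there is no serious obstacle once the right conditional equations are identified. The only points needing care are confirming that each witness genuinely is a conditional equation (each has a single equational premise and an equational conclusion) and that the separating structures really lie in the intended model classes, as the theories reverse the set inclusions. In particular one must note that $\mathsf{Enl}_\bot(\Int_\div)$ is an integral domain with common division (since $\Int$ is an integral domain) but is not a common meadow, while $\mathsf{Enl}_\bot((\Int_4)_\div)$ is a commutative ring with common division that fails to be an integral domain.
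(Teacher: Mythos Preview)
Your proposal is correct and follows essentially the same approach as the paper: both use $\mathsf{AVL}$ to separate $ConEqnThy(\mathsf{CID}_{\div,\bot})$ from $ConEqnThy(\mathsf{CM})$ via the integers, and both use $x \cdot x = 0 \to x = 0$ to separate $ConEqnThy(\mathsf{CR}_{\div,\bot})$ from $ConEqnThy(\mathsf{CID}_{\div,\bot})$. The only difference is the counterexample ring for the latter: the paper invokes a ring of dual numbers, whereas you use $\Int_4$, which is a perfectly valid (and arguably more elementary) choice.
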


\begin{proof} 
Fir the first proper extension we notice that $\mathsf{AVL}$ is true in all common meadows but not in the ring of integers enlarged with inverse based common division.
For the second proper inclusion we notice that  the conditional equation 
$\phi \equiv x \cdot x = 0 \to x = 0$ is true in any integral domain with common division, while it fails in a commutative ring with common division based on a ring of dual numbers (we refer to~\cite{Bergstra2019c} for dual numbers in a setting of meadows).
\end{proof}

The conditional equation $\phi \equiv x \cdot x \cdot y = 0 \to x \cdot y = 0$ is true for all integral domains, it is provable from
$E_{\mathsf{ftc-cm}} + \mathsf{AVL}$ but it is stronger than $x \cdot x = 0 \to x = 0$ as it excludes all rings with non-zero nil-potent elements.

Again we have: 

\begin{proposition}
The class of integral domains with common division  is not a quasivariety. 
\end{proposition}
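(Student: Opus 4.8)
The plan is to reuse, essentially verbatim, the initial-algebra argument already deployed for the analogous statement about $\mathsf{CR}_{\div,\bot}$, since the only structural fact that argument exploits — that in any algebra of the form $\mathsf{Enl}_\bot(R_\div)$ the element $\bot$ cannot be written as a sum of two non-$\bot$ elements (i.e.\ no non-trivial $\bot$-splitting) — survives unchanged when $R$ is required to be an integral domain. The proof would run by contradiction: suppose $\mathsf{CID}_{\div,\bot}$ were a quasivariety, so that $\mathsf{CID}_{\div,\bot} = \mathsf{Alg}(\Sigma_{r,\div,\bot}, E_{ce})$ for some set $E_{ce}$ of conditional equations.

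First I would introduce two fresh constants $c$ and $d$, extend the signature accordingly, and consider $E' = E_{ce} \cup \{\, c + d = \bot \,\}$. To show that neither $c = \bot$ nor $d = \bot$ is derivable from $E'$, I would exhibit two witnessing models that are genuine integral domains with common division: starting from any nontrivial integral domain with common division, such as $\mathsf{Enl}_\bot(\Int_\div)$, interpret $c = 0$, $d = \bot$ in a first copy $M_1$, and $c = \bot$, $d = 0$ in a second copy $M_2$. Since $0 + \bot = \bot$ holds absorptively in each, both $M_1$ and $M_2$ satisfy $E'$, while $M_1 \models c \neq \bot$ and $M_2 \models d \neq \bot$. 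Hence $E' \not\vdash c = \bot$ and $E' \not\vdash d = \bot$.

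Next I would pass to an initial algebra $I$ of $E'$, invoking the characterisation $I \models u = v \iff E' \vdash u = v$ recorded in Section~\ref{equations_and_absorption}. By the previous step $I \models c \neq \bot$ and $I \models d \neq \bot$, yet $I \models c + d = \bot$; that is, $I$ exhibits non-trivial $\bot$-splitting. On the other hand the $\Sigma_{r,\div,\bot}$-reduct of $I$ is a model of $E_{ce}$ and therefore an integral domain with common division, so $I = \mathsf{Enl}_\bot(R_\div)$ for some integral domain $R$. Because $c, d \neq \bot$ we have $c, d \in R$, whence $c + d \in R$ and so $c + d \neq \bot$, contradicting $c + d = \bot$. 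This contradiction shows that $\mathsf{CID}_{\div,\bot}$ is not a quasivariety.

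I expect the only point requiring any care — and it is exactly the subtlety present in the earlier propositions, as well as in Theorem~\ref{NonQ-Variety} on $\bot$-enlargements — to be the legitimacy of the initial-algebra step: one must know that a class axiomatised by conditional equations admits an initial model realising precisely the provable equalities. Since this holds for every such class and since the witness models $M_1, M_2$ are plainly integral domains (not merely commutative rings), no genuinely new obstacle arises; the restriction to integral domains only narrows the supply of witnesses, and the single witness $\mathsf{Enl}_\bot(\Int_\div)$ already suffices.
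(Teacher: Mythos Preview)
Your proof is correct, but it takes a different route from the paper's. You replay the $\bot$-splitting argument used earlier for $\mathsf{CR}_{\div,\bot}$: adjoin constants $c,d$ with $c+d=\bot$, witness $c\neq\bot$ and $d\neq\bot$ via $\mathsf{Enl}_\bot(\Int_\div)$ (which is indeed an integral domain with common division), and obtain a contradiction because no algebra of the form $\mathsf{Enl}_\bot(R_\div)$ admits non-trivial $\bot$-splitting. The paper instead exploits the defining feature of integral domains directly: it adjoins $c,d$ with the axiom $c\cdot d = 0$, witnesses $c\neq 0$ and $d\neq 0$ via interpretations $(c,d)=(1,0)$ and $(c,d)=(0,1)$, and then observes that in the initial algebra one has $c\cdot d = 0$ with $c,d$ both nonzero and (since $c\cdot d = 0 \neq \bot$) both non-$\bot$, hence lying in $R$ --- a pair of nontrivial zero divisors, contradicting that $R$ is an integral domain. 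Your argument is more economical in that it requires no new idea beyond the earlier proposition and does not even invoke the absence of zero divisors; the paper's argument is more tailored, making visible that it is precisely the integral-domain axiom (rather than merely the $\bot$-enlargement shape) that fails to be captured by conditional equations.
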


\begin{proof} 
Suppose that conditional equational theory $E_{ce}$ has precisely the commutative rings with common division as it models.
We introduce fresh constants $c$ and $d$. We consider an initial algebra of $E' = E_{ce} \cup \{ c\cdot  d = 0\}$. 
Now, there is a 
ring with common division $M_1$ that satisfies $E'$ by taking $c= 1$ and $d = 0$ from which it follows  that 
$E' \not \vdash c = 0$.
Similarly, there is a model $M_2$ of $E'$ in which $d=0$ and $c = 1$  so that $E' \not \vdash d= 0$. 
It follows that in an initial algebra of $E'$, $c \cdot d = 0$ while $c \neq 0$ and $d\neq 0$. 

However, the latter situation cannot occur in any structure of the form 
$\mathsf{Enl}_\bot(R_{\div})$ with $R$ an integral domain, because $c \cdot d = 0$ implies 
$c\neq \bot$ and $d \neq \bot$ so that  $c \in R$ and $d \in R$, 
whence $c\cdot d = 0 \in R$, so that $c\cdot d = 0 $ in $\mathsf{Enl}_\bot(R_{\div})$
\end{proof}

While the common meadows can be characterized with a single conditional equation as a subclass of the 
commutative rings with common division (Proposition \ref{axiom _meadows_relative_to_rings}), a corresponding characterization of the  integral domains with common division  is not so easy to find:

\begin{problem} 
Is there a set of conditional equations $T^{ce}_{idcd} $ true of all  integral domains with inverse based common division  such that a commutative ring with common division is an  integral domain with common division  if, and only, if it satisfies $T^{ce}_{idcd}$? If so is there a finite set 
$T^{ce}_{idcd}$  with the mentioned properties?
\end{problem}
The equational theory of integral domains with inverse based common division is the same as the equational theory of common meadows and as the equational theory of commutative rings with inverse based common division.
The following issue is left open:

\begin{problem} Is there a finite axiomatization with conditional equations of the conditional equational theory of integral domains with inverse based common division?
\end{problem}


\subsection{A second definition of division for integral domains}

Recall our convention where we change notation from $\div$ to the familiar fraction notations when appropriate. We now introduce a second kind of division (for integral domains) and extend our flexible convention relying on the context. Thus, as long as the divisions are declared, fracterms can continue have these forms 
$$p \div q, \ \displaystyle \frac{p}{q}, \ p/q.$$

Here is another definition for division in case the ring $R$ is an integral domain:

\begin{definition}\label{division_is_well-defined}
Let $R$ be an integral domain. Then it has a unique {\em partial division operator} $-/-$ and defined by 

(i) if $b \neq 0$ and $ b\cdot c = a$ ($c$ is unique in this respect) then $ a / b =_{\mathit{def}} c$,

(ii) if $b= 0$ or if for no $c \in R$, $ b \cdot c = a$ then $a/ b $ is undefined.
\end{definition}

\begin{lemma}
Let $R$ be an integral domain. In case (i) of Definition \ref{division_is_well-defined} above if such $c$ exists then it is unique.
\end{lemma}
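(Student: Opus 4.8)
The plan is to obtain uniqueness from the single defining property of an integral domain, namely the absence of non-trivial zero divisors, via the standard cancellation argument. First I would assume that $c$ and $c'$ both witness case (i), so that $b \neq 0$ and $b \cdot c = a = b \cdot c'$; the goal is then to show $c = c'$.

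From $b \cdot c = b \cdot c'$ I would pass to $b \cdot (c - c') = 0$. Here it matters that $R$ is a genuine commutative ring (not a $\bot$-enlargement), so that the full ring laws are available without the weakening forced by $\bot$: using distributivity together with $x + (-x) = 0$ and $x \cdot 0 = 0$, the equation $b \cdot c = b \cdot c'$ rearranges to $b \cdot (c + (-c')) = 0$. Now the integral domain hypothesis $a_1 \cdot a_2 = 0 \to a_1 = 0 \vee a_2 = 0$ applies to $b \cdot (c - c') = 0$; since $b \neq 0$ by assumption, it forces $c - c' = 0$ and hence $c = c'$.

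There is no real obstacle here: the lemma simply records and justifies the parenthetical claim ``($c$ is unique in this respect)'' made inside Definition~\ref{division_is_well-defined}, thereby confirming that the second partial division operator $-/-$ is well defined. It is worth contrasting this with Lemma~\ref{dividers_are_well-defined}, where uniqueness of a multiplicative inverse (division into $1$) held in \emph{every} commutative ring with $1$ with no zero-divisor assumption; for the more general quotient $a/b$ considered here, the integral domain hypothesis is precisely what is needed to cancel the possibly non-invertible element $b$.
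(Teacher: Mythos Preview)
Your proof is correct and follows essentially the same approach as the paper: assume a second witness $c'$ with $b\cdot c' = a$, subtract to obtain $b\cdot(c-c')=0$, and use the absence of zero divisors together with $b\neq 0$ to conclude $c=c'$. The additional commentary contrasting with Lemma~\ref{dividers_are_well-defined} is accurate and helpful context, though not part of the paper's own proof.
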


\begin{proof}
If such $c$ exists it is unique because of the absence of zero-divisors in $R$: if also $ b\cdot c' = a$ then $a - a = b\cdot c - b\cdot c' = b\cdot (c - c' ) = 0$, and then, given that $b \neq 0$, $c-c'=0$, and thus $c = c'$.
\end{proof}

Extending a ring $R$ with this operator $/$ creates a {\em ring with  partial general division} denoted $R_{/}$.  This general form of division is more familiar in arithmetic.

\begin{example}
{\em In $\Int$ there are more divisions (in comparison with inverse based division): the equation $b \cdot x = 1$ (for defining $\frac{1}{b}$) has solutions whenever $x$ divides $b$, e.g. $12/1 =12, 12/2 = 6, 12/3 = 4, 12/4 = 3$, and $ 12/6 = 2$ all exist and are unique.

The requirement that we work over a ring without zero divisors is needed. For instance in the finite ring  $\Int_{6}$, the equation $4 \cdot x = 2 \ mod \ 6$ has  solutions $2/4 = 2$ because  $2 = 2 \cdot 4 \ mod \ 6 = 8 \ mod \ 6 =2$ and  $2/4 = 5$ because  $2 = 5 \cdot 4 \ mod \ 6 = 20 \ mod \ 6 =2$. So, unless working over an integral domain, divisions are not necessarily unique but may be multivalued. This complication is absent for inverse based division, however.}
\qed
\end{example}

However, $\mathsf{Enl}_\bot(R_{/})$ does not satisfy the equations of $E_{\mathsf{ftc-cm}}$:

\begin{proposition}\label{general_divison_fails_axioms}
$\mathsf{Enl}_\bot(R_{/})\not \models E_{\mathsf{ftc-cm}}$.
\end{proposition}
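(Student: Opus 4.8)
The plan is to exhibit a specific integral domain $R$ and a single equational axiom of $E_{\mathsf{ftc-cm}}$ that fails in $\mathsf{Enl}_\bot(R_{/})$ under the directly-defined general division of Definition~\ref{division_is_well-defined}. The natural choice is $R = \Int$, since here the general division $a/b$ differs dramatically from the inverse-based division: many quotients such as $6/3 = 2$ and $6/2 = 3$ exist as genuine integers, whereas under inverse-based division these would collapse to $\bot$ because $2$ and $3$ have no multiplicative inverse in $\Int$.

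The key step is to locate which axiom breaks. The prime suspect is the multiplicativity axiom~(\ref{fracDivMult}), $\frac{x}{y}\cdot\frac{u}{v} = \frac{x\cdot u}{y\cdot v}$, or the axiom $x = \frac{x}{1}$ together with the sum axiom~(\ref{fracDivSum}). First I would test~(\ref{fracDivMult}) in $\mathsf{Enl}_\bot(\Int_{/})$ by picking concrete values where the left- and right-hand quotients behave asymmetrically with respect to definedness. For instance, take $x=u=1$, $y=v=2$: then $\frac{1}{2}$ is undefined in $\Int_{/}$ (no integer $c$ with $2c=1$), so $\frac{1}{2}=\bot$ and the left side is $\bot\cdot\bot=\bot$; meanwhile $\frac{1\cdot 1}{2\cdot 2}=\frac{1}{4}=\bot$ as well, so this instance does not separate them. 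I would instead exploit the fact that general division can make a quotient \emph{defined} after cancellation that is \emph{undefined} before it. A cleaner candidate is the axiom $0 \cdot (x\cdot x) = 0\cdot x$ combined with division, or most directly the law~(\ref{fracDivSum}): choosing values so that the individual summands $\frac{x}{y}$ and $\frac{u}{v}$ are defined integers while the combined fracterm $\frac{(x\cdot v)+(y\cdot u)}{y\cdot v}$ either fails to be an integer (hence $=\bot$ on the right) or vice versa will produce a genuine mismatch.

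The main obstacle is that the proposition asserts failure of \emph{some} axiom, so I must be careful that the witnessing computation is genuinely inconsistent rather than an artifact of a single undefined term absorbing everything to $\bot$ on both sides. Concretely, I would look for an instance where one side evaluates to a proper integer and the other to $\bot$. The most transparent such failure comes from the associativity/multiplicativity interaction: $\frac{6}{2}$ is $3\in\Int$, so $\frac{6}{2}\cdot\frac{1}{3}$ has left factor $3$ and right factor $\bot$ (as $\frac{1}{3}$ is undefined in $\Int$), giving $\bot$; but $\frac{6\cdot 1}{2\cdot 3}=\frac{6}{6}=1$, yielding $\bot \neq 1$ and refuting~(\ref{fracDivMult}). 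This is the cleanest witness and I would center the proof on it.

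Thus the proof reduces to: (1) take $R=\Int$, which is an integral domain; (2) compute in $\mathsf{Enl}_\bot(\Int_{/})$ that $\frac{6}{2}=3$ and $\frac{6}{6}=1$ are defined while $\frac{1}{3}=\bot$; (3) conclude $\frac{6}{2}\cdot\frac{1}{3}=3\cdot\bot=\bot$ whereas $\frac{6\cdot 1}{2\cdot 3}=\frac{6}{6}=1$, so axiom~(\ref{fracDivMult}) fails and hence $\mathsf{Enl}_\bot(\Int_{/})\not\models E_{\mathsf{ftc-cm}}$. The underlying conceptual point, worth stating explicitly, is that general division permits cancellation of a common factor that restores definedness, a phenomenon the multiplicativity axiom is blind to; this is precisely the contrast with inverse-based division that motivates the eager-equality treatment in the final section.
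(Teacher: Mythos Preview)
Your proposal is correct and takes essentially the same approach as the paper: both choose $R=\Int$ and refute axiom~(\ref{fracDivMult}) by exhibiting an instance where one factor on the left is $\bot$ while cancellation on the right yields a defined integer. The paper uses the slightly simpler witness $\frac{1}{3}\cdot\frac{3}{1}=\bot\cdot 3=\bot$ versus $\frac{1\cdot 3}{3\cdot 1}=\frac{3}{3}=1$, but your instance $\frac{6}{2}\cdot\frac{1}{3}=\bot$ versus $\frac{6}{6}=1$ works just as well.
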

\begin{proof}
For instance with $R=\Int$,
$\mathsf{Enl}_\bot(R_{/}) \models \frac{1}{3} \cdot \frac{3}{1} = \bot \cdot 3 = \bot$ while 
$\mathsf{Enl}_\bot(R_{/}) \models \frac{1 \cdot 3}{3 \cdot 1} = \frac{3}{3} = 1$, which refutes equation \ref{fracDivMult} of $E_{\mathsf{ftc-cm}}$.
\end{proof}
Actually, $\mathsf{Enl}_\bot(\Int_{/}) $ satisfies the axioms of 
 $E_{\mathsf{ftc-cm}}$ with the important exceptions of equations~\ref{fracDivMult}, and~\ref{fracDivSum}.
Given Proposition \ref{general_divison_fails_axioms}, a typical open problem regarding equational logic over these rings with common division is as follows:

\begin{problem}
\label{FTFZ}
 Does  $\mathsf{Enl}_\bot(\Int_{/})$ allow fracterm flattening? 
 \end{problem}

As far as we can see, there is no straightforward approach to equational reasoning in $\mathsf{Enl}_\bot(R_{/})$ with standard first order (i.e., Tarski) semantics. 
Yet, we consider $\mathsf{Enl}_\bot(R_{/})$ to be a more natural structure than $\mathsf{Enl}_\bot(R_{\div})$ because direct common division is more often defined (infinitely often in the case of $R=\Int$). Defining division with the use of the multiplicative inverse as an intermediate step is fine when working in a field, but doing so is less convincing when working in a ring. However, clearly, the equations  for $\mathsf{Enl}_\bot(R_{/})$ are  less 
attractive than the equations for $\mathsf{Enl}_\bot(R_{\div})$.

\section{Rings with division and eager equality semantics}\label{Rings_with _division _and_eager_equality}


\subsection{Eager equality}\label{eager_equality}

In \cite{BergstraT2022_ToCL,BergstraT2023_CJ} we have introduced and investigated in some detail an alternate way of identifying two expressions with partial operations. Called 
\textit{eager equality}, and denoted $\approx$,  its general form is as follows. 

Let $\llbracket t \rrbracket$ denote the value of term $t$ in a partial algebra $A$ and write  $t \downarrow$ if $\llbracket t \rrbracket$ is defined
and $t \uparrow$ if $\llbracket t \rrbracket$ is not defined. Naturally, when the expressions are defined in $A$, eager equality of expressions is the same as equality $=$ in $A$:
$$\textrm{if}  \ t  \ \downarrow \textrm{and} \  t' \downarrow \textrm{then:} \ \llbracket t \rrbracket = \llbracket t' \rrbracket \iff t \approx t'.$$ 

However, if one or other, or both, of $t$ and $t'$ are not defined then eager equality $\approx$ satisfies
$$ t \uparrow \textrm{or/and} \ t' \uparrow  \implies  t \approx t'.$$
Thus, eager equality means that if an expression is undefined then \textit{all other other expressions} -- defined or not --  will be deemed equal to it. This motivates the description `eager'.
Finally, we note that
$$t \not\approx t' \iff  t  \ \downarrow \textrm{and} \  t' \downarrow  \textrm{and} \  t \neq t'.$$

Eager equality is symmetric and reflexive, but it is \textit{not} transitive. Eager equality satisfies what we call  safe transitivity: 
$$x \approx y \wedge y \approx z \wedge y \not\approx u \to x \approx z.$$
It does satisfy the congruence property, i.e., operations of $A$ preserve $\approx$.  Eager equality contrasts with one of the oldest notions of partial equality, that of \textit{Kleene equality}, which requires both the expressions to be either defined and equal, or both undefined \cite{Kleene1952}. As emphasised in our \cite{BergstraT2022_ToCL}, eager equality is something rather different and unusual.

Let $A$ be be a total or partial algebra and let non-empty $V \subseteq A$, then with $\langle A \rangle_V$ we denote the subalgebra of $A$ with domain limited to $V$ and operations made partial on inputs in $V$ when their values lie outside $V$  or do not exist. We compare the totalisations $\mathsf{Enl}_\bot(A)$ with $\mathsf{Enl}_\bot(\langle A \rangle_V)$. The proof of the following observation is straightforward:

\begin{proposition}\label{eager_transfer}
Given a partial algebra $A$ of signature $\Sigma$ and an equation $t=r$ over $\Sigma_\bot$,  then $\mathsf{Enl}_\bot(A)\models t=r$ implies  $\mathsf{Enl}_\bot(\langle A \rangle_V)\models_{\mathit{eager}} t=r$.
\end{proposition}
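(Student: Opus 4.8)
The plan is to reduce everything to a single structural-induction lemma comparing evaluation in the two totalised algebras. Write $B = \mathsf{Enl}_\bot(\langle A \rangle_V)$ and $A' = \mathsf{Enl}_\bot(A)$. Since $V \subseteq A$, the carrier $V \cup \{\bot\}$ of $B$ is a subset of the carrier $A \cup \{\bot\}$ of $A'$, so any variable assignment $\sigma$ into $B$ is simultaneously an assignment into $A'$. I read $\mathsf{Enl}_\bot(\langle A \rangle_V) \models_{\mathit{eager}} t = r$ as the statement that for every such $\sigma$ the values $\llbracket t \rrbracket^B_\sigma$ and $\llbracket r \rrbracket^B_\sigma$ satisfy the eager equality $\approx$, where $\bot$ plays the role of ``undefined'': that is, $a \approx b$ holds unless both $a,b$ differ from $\bot$ and $a \neq b$.

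The key lemma I would isolate is: for every term $s$ over $\Sigma_\bot$ and every assignment $\sigma$ into $B$, if $\llbracket s \rrbracket^B_\sigma \neq \bot$ then $\llbracket s \rrbracket^B_\sigma = \llbracket s \rrbracket^{A'}_\sigma$. This I would establish by structural induction on $s$. For a variable the two evaluations are literally $\sigma(x)$. For a constant $c$: if $c \in V$ both evaluations give $c$, while if $c \notin V$ then $c$ is undefined in $\langle A \rangle_V$, so $\llbracket c \rrbracket^B_\sigma = \bot$ and the hypothesis is vacuous (the same remark disposes of the constant $\bot$ itself). For a compound term $f(s_1,\ldots,s_n)$ I would use absorptivity: a non-$\bot$ value of the whole forces each $\llbracket s_i \rrbracket^B_\sigma \neq \bot$, so the induction hypothesis gives $\llbracket s_i \rrbracket^B_\sigma = \llbracket s_i \rrbracket^{A'}_\sigma =: a_i \in V$; and a non-$\bot$ value of $f^B(a_1,\ldots,a_n)$ means, by the definition of $\mathsf{Enl}_\bot(\langle A \rangle_V)$, that $f^A(a_1,\ldots,a_n)$ is defined and lies in $V$, whence $f^{A'}(a_1,\ldots,a_n)$ coincides with it.

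With the lemma in hand the proposition follows by a two-case split on a fixed assignment $\sigma$. If either $\llbracket t \rrbracket^B_\sigma = \bot$ or $\llbracket r \rrbracket^B_\sigma = \bot$, then $\llbracket t \rrbracket^B_\sigma \approx \llbracket r \rrbracket^B_\sigma$ holds immediately by the leniency of eager equality. Otherwise both values are non-$\bot$, so the lemma rewrites them as $\llbracket t \rrbracket^{A'}_\sigma$ and $\llbracket r \rrbracket^{A'}_\sigma$; the hypothesis $\mathsf{Enl}_\bot(A) \models t = r$, applied to $\sigma$ as a legitimate $A'$-assignment, makes these two values equal, hence $\approx$-equal. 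As $\sigma$ was arbitrary, $\mathsf{Enl}_\bot(\langle A \rangle_V) \models_{\mathit{eager}} t = r$.

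I expect the only genuine subtlety to be pinning down the interpretation of $\models_{\mathit{eager}}$ on a totalised algebra, namely identifying ``undefined'' with the value $\bot$ and confirming that the absorption bookkeeping in the inductive step is exactly what makes the restriction to $V$ harmless. The arithmetic content is nil; everything rides on the observation that a computation in $B$ either stays inside $V$, where $B$ and $A'$ literally coincide, or escapes and collapses to $\bot$, where eager equality demands nothing. Care is needed only with constants lying outside $V$, but these collapse to $\bot$ and so never obstruct the argument.
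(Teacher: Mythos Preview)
Your proof is correct and is essentially a rigorous unpacking of the paper's own argument. The paper's proof is a two-sentence sketch: passing from $\mathsf{Enl}_\bot(A)$ to $\mathsf{Enl}_\bot(\langle A\rangle_V)$ can only replace some function values by $\bot$, and such replacements never falsify an equation under eager equality. Your structural-induction lemma is precisely the formal content of ``only introduces more occurrences of $\bot$'', and your final case split is the formal content of ``cannot invalidate an equation w.r.t.\ eager equality''; so the approaches coincide, with yours supplying the details the paper leaves implicit.
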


\begin{proof}
The modification that takes place by turning $\mathsf{Enl}_\bot(A)$ into $\mathsf{Enl}_\bot(\langle A \rangle_V)$ only introduces more occurrences of $\bot$ as a function value. Such modifications cannot invalidate an equation w.r.t. to eager equality.
\end{proof}

\begin{example}
{\em  We are interested in instances of this construction when $A = F$ is a field and $V = R \subseteq F$ is a ring. Then, with $F_\div$ denoting $F$ enriched with a partial division operator, we find that $\mathsf{Enl}_\bot(\langle F_\div \rangle_{R})$ will satisfy in eager equality semantics all equations -- including and especially  $E_{\mathsf{ftc-cm}}$ -- which  $F_\div$ satisfies in standard first order semantics. A typical example of this state of affairs is found with $F = \rat$ and $R = \Int \subseteq \rat$. }
\qed
\end{example}

\begin{example}
{\em Another instance of the construction is where \textit{bounded} common meadows are obtained from a common meadow, as discussed in~\cite{BergstraT2022c}.  This works as follows:  We restrict the domain of $\mathsf{Enl}_\bot(\rat_\div)$  to an interval $V = (-b,b)$ for some rational bound $b>1$ to make
$\mathsf{Enl}_\bot(\langle \rat_\div \rangle_V)$. We find that bounded meadows thus defined fail to satisfy the equations of $E_{\mathsf{ftc-cm}}$ in ordinary first order semantics. However, the equations of  $E_{\mathsf{ftc-cm}}$
are satisfied in $\mathsf{Enl}_\bot(\langle \rat_\div \rangle_V)$ in eager equality semantics.}
\qed
\end{example}


\subsection{Eager equality and the axioms for common for
division}\label{eager_equality_common division}

Now we notice that for an integral domain $R$, $\mathsf{Enl}_\bot(R_{/}) $ eager equality   validates all of $E_{\mathsf{ftc-cm}}$ and allows fracterm flattening for that reason. 

\begin{proposition} 
Let $R$ be an integral domain. Then 
$\mathsf{Enl}_\bot(R_{/}) \models_{eager} E_{\mathsf{ftc-cm}}$.
\end{proposition}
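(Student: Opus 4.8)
The plan is to realise $\mathsf{Enl}_\bot(R_{/})$ as an instance of the construction in Proposition~\ref{eager_transfer}, so that validity of $E_{\mathsf{ftc-cm}}$ under eager equality becomes a transfer of its ordinary validity in a common meadow. Since $R$ is an integral domain, it embeds as a subring into its field of fractions $F = \mathrm{Frac}(R)$. First I would form $F_\div$, the field $F$ equipped with the partial inverse based division of Definition~\ref{division_by_inverse}; then $\mathsf{Enl}_\bot(F_\div)$ is a common meadow, and so satisfies every equation of $E_{\mathsf{ftc-cm}}$ under ordinary first order semantics (Proposition~\ref{RingSound}).

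The crux of the argument is the identification
$$\langle F_\div \rangle_R = R_{/}$$
as partial $\Sigma_{r,\div}$-algebras. The ring operations $+,-,\cdot$ agree on both sides because $R$ is a subring of $F$, so their values on arguments from $R$ already lie in $R$ and are never made partial. For division I would argue as follows: for $a,b \in R$ the value $a \div b$ computed in $F_\div$ is defined exactly when $b \neq 0$, with value $a \cdot b^{-1} \in F$; restricting to $R$ via $\langle - \rangle_R$ keeps this value only when $a \cdot b^{-1} \in R$, i.e.\ when there is $c \in R$ with $b \cdot c = a$, in which case the value is that unique $c$. This is precisely the clause defining $a/b$ in $R_{/}$ (Definition~\ref{division_is_well-defined}), and the value coincides. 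Hence the two partial algebras are literally equal, and therefore so are their $\bot$-enlargements: $\mathsf{Enl}_\bot(\langle F_\div \rangle_R) = \mathsf{Enl}_\bot(R_{/})$.

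Applying Proposition~\ref{eager_transfer} with $A = F_\div$ and $V = R$ to each equation $t = r$ of $E_{\mathsf{ftc-cm}}$, the ordinary validity $\mathsf{Enl}_\bot(F_\div) \models t = r$ yields $\mathsf{Enl}_\bot(\langle F_\div \rangle_R) \models_{\mathit{eager}} t = r$, and by the identification above this is $\mathsf{Enl}_\bot(R_{/}) \models_{\mathit{eager}} t = r$. Ranging over all equations of $E_{\mathsf{ftc-cm}}$ gives the claim.

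I expect the only real obstacle to be the bookkeeping in the identification $\langle F_\div \rangle_R = R_{/}$: one must check that restricting inverse based division on the field of fractions to the subring $R$ reproduces exactly the divisibility condition and unique quotient of general division. This is where both hypotheses on $R$ are genuinely used, since integrality guarantees both the existence of $F = \mathrm{Frac}(R)$ and the uniqueness of the quotient $c$, so that general division is single valued and the restriction is well defined. Everything else is a direct appeal to Proposition~\ref{eager_transfer}, whose force is precisely that introducing further $\bot$ values as function outputs cannot break an equation under eager equality.
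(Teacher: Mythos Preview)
Your proposal is correct and follows exactly the route the paper sets up: the paper states this proposition without an explicit proof, but the Example immediately preceding Proposition~\ref{eager_transfer} (taking $A = F$ a field and $V = R$ a subring, with $F = \rat$, $R = \Int$ as the prototype) already outlines precisely your argument via the field of fractions and the transfer principle. Your only added value is the careful verification that $\langle F_\div \rangle_R = R_{/}$, which the paper leaves implicit; that identification is indeed the one point where the integral domain hypothesis is used, and your account of it is accurate.
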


 We conjecture that Problem~\ref{FTFZ} has a negative answer. Thus, in order to justify fracterm flattening in ring-like structures of the form 
 $\mathsf{Enl}_\bot(R_{/})$ we find an incentive to contemplate eager equality. Although eager equality has a problematic logic -- because it is not transitive --  nevertheless we consider eager equality to provide a meaningful semantics for fracterm calculus for this general division over an integral domain. In other words: for plausible structures with division 
 -- in particular, for $\mathsf{Enl}_\bot(R_{/})$ with $R$ an integral domain -- the (very plausible!) equational logic of $E_{\mathsf{ftc-cm}}$ is not sound w.r.t. to Tarski semantics, while it is sound w.r.t. the (less plausible!) eager equality semantics. 
 
Actually, these observations create a good conceptual case for studying and deploying eager equality. They match with the observation in~\cite{BergstraT2023_CJ} that working with rewrite rules which are sound for eager equality provides a plausible approach to term rewriting for common meadows.

\subsection{Eager equality and conditional equations}

Notice that $\mathsf{AVL}$ fails under eager equality in a common meadow of rational numbers, say on $x=2$. 
More technically, consider the conditional equation $\phi \equiv 0 = \bot \to 0 = 1$. We notice that $\phi$ is valid in all common meadows because the premise never holds. At the same time, however, using eager equality instead of ordinary equality, $\phi$ is invalid in all common meadows because the premise always holds while the conclusion never holds. We further notice that the conditional equation  $\phi_{2,3,5}$ as introduced in section \ref{completeness_conjecture} is valid in all common meadows under eager equality.



\section{Concluding remarks}\label{concluding_remarks}

We will reflect on the diversity of algebraic models and logical theories suitable for computer arithmetics.


\subsection{Various forms of total divisions}

Given the original motivation to study arithmetical structures as abstract data types suitable for computer specification and computation, the classical algebra of fields is an obvious starting place: field theory provides a comprehensive axiomatic theory of the rational, real and complex numbers, and calculating with polynomials and solving polynomial equations. This theory is widely taken to be the foundation for what many creators and users of mathematics rely on in their work.\footnote{Among the commonly taught specialsations of field theory we note: Galois theory, algebraic number theory, algebraic geometry, model theory of field axioms.} However, in computing, as we noted in the Introduction, it is in need of modification to be `fit for purpose' as a tool for the specification of data types.

The modification needed and studied here is to add division and make it total.  Elsewhere we have studied several ways of accomplishing this for fields by making different semantic choices for $\frac{x}{0}$. Using various semantical values to be found in practical computations  -- such as $\mathsf{error}$, $\infty$, $\mathit{NaN}$, the last standing for `not a number' --  we have constructed equational specifications (under initial algebra semantics) for the following data types of meadows of rational numbers:

\textit{Involutive meadows}, where an element of the meadow's domain is used for totalisation, in particular $1/0 = 0$, \cite{BergstraT2007}.

\textit{Common meadows}, the subject of this paper, where a new external element $\bot$ that is `absorptive' is used for totalisation $1/0 = \bot$, \cite{BergstraP2015}. 

\textit{Wheels}, where one external $\infty$ is used for totalisation $1/0 = \infty = -1/0$, together with an additional external error element  $\bot$ to help control the side effects of infinity, \cite{Setzer1997,Carlstroem2004,BergstraT2021a}. 

\textit{Transrationals}, where besides the error element $\bot$, two external signed infinities are added, one positive and one negative, so that
division is totalised by setting $1/0 = \infty$ and $-1/0 = -\infty$, \cite{AndersonVA2007,dosReisGA2016,BergstraT2020}. 

In practice, these models are based on data type conventions to be found in theorem  provers, common calculators, exact numerical computation and, of course, floating point computation, respectively. For some historical remarks on division by zero, we mention~\cite{AndersonB2021}, and for a survey we mention~\cite{Bergstra2019b}.

%
%

Our results here and elsewhere point to the fact that arithmetical abstract data types with error values are theoretically superior among the many practical conventions we have studied. The other meadows run into fundamental algebraic difficulties that the common meadows do not. For example, a problem for the involutive case is the failure of flattening composed fractional expressions -- compare section \ref{fracterms}. In~\cite{BergstraBP2013} it is shown that, with the axioms for involutive meadows, terms are provably equal to  \textit{only finite sums} of flat fracterms; and  in~\cite{BergstraM2016a}, it is shown that \textit{arbitrarily large numbers of summands of flat fracterms} are needed for that purpose.

We thank Jo\~{a}o Dias, Bruno Dinis and Alban Ponse for several useful comments made by them on two preparatory drafts of the paper.

\addcontentsline{toc}{section}{References}

\end{document}